\theoremstyle{plain}
\newtheorem{Theorem}{Theorem}[section]
\newtheorem{Proposition}[Theorem]{Proposition}
\newtheorem{Corollary}[Theorem]{Corollary}
\newtheorem{Lemma}[Theorem]{Lemma}
\theoremstyle{definition}
\newtheorem{Definition}[Theorem]{Definition}
\newtheorem{Example}[Theorem]{Example}
\newtheorem{assumption}[Theorem]{Assumption}
\theoremstyle{remark}
\newtheorem{Remark}[Theorem]{Remark}
\renewcommand{\epsilon}{\varepsilon}
\renewcommand{\phi}{\varphi}
\DeclareMathOperator{\Tr}{Tr}
\DeclareMathOperator{\diam}{diam}
\DeclareMathOperator{\supp}{supp}
\DeclareMathOperator{\rank}{rank}
\newcommand{\1}[0]{\textbf 1}
\newcommand{\EE}{\mathbb{E}}
\newcommand{\RR}{\mathbb{R}}
\newcommand{\CC}{\mathbb{C}}
\newcommand{\NN}{\mathbb{N}}
\newcommand{\ZZ}{\mathbb{Z}}
\newcommand{\hm}[1]{\textbf{*}\leavevmode{\marginpar{\tiny%
$\hbox to 0mm{\hspace*{-0.5mm}$\leftarrow$\hss}%
\vcenter{\vrule depth 0.1mm height 0.1mm width \the\marginparwidth}%
\hbox to 0mm{\hss$\rightarrow$\hspace*{-0.5mm}}$\\\relax\raggedright #1}}}
\title[A Banach space-valued ergodic theorem and the uniform approximation of the IDS]{A Banach space-valued ergodic theorem and the uniform approximation of the integrated density of states}
\author[D.~Lenz]{Daniel Lenz}
\address[D.L.]{Mathematisches Institut, Friedrich-Schiller-Universit\"at
Jena , 07743 Jena, Germany}
\urladdr{http://www.analysis-lenz.uni-jena.de/}
\author[F.~Schwarzenberger]{Fabian Schwarzenberger}
\address[F.S.]{Emmy-Noether-Projekt,
Fakult\"at f\"ur Mathematik, TU Chemnitz, 09107 Chemnitz, Germany}
\urladdr{http://www-user.tu-chemnitz.de/$\sim$fabis}
\author[I.~Veseli\'c]{Ivan Veseli\'c}
\address[I.V.]{Emmy-Noether-Projekt,
Fakult\"at f\"ur Mathematik, TU Chemnitz, 09107 Chemnitz, Germany}
\urladdr{http://www.tu-chemnitz.de/mathematik/enp}
\begin{document}
\begin{abstract}
In this paper we consider bounded operators on infinite graphs, in particular Cayley graphs of amenable groups. The operators satisfy an equivariance condition which is formulated in terms of a colouring of the vertex set of the underlying graph. In this setting it is natural to expect that the integrated density of states (IDS), or spectral distribution function, exists. We show that it can be defined as the uniform limit of approximants associated to finite matrices. The proof is based on a Banach space valued ergodic theorem which even allows  explicit convergence estimates. Our result applies to a variety of group structures and colouring types, in particular to periodic operators and percolation-type Hamiltonians.
\end{abstract}
\maketitle

\section{Introduction}

The topic of this paper is located at the interface between geometric group theory  and functional analysis.
The considered geometric setting is provided by a Cayley graph of an amenable group.
We study almost-additive functions mapping finite subsets of the graph into a fixed Banach space.
These functions are given in terms of a colouring of vertices of the graph.
In the applications we have in mind, the colouring is oftentimes a  realisation  of a stochastic field on the graph.
Under certain conditions on the tiling properties of the group,  and on frequencies of patterns occurring in the colouring,
we derive a Banach-space valued ergodic theorem.

This result is then applied to the study of the \emph{integrated density of states}
of equivariant Hamilton operators on graphs.
The integrated density of states  is also called \emph{spectral distribution function}. We abbreviate it in the sequel by IDS.
We show that the IDS can be approximated by normalised \emph{eigenvalue counting functions}
(associated to finite volume restrictions of the operator)
uniformly in the energy variable.

\medskip

Let us describe the contents of the paper more precisely.
At the end of this section we discuss earlier results which are related to ours.
In the next section we give  precise definitions of the graphs we are working on
and the conditions they should satisfy. In particular,
we require the existence of a \emph{F\o lner sequence} with the additional property that translates of each set
occurring in the sequence form a decomposition of the whole graph.
Moreover, we define the notions of a \emph{colouring} of a graph, of a \emph{pattern}
and a \emph{frequency} of a pattern along a F\o lner sequence.
Last but not least, in Section 3 it is explained
when a function taking finite subsets of the graph into a Banach space
is called \emph{almost additive} and \emph{invariant} with respect to a colouring.

For such functions we formulate and prove
a \emph{Banach-space valued ergodic theorem} in Section 3.
The subsequent section is devoted to \emph{Hamilton operators
on graphs} and their IDS. The link to the previous sections is given by the
requirement that the Hamilton operator is in an appropriate sense equivariant with
respect to the colouring of a graph. For each finite subset of
the graph there is a natural restriction of the operator to this subset
which is given  by a finite dimensional matrix.
The eigenvalue counting function of such a matrix
is an element of the Banach space of bounded, right-continuous functions. The map
taking finite subsets of the graph into the mentioned Banach space
can be shown  to be almost additive and invariant. Thus the
ergodic theorem of Section 3 can be applied and yields the
uniform approximability of the IDS by its finite volume analogues.

In the penultimate section we discuss specific examples where the
assumptions of our theorems are satisfied.
On the one hand we give for the euclidean lattice $\ZZ^d$
and the Heisenberg group $H_3$ explicit F\o lner sequences which satisfy the geometric requirements
needed to deal with almost additive functions. On the other
hand we show that Laplace operators on periodic graphs and percolation subgraphs thereof
define a Hamiltonian which is invariant with respect to a colouring.
More precisely, up to a set of measure zero, each
realisation of the percolation process gives a colouring with
well-defined frequencies of patterns. The Laplacian on the percolation
subgraph is invariant with respect to this colouring. The periodic model corresponds to
the trivial case where the percolation cluster is the whole graph.

In the last section we discuss some open questions.
\medskip

Let us compare the results of this paper to previous ones.
The approximability or convergence of the IDS is a
topic well studied in the mathematical physics literature. For ergodic
random and almost-periodic operators in Euclidean space the existence of the
IDS was first rigorously proven in the seminal papers \cite{Pastur-71, Shubin-79}.
See \cite{KirschM-07,Veselic-07b} for surveys on this topic and
a discussion of the extensive literature up to '07.
In the mathematical physics literature one typically considers operators on euclidean spaces
$\RR^d$ or $\ZZ^d$ which are invariant and ergodic with respect
to a (commutative)  group of translations acting on the underlying space.
In such a setting, it is well known that the IDS converges
almost surely and at all energies at which the IDS is
continuous. Let us note that this type of convergence is weaker than
pointwise convergence at all energies, let alone uniform convergence.

In the geometry literature the IDS is called spectral
distribution function. It has been studied for Lapacians and periodic
Schr\"odinger operators on covering manifolds and periodic graphs.
In various instances the approximability of the IDS by its finite
volume analogues has been established in the topology of pointwise convergence.
For periodic and random Schr\"odinger operators on manifolds this has been done in
\cite{Sznitman-89,Sznitman-90,AdachiS-93,PeyerimhoffV-02,LenzPV-04} and for finite difference operators
on periodic graphs in \cite{MathaiY-02,MathaiSY-03,DodziukLMSY-03,Veselic-05b}.
Previously, the approximability of $\ell^2$-Betti numbers has been established
in \cite{DodziukM-97,DodziukM-98,Eckmann-99,LueckS-99}. Since the zeroth $\ell^2$-Betti number is the value of
the IDS at energy zero, these results are intimately connected. They can be understood as dealing with existence of the IDS in a single point.  The monograph
\cite{Lueck-02} and the collection of talks \cite{DodziukLSV-06} give a survey
of results in this direction and an overview of the literature up to '06.

Having discussed pointwise convergence of the normalised eigenvalue counting functions
to the IDS, we now turn to  previous results which concern the convergence  \emph{with respect to the supremum norm}. Here, we discuss earlier papers in some detail, since they are  closely related to our own result. There are essentially two approaches to uniform convergence results:

One approach is based on Banach-space valued ergodic theorems.
The uniform convergence of the IDS is then nothing but a special instance of a geometry based averaging result valid for a rather general class of functions.
This approach has been first developed for   quasi-periodic tilings  in Euclidean space  in \cite{LenzS-06}
(see \cite{GeerseH-91} and \cite{Besbes-08} for  related ergodic theorems as well).
A similar ergodic type result was then established for colourings of $\ZZ^d$ (see \cite{Lenz-02} for the one-dimensional case as well)  and  used to prove uniform convergence of the
IDS for corresponding models  in  \cite{LenzMV-08}.
These papers are strongly rooted in the rather simple geometry of Euclidean spaces and  their sublattices.
On the conceptual level, our main goal here is to achieve  an extension of the approach in \cite{LenzMV-08} to the non abelian situation.

As far as uniform convergence of the IDS is concerned there exists also  a different approach. This approach  does not use (or give)  a Banach space valued averaging procedure for general functions. It is  rather tailored to deal with the IDS  as it relies on special properties of converging sequences of measures on the real line.  This is first discussed in the
preprint \cite{Elek-b}
for a class of graphs called \emph{abstract quasi-crystal graphs}, see also \cite{Elek-a}.
In \cite{LenzV-09} uniform convergence of the IDS was  then established for a
large class of equivariant ergodic Hamiltonians on graphs.
While the geometric setting there includes our main examples discussed in the penultimate section it falls short of our  present results in at least two ways:
It does not give the explicit error bounds we achieve here and it can not be used to treat single operators but can only deal with ergodic families.

To illustrate the relevance of the two last mentioned features let us discuss in some more
detail the implications of the approximability of the IDS.
The main theorem in Section 4 establishes that the IDS is well
defined. More precisely, it is shown that the sequence of
distribution functions (the normalised eigenvalue counting
functions) converges to some element of a Banach space.
In certain situations there is an alternative way to define the
IDS in operator-algebraic terms. The corresponding formula is
sometimes called Pastur-Shubin trace formula, cf.~Remark
\ref{remark_pastur}.
Thus the question arises whether the two definitions of the
IDS coincide. Indeed, the answer to this question depends on the amenability properties of the underlying space. There are examples where the two definitions lead to two different distribution functions, see e.\,g.\, Section 4 of \cite{AdachiS-93}.

In the context of mathematical physics the definition using the approximation by eigenvalue counting functions is more relevant. It corresponds more closely to the intuition of physicists. In fact, properties of the infinite volume system are oftentimes studied via its finite volume approximants, and vice versa. Let us point out several examples for this approach.

 An key feature of the IDS is its asymptotics near the bottom of the spectrum of the underlying operator.
If the latter is periodic one expects a van Hove behaviour, i.e. polynomial vanishing of the IDS.
If the operator is `sufficiently random', one expects a Lifshitz behaviour, namely that the IDS goes to zero at an exponential rate.
To establish this result one estimates the lowest eigenvalue of a properly scaled finite volume system.
However, this bound is only useful if one has control of the difference between the IDS and its finite volume approximant.
For this purpose various approaches have been implemented in the euclidan setting.
 For instance one can sandwich the IDS between two different finite volume approximants,
one giving a lower and the other an upper bound, and control their distance. The papers
\cite {KirschM-83a} and \cite{Simon-85b} are instances where this method has been implemented.
A different approach is to use the specific structure of the operator under consideration
to obtain very fast convergence of the \emph{averaged} finite volume IDS to its infinite volume analogue, cf.~e.g.{} \cite{Klopp-99}.
More information on this topic can be found in the overview article \cite{KirschM-07} and the references therein.

From the physics point of view it is also important to understand the quantitative continuity and differentiability properties of the IDS.
This is often achieved by first proving a related statement (called Wegner estimate) for finite volume systems and then carrying it over to the infinite volume. In special situations it is expedient to go in the reverse direction, see for instance in \cite{CarmonaKM-87}. For detailed information see e.\,g.\, the surveys \cite{KirschM-07,Veselic-07b}.

We finish this section by shortly discussing the relevance of our single colouring  based approach compared to an approach dealing with ergodic families of colourings. Basically, there are two advantages. One advantage is the  explicit description of the set of colourings where convergence holds. More precisely, convergence holds whenever frequencies exist. The other advantage is that we can deal with situations in which there is no natural ergodic system at ones disposal. A simple example of such a system is given by the set of  visible points in $\ZZ^d$.  This set consists of those  points in $\ZZ^d$ which can be seen from the origin i.e. whose coordinates have greatest common divisor equal to one. By colouring the visible points with one colour and the other points of $\ZZ^d$ with another colour one obtains a  colouring.
In this example frequencies exist along cubes centered at the origin. However, there does not seem to be a natural ergodic system. The corresponding ergodic theorem in this  (abelian) situation was already discussed in \cite{LenzMV-08} to which we also refer for further references on the set of visible points.


\section{Basics and notation}
 This paper concerns operators on Cayley graphs of finitely generated, amenable groups. The group will usually be denoted by $G$, a finite generating system by $S$ and the unit element by $e$. We assume $S$ to be symmetric, i.e. $s\in S \Rightarrow s^{-1}\in S$. A word with letters in $S$ is a finite sequence $w=(s_1,\dots,s_L)$ of elements from $S$. The length of such a word $w$ is $L$, the number of letters, and the value $\overline w$ of $w$ is the element of $G$ one gets by evaluating the product $\overline w=s_1\cdots s_L$. We define the distance between two given elements $g,h\in G$ as the minimal length of a word with value $g^{-1}h$, which gives us the so called \textit{word metric} $d_S: G\times G\rightarrow \mathbb N_0$
\[
d_S(g,h):=\min\{L\in \NN_0\ \vert \ s_0=e,\ \exists s_1,\dots,s_L\in S \mbox{ such that } s_0 s_1\cdots s_L = g^{-1}h\}.
\] We denote a ball of radius $R$ around an element $x\in G$ by $B_R(x):=\{g\in G\vert d_S(g,x)\leq R\}$ and the ball around the unit element by $B_R:=B_R(e)$. Given a finite subset $Q\subseteq G$ we define the \textit{diameter} by $\diam (Q):=\max \{d_S(g,h)\vert g,h\in Q\}$ and use $\vert Q\vert$ to denote the \textit{cardinality} of $Q$. Furthermore we introduce the following notation related to the boundary of a finite subset $Q\subseteq G$:
\begin{equation}
\begin{split}
 &\partial_{int}^R(Q):= \{x\in Q\,\vert\, d_S(x,G\setminus Q)\leq R\}\hspace{1cm} \partial_{ext}^R(Q) := \{x\in G\setminus Q\,\vert\, d_S(x,Q)\leq R\} \\[1ex]
&\partial^R(Q)      := \partial_{int}^R(Q)\cup\partial_{ext}^R(Q) \hspace{1cm} Q_R:=  Q\setminus \partial^R (Q)\hspace{1cm} Q^R:=Q\cup \partial^R(Q)
\end{split}
\end{equation}
We use the notations $(Q_n)$ and $(Q_n)_{n\in \NN}$ for a sequence of finite subsets of $G$, where the index $n$ takes values in $\NN$ and we write  $Q_{n,R}$ instead of $(Q_n)_R$. It is well known \cite{Adachi-93}, that \textit{amenability} of $G$ is equivalent to the existence of a sequence $(Q_n)_{n\in \mathbb N}$ of finite subsets of $G$ such that
\begin{equation*}
\lim\limits_{n\rightarrow \infty} \frac{\vert Q_n S\setminus Q_n\vert}{\vert Q_n\vert}=0
\end{equation*}
holds. Such a sequence $(Q_n)_{n\in \mathbb N}$ is called \textit{F\o{}lner sequence}. Given a set $Q\subseteq G$ a \textit{partition} of $Q$ is a set of pairwise disjoint subsets $Q_i$, $i\in I$ of $Q$ such that $\bigcup_{i\in I} Q_i=Q$, where $I$ is some index set. We say that $Q\subseteq G$ \textit{tiles} the group $G$ or $Q$ is a \textit{monotile} of $G$ if there exists a set $K\subseteq G$ such that $\{Qg\ \vert\ g\in K\}$ is a partition of $G$. In this case $\{Qg\ \vert\ g\in K\}$ is called a \textit{tiling} of the group along the \emph{grid} $K$.
If additionally $K=K^{-1}$ holds, we say that $Q$ \emph{symmetrically tiles} $G$ or $\{Qg\ \vert\ g\in K\}$ is a \emph{symmetric tiling} of $G$.
An assumption on the group $G$ will be the following:
there exists a F\o lner sequence $(Q_n)$ such that for each $n\in\NN$ the set $Q_n$ symmetrically tiles $G$.

\medskip
In order to define colourings and patterns we denote the set of all finite subsets of $G$ by $\mathcal{F}(G)$ and introduce an arbitrary finite set ${\mathcal A}$, which could be seen as the set of possible colours. A \textit{colouring} is a map ${\mathcal C}: G\rightarrow \mathcal A$ and a \textit{pattern} is a map $P: D(P) \rightarrow \mathcal A$, where $D(P)\in {\mathcal F}(G)$ is called the domain of $P$. The \textit{set of all patterns} is denoted by ${\mathcal P}$ and for a fixed $Q\in {\mathcal F}(G)$ the subset of ${\mathcal P}$ which contains only the patterns with domain $Q$ is denoted by ${\mathcal P}(Q)$. Given a set $Q\subseteq D(P)$ and an element $x\in G$ we furthermore define a \textit{restriction of a pattern} by $P\vert_Q : Q\rightarrow {\mathcal A}$, $g\mapsto P\vert_Q(g)=P(g)$ and a \textit{translation of a pattern} $Px: D(P)x\rightarrow {\mathcal A}, yx\mapsto P(y)$. Two patterns are called \textit{equivalent} if one is a translation of the other. The equivalence class of a pattern $P$ is denoted by $\tilde P$. We write $\tilde {\mathcal P} $ for the induced set of equivalence classes in $\mathcal P$.
For two patterns $P$ and $P'$ the number of occurrences of the pattern $P$ in $P'$ is denoted by
\[\sharp_P(P'):=\left|\{x\in G\,\vert\, D(P)x\subseteq D(P'), P'\vert_{D(P)x}= Px\}\right|.\]
Counting occurrences of patterns along a F\o{}lner sequence $(U_j)_{j\in \mathbb N}$ leads to the definition of frequencies. If for a pattern $P$ and a F\o{}lner sequence $(U_j)_{j\in \mathbb N}$ the limit
\[\nu_P:=\lim\limits_{j\rightarrow \infty}\frac{\sharp_P({\mathcal C}\vert_{ U_j})}{\vert U_j \vert}.\]
exists, we call $\nu_P$ the \textit{frequency of $P$ in the colouring ${\mathcal C}$ along $(U_j)_{j\in \mathbb N}$}.

\begin{Lemma}
\label{lemma_abc}
 \begin{itemize}
  \item [(a)] For an arbitrary $Q\in {\mathcal F}(G)$ one has the equalities
   \[\partial_{int}^R(Q)=\bigcup\limits_{s\in B_R}(Q\setminus  Qs)\hspace{1cm}\mbox{ and }\hspace{1cm}\partial_{ext}^R(Q)=\bigcup\limits_{s\in B_R}(Qs\setminus  Q).\]
  \item [(b)] If $(U_j)_{j\in \NN}$ is a F\o{}lner sequence we have for any $R\in \NN$
  \[\lim\limits_{j\rightarrow \infty} \frac{\vert \partial_{int}^R U_j \vert}{\vert U_j\vert}=0,\hspace{0.7cm}
  \lim\limits_{j\rightarrow \infty} \frac{\vert \partial_{ext}^R U_j \vert}{\vert U_j\vert}=0  \hspace{0.7cm}\mbox{ and }\hspace{0.7cm}
  \lim\limits_{j\rightarrow \infty} \frac{\vert \partial^R U_j \vert}{\vert U_j\vert}=0.\]
  \item [(c)] The sequence $(U_{j,R})_{j\in \NN}$ is a F\o{}lner sequence if $(U_j)_{j\in \NN}$ is one.
  \item [(d)] If $\nu_P$ is the frequency of a pattern $P$ along the F\o{}lner sequence $(U_j)_{j\in \mathbb N}$ then $\nu_P$ is the frequency of $P$ along $(U_{j,R})_{j\in \NN}$ as well.
 \end{itemize}
\end{Lemma}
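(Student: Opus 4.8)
The plan is to treat the four statements in order, in each case unwinding the relevant definition and then feeding in the F\o lner property; the proofs of (b)--(d) will build on (a). For (a) I would argue directly from the definitions: by left-invariance and symmetry of the word metric, $d_S(x,y)\le R$ is equivalent to $x^{-1}y\in B_R$, i.e.\ to $y\in xB_R$. Hence $x\in\partial_{int}^R(Q)$ iff $x\in Q$ and $xB_R\not\subseteq Q$, i.e.\ iff there is $s\in B_R$ with $xs\notin Q$, equivalently $x\in Q\setminus Qs^{-1}$; since $S$, and therefore $B_R$, is symmetric, replacing $s$ by $s^{-1}$ gives $\partial_{int}^R(Q)=\bigcup_{s\in B_R}(Q\setminus Qs)$. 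The formula for $\partial_{ext}^R(Q)$ follows in the same way: $x\in G\setminus Q$ satisfies $d_S(x,Q)\le R$ iff $xB_R\cap Q\neq\emptyset$, i.e.\ iff $x\in Qs^{-1}$ for some $s\in B_R$, and the same reindexing turns this into $x\in\bigcup_{s\in B_R}(Qs\setminus Q)$.

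For (b), part (a) gives $\partial_{ext}^R(Q)=QB_R\setminus Q$ and $|\partial_{int}^R(Q)|\le\sum_{s\in B_R}|Q\setminus Qs|=\sum_{s\in B_R}|Qs\setminus Q|\le|B_R|\,|QB_R\setminus Q|$, while $\partial^R(Q)=\partial_{int}^R(Q)\cup\partial_{ext}^R(Q)$; so it is enough to show $|U_jB_R\setminus U_j|/|U_j|\to 0$. For $s\in S$ one has $U_js\setminus U_j\subseteq U_jS\setminus U_j$, so $|U_js\,\triangle\,U_j|=2|U_js\setminus U_j|\le 2|U_jS\setminus U_j|$. For a general $s=s_1\cdots s_L\in B_R$ with $s_i\in S$ and $L\le R$ I would use the telescoping estimate $|Fgh\,\triangle\,F|\le|Fgh\,\triangle\,Fh|+|Fh\,\triangle\,F|=|Fg\,\triangle\,F|+|Fh\,\triangle\,F|$, the middle equality holding because right translation by $h^{-1}$ is a bijection of $G$; iterating yields $|U_js\,\triangle\,U_j|\le\sum_{i=1}^{L}|U_js_i\,\triangle\,U_j|\le 2R\,|U_jS\setminus U_j|$. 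Summing over $s\in B_R$ gives $|U_jB_R\setminus U_j|\le 2R\,|B_R|\,|U_jS\setminus U_j|$, and dividing by $|U_j|$ proves (b).

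For (c), recall $U_{j,R}=U_j\setminus\partial^R(U_j)$, so (b) gives $|U_{j,R}|/|U_j|\to 1$; in particular $U_{j,R}\neq\emptyset$ for large $j$ and $|U_j|/|U_{j,R}|\to 1$. Since $U_{j,R}\subseteq U_j$, every element of $U_{j,R}S\setminus U_{j,R}$ lies in $U_jS\setminus U_j$ or in $U_j\setminus U_{j,R}=\partial_{int}^R(U_j)$, hence $|U_{j,R}S\setminus U_{j,R}|\le|U_jS\setminus U_j|+|\partial^R(U_j)|$, and dividing by $|U_{j,R}|$ and using (b) gives the F\o lner property for $(U_{j,R})$. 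For (d), $U_{j,R}\subseteq U_j$ immediately gives $\sharp_P(\mathcal C|_{U_{j,R}})\le\sharp_P(\mathcal C|_{U_j})$; conversely, if $D(P)x\subseteq U_j$ but $D(P)x\not\subseteq U_{j,R}$, then $px\in U_j\setminus U_{j,R}=\partial_{int}^R(U_j)$ for some $p\in D(P)$, so $x\in\bigcup_{p\in D(P)}p^{-1}\partial_{int}^R(U_j)$, whence $0\le\sharp_P(\mathcal C|_{U_j})-\sharp_P(\mathcal C|_{U_{j,R}})\le|D(P)|\,|\partial_{int}^R(U_j)|$. Dividing by $|U_j|$ and applying (b) shows the difference of the normalised counting quantities tends to $0$; together with $|U_j|/|U_{j,R}|\to 1$ from (c) this yields $\sharp_P(\mathcal C|_{U_{j,R}})/|U_{j,R}|\to\nu_P$, as required.

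The one step that needs care is the telescoping estimate in (b): in a non-abelian group one cannot ``cancel'' $h$ naively, and the equality $|Fgh\,\triangle\,Fh|=|Fg\,\triangle\,F|$ must be obtained by applying the bijection $z\mapsto zh^{-1}$ of $G$, which preserves cardinalities and commutes with $\triangle$. Everything else is elementary set-theoretic counting together with the reductions supplied by (a).
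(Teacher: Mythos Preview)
Your proof is correct. The argument for (a) is essentially identical to the paper's; the remaining parts are handled somewhat differently, and in fact more completely.

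For (b), the paper simply writes $\sum_{s\in B_R}\frac{|U_j s\setminus U_j|}{|U_j|}\to 0$ ``as $(U_j)$ is a F\o lner sequence'', tacitly using that the F\o lner condition with respect to $S$ implies the same condition with respect to any finite set. You make this step explicit via the telescoping estimate $|Fgh\,\triangle\,F|\le |Fg\,\triangle\,F|+|Fh\,\triangle\,F|$, which is the correct way to pass from generators to words and fills a small gap in the paper's presentation. For (c), the paper proves the two auxiliary inclusions $US\subseteq U^1$ and $\partial_{ext}^1(U_R)\subseteq\partial_{int}^R(U)$ and chains them together; your one-line decomposition $U_{j,R}S\setminus U_{j,R}\subseteq (U_jS\setminus U_j)\cup(U_j\setminus U_{j,R})$ is more direct and equally rigorous. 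For (d), the paper only says ``the proofs of (c) and (d) are similar'' and omits the argument; your counting bound $0\le \sharp_P(\mathcal C|_{U_j})-\sharp_P(\mathcal C|_{U_{j,R}})\le |D(P)|\,|\partial_{int}^R(U_j)|$ supplies exactly what is needed.
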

The proof of this Lemma is stated in the appendix.

\begin{Remark}
 Certain considerations outlined in Lemma \ref{lemma_abc} can be simplified if the identity element $e$ is contained in the generating set $S$. In this situation we have that the sequence $(U_j)_{j\in\NN}$ is a F\o{}lner sequence if and only if
  \begin{equation}\label{lemma_abc1}
    \lim_{j\rightarrow\infty }\frac{\vert U_j S\vert}{\vert U_j\vert}=1.
  \end{equation}
 For the proof of this note that $U_j\subseteq U_jS$ since $S$ contains the unit element $e$. Therefore we have
\[
 \frac{\vert U_{j}S\setminus U_{j}\vert}{\vert U_{j} \vert}
= \frac{\vert U_{j}S\vert - \vert U_{j}\vert}{\vert U_{j} \vert}
= \frac{\vert U_{j}S\vert}{\vert U_{j} \vert} -1 .
\]
which directly implies the claimed equivalence.
\end{Remark}

In order to state the announced ergodic theorem for Banach valued functions, we need the following definitions:

\begin{Definition}\label{def_boundaryterm}
A function $b :{\mathcal F}(G)\rightarrow [0,\infty)$ is called a \textnormal{boundary term} if
 \begin{itemize}
   \item [(a)] $b(Q)=b(Qx)$ for all $x\in G$ and all $Q\in {\mathcal F}(G)$,
   \item [(b)] $\lim\limits_{j\rightarrow\infty} \frac{b(U_j)}{\vert U_j \vert}=0$ for any F\o{}lner sequence $(U_j)$ and
   \item [(c)] there exists $D>0$ with $b(Q)\leq D\vert Q\vert$ for all $Q\in {\mathcal F}(G)$.
 \end{itemize}
\end{Definition}
For a pattern $P$ we define $b(P):=b(D(P))$. Due to property (a) the value $b(P)$ depends only on the equivalence class of a pattern. Thus $b(\tilde P):=b(P)$ is well-defined.
\begin{Definition}\label{def_Fschlange}
Let $(X,\Vert\cdot\Vert)$ be a Banach space and $\tilde F$ a function $\tilde F: \tilde{\mathcal P}\rightarrow X$. We call $\tilde F$
\textnormal{almost-additive} if there exists a boundary term $b$ such that for any $\tilde P\in \tilde {\mathcal P}$ and any disjoint decomposition $P=\bigcup_{k=1}^m P_k$ of a representative $P$ of $\tilde P$ we have:
\[\left\Vert \tilde F(\tilde{P})-\sum\limits_{k=1}^m \tilde F(\tilde P_k)\right\Vert \leq \sum\limits_{k=1}^m b(\tilde P_k).\]
\end{Definition}
Let $\tilde F: \tilde{\mathcal P}\rightarrow X$ be an almost additive function and $P$ an arbitrary pattern. If we decompose $P$ into $\vert D(P)\vert$ one-element patterns $P_k$ with disjoint domains, i.e. $P=\bigcup P_k$ we obtain
\begin{equation}\label{ftildebounded0}
\|\tilde F(\tilde P) \|
 \le \|\tilde F(\tilde P)-\sum_k \tilde F( \tilde P_k) \| + \| \sum_k \tilde F(  \tilde P_k) \|
 \le \sum_k b(\tilde P_k) + \sum_k \| \tilde F(\tilde P_k) \|.
\end{equation}
Since all $D(P_k)$ contain exactly one element, $b(\tilde P_k)$ is independent of $k$ and $F(\tilde P_k)$ can take at most $\vert{\mathcal A}\vert$ different values. Therefore it follows
\begin{equation}\label{Ftildebounded}
\|\tilde F(\tilde P) \|
  \le C|D(P)|\hspace{1cm}\mbox{with } C= b(e)   + \max_{a \in{\mathcal A}} \| \tilde F(\tilde {e_a}) \|.
\end{equation}
Here $e$ is the unit element considered as a vertex of the Cayley graph, and  $e_a: \{e\}\rightarrow {\mathcal A}$ the map $e_a(e)=a$. A given colouring ${\mathcal C}$ on $G$ and an almost-additive function $\tilde F: \tilde{\mathcal P}\rightarrow X$ give rise to a function
\[F: {\mathcal F}(G)\rightarrow X, \hspace{0.7cm} F(Q):= \tilde F(\tilde{{\mathcal C}\vert_{Q}})\hspace{1cm}\mbox{for } Q\in {\mathcal F}(G).\]
The following properties of $F$ obviously hold
\begin{itemize}
 \item [(i)] \emph{${\mathcal C}$-invariant}: if $x\in G$ is such that the patterns ${\mathcal C}\vert_Q$ and ${\mathcal C}\vert_{Qx}$ are equivalent, then we have \[ F(Q)=F(Qx),\]
\item [(ii)] \emph{almost additive}: if $Q_k, k=1,\dots,n$ are disjoint subsets of $G$, then we have
\[\bigg\Vert F\Big(\bigcup\limits_{k=1}^m Q_k\Big)-\sum\limits_{k=1}^m F(Q_k)\bigg\Vert \leq \sum\limits_{k=1}^m b(Q_k),\]
\item [(iii)] \emph{bounded}: there exists a $C > 0$ such that
\[\Vert F(Q)\Vert\leq C\vert Q\vert \hspace{1cm}\mbox{for all } Q\in {\mathcal F}(G).\]
\end{itemize}
A $\mathcal C$-invariant and almost additive function $F: {\mathcal F}(G)\rightarrow X$ is automatically bounded. This follows from an estimate analogous to (\ref{ftildebounded0}). Instead of defining $F$ based on $\tilde F$ one could also proceed the other way around: if a function $F:{\mathcal F}(G)\rightarrow X$ with the properties (i) and (ii) is given, define $\tilde F: \tilde{\mathcal P}\rightarrow X$ by the following procedure. If for $\tilde P\in \tilde{\mathcal P}$ there exists an $Q\in {\mathcal F}(G)$ such that $\tilde {{\mathcal C}\vert_Q}=\tilde P$ set $\tilde F(\tilde P)=F(Q)$. This definition is independent of the particular choice of $Q$ by the $\mathcal C$-invariance of $F$. If such a $Q$ does not exist, set $F(\tilde P)=0$.
Therefore showing (i) and (ii) for $F$ is the same as showing almost additivity for $\tilde F$. To simplify the notation we will write $\tilde F(P)$ instead of $\tilde F(\tilde P)$ for a given pattern $P\in {\mathcal P}$.
In order to be able to refer to it later, we summarise the assumptions made in this section:

\begin{assumption}\label{ass:first}
 $G$ is an amenable group generated by a finite and symmetric set $S$, ${\mathcal A}$ is a finite set and ${\mathcal C}: G\rightarrow {\mathcal A}$ is a map, which we will call a colouring.
There exists a F\o lner sequence $(Q_n)_{n\in \NN}$ in $G$ such that each element of this sequence symmetrically tiles the group.
$(U_j)_{j\in \mathbb N}$ is a F\o{}lner sequence along which the frequencies $\nu_P=\lim_{j\rightarrow \infty}\vert U_j\vert^{-1} \sharp_P({\mathcal C}\vert_{U_j})$ exist for all patterns $P\in \bigcup_{n\in \NN}{\mathcal P}(Q_n)$. $(X,\Vert \cdot \Vert)$ is a Banach-space.
\end{assumption}

\begin{Remark}\label{rem_weiss}
 Let us discuss Assumption \ref{ass:first}.
\begin{itemize}
\item We assume that $G$ contains a F\o lner sequence $(Q_n)$ such that each $Q_n$ symmetrically tiles $G$.
This condition is particularly satisfied if there exists a sequence of subgroups $(G_n)_{n\in\NN}$ such that one can choose the associated fundamental domains $(Q_n)_{n\in \mathbb N}$ to be a F\o{}lner sequence. Based on a result of Weiss \cite{Weiss-01}, Krieger proves in \cite{Krieger-07} that this is fullfilled for any residually finite, amenable group. This gives that in particular any group of polynomial volume growth fits in our framework.
\item In the special case where the group equals $\ZZ^d$ it is convenient to think of the sets $U_j$ as balls of radius $j$ and of $Q_n$ as cubes of side length $n$. While both of them are F\o lner sequences, $(Q_n)$ has the additional property that each $Q_n$ symmetrically tiles $\ZZ^d$. Here we want the frequencies of the patterns to exist along the sequence of balls.
 \end{itemize}

\end{Remark}


\section{An ergodic theorem}
Given the setting outlined in the previous section we are in the position to formulate the announced ergodic type theorem for certain Banach-space valued functions.

\begin{Theorem}\label{ergthm}
 Assume \ref{ass:first}. For a given ${\mathcal C}$-invariant and almost-additive function $F: {\mathcal F}(G)\rightarrow X$ the following limits
\[\overline F := \lim\limits_{j\rightarrow \infty}\frac{F(U_j)}{\vert U_j \vert}= \lim\limits_{n \rightarrow\infty} \sum_{P\in {\mathcal P}(Q_n)} \nu_P \frac{\tilde F(P)}{\vert Q_n\vert}  \]
exist and are equal. Furthermore, for $j,n\in \mathbb N$ the difference
\[\Delta(j,n):=\Big\Vert \frac{F(U_j)}{\vert U_j \vert} -   \sum_{P\in {\mathcal P}(Q_n)} \nu_P \frac{\tilde F(P)}{\vert Q_n\vert}\Big\Vert\]
satisfies the estimate
\begin{equation}\label{estimate}
\Delta(j,n)\leq \frac{b(Q_n)}{\vert Q_n\vert} + (C+D)\frac{\vert \partial^{\diam(Q_n)}U_j  \vert}{\vert U_j\vert}+C\sum\limits_{P\in {\mathcal P}(Q_n)} \Big \vert \frac{\sharp_P ({\mathcal C}\vert_{U_j})}{\vert U_j\vert }- \nu_P  \Big \vert.
\end{equation}
\end{Theorem}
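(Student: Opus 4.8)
The plan is to prove the quantitative estimate \eqref{estimate} first, and then deduce existence and equality of the two limits from it. The key mechanism is a \emph{tiling–and–counting} argument: fix $n$ and use the symmetric tiling $\{Q_n g : g \in K_n\}$ of $G$ to cut the F\o lner set $U_j$ into translates of $Q_n$, estimate $F(U_j)$ via almost additivity against the sum of $F$ over those tiles, and then reorganise that sum by grouping tiles according to the equivalence class of the pattern $\mathcal C$ restricted to them. The number of tiles carrying a given pattern $P$ is essentially $\sharp_P(\mathcal C|_{U_j})$ up to lattice-vs-pattern bookkeeping, and dividing by $|U_j| = |Q_n|\cdot(\text{number of tiles})$ produces the middle term $\sum_P \nu_P \tilde F(P)/|Q_n|$. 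The error is controlled by three contributions, matching the three summands on the right of \eqref{estimate}: (i) the almost-additivity defect $\sum b(\cdot)$, which after normalisation and using property (a) of the boundary term (translation invariance) collapses to $b(Q_n)/|Q_n|$ times the number of tiles, giving $b(Q_n)/|Q_n|$; (ii) the tiles that stick out of $U_j$ or fail to be full translates of $Q_n$ — these live in a boundary layer of width $\diam(Q_n)$, so their combined cardinality is $O(|\partial^{\diam(Q_n)} U_j|)$, and both the bound $\|F(\cdot)\| \le C|\cdot|$ (property (iii)) and the boundary-term bound $b(\cdot) \le D|\cdot|$ (property (c)) feed the factor $C+D$; (iii) the discrepancy between the actual pattern counts $\sharp_P(\mathcal C|_{U_j})/|U_j|$ and their limiting frequencies $\nu_P$, weighted by $\|\tilde F(P)\| \le C|Q_n|$ via \eqref{Ftildebounded}, producing the last sum with constant $C$.

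In more detail, I would proceed as follows. First, fix $n$ and choose the grid $K_n$ with $\{Q_n g : g \in K_n\}$ a symmetric tiling. Partition $K_n$ (relative to $U_j$) into the indices $g$ for which $Q_n g \subseteq U_j$ — call this inner set of tiles — and the remaining tiles that meet $U_j$ but are not contained in it; the union of the latter lies in $\partial_{ext}^{\diam(Q_n)}(U_j) \cup \partial_{int}^{\diam(Q_n)}(U_j)$, hence in $\partial^{\diam(Q_n)} U_j$, and likewise the part of $U_j$ not covered by inner tiles lies in that boundary set. Write $\widehat U_j$ for the union of the inner tiles. Then estimate
\[
\Bigl\| F(U_j) - \sum_{g : Q_n g \subseteq U_j} F(Q_n g) \Bigr\|
\le \| F(U_j) - F(\widehat U_j) \| + \Bigl\| F(\widehat U_j) - \sum_{g : Q_n g \subseteq U_j} F(Q_n g) \Bigr\|,
\]
bounding the first term by $C|U_j \setminus \widehat U_j| \le C|\partial^{\diam(Q_n)} U_j|$ using boundedness together with almost additivity applied to $U_j = \widehat U_j \,\dot\cup\, (U_j \setminus \widehat U_j)$, and the second by almost additivity directly, giving $\le \sum_{g} b(Q_n g) = (\#\text{inner tiles})\, b(Q_n) \le \frac{|U_j|}{|Q_n|} b(Q_n)$ after using translation invariance of $b$ and $\#\text{inner tiles} \le |U_j|/|Q_n|$. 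Next, since $F(Q_n g) = \tilde F(\widetilde{\mathcal C|_{Q_n g}})$ is $\mathcal C$-invariant, group the inner tiles by pattern class: $\sum_{g} F(Q_n g) = \sum_{P \in \mathcal P(Q_n)} N_P(j,n)\, \tilde F(P)$ where $N_P(j,n)$ counts inner tiles whose restricted colouring is equivalent to $P$. Comparing $N_P(j,n)$ with $\sharp_P(\mathcal C|_{U_j})$: every inner tile of type $P$ gives an occurrence of $P$ inside $U_j$, and conversely an occurrence of $P$ in $U_j$ lying in no inner tile is forced into the boundary layer $\partial^{\diam(Q_n)} U_j$, so $|N_P(j,n) - \sharp_P(\mathcal C|_{U_j})| \le |\partial^{\diam(Q_n)} U_j|$ — here one must be a little careful, summing this crude bound over $P$ would be too lossy, so instead I would bound $\sum_P |N_P(j,n) - \sharp_P(\mathcal C|_{U_j})|\, \|\tilde F(P)\|$ by observing that each boundary vertex is counted a bounded number of times and using $\|\tilde F(P)\| \le C|Q_n|$, absorbing the result into the $(C+D)|\partial^{\diam(Q_n)} U_j|/|U_j|$ term. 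Dividing everything by $|U_j|$ and collecting terms yields \eqref{estimate}.

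Finally, to get existence and equality of the limits: in \eqref{estimate} first send $j \to \infty$ with $n$ fixed. By Lemma \ref{lemma_abc}(b) the term $|\partial^{\diam(Q_n)} U_j|/|U_j| \to 0$, and by the assumed existence of frequencies (Assumption \ref{ass:first}) the last sum $\to 0$ since $\mathcal P(Q_n)$ is finite; hence $\limsup_{j} \bigl\| F(U_j)/|U_j| - \sum_{P \in \mathcal P(Q_n)} \nu_P \tilde F(P)/|Q_n| \bigr\| \le b(Q_n)/|Q_n|$. Since $(Q_n)$ is itself a F\o lner sequence, Definition \ref{def_boundaryterm}(b) gives $b(Q_n)/|Q_n| \to 0$. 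This shows the vectors $a_n := \sum_{P \in \mathcal P(Q_n)} \nu_P \tilde F(P)/|Q_n|$ form a Cauchy sequence in $X$ (for any $m,n$, both $a_m$ and $a_n$ are within $b(Q_m)/|Q_m|$ resp.\ $b(Q_n)/|Q_n|$ of the cluster set of $F(U_j)/|U_j|$, which one pins down by a standard argument), so $a_n$ converges to some $\overline F$, and simultaneously $F(U_j)/|U_j| \to \overline F$; both limits in the statement exist and coincide. The main obstacle I anticipate is the bookkeeping in step two — reconciling pattern occurrences with tile counts while keeping the boundary error genuinely proportional to $|\partial^{\diam(Q_n)} U_j|$ rather than that times $|\mathcal P(Q_n)|$ — which requires the weighting by $\|\tilde F(P)\|$ and a multiplicity bound on how many $(P,\text{position})$ pairs a single boundary vertex can belong to, rather than a naive term-by-term triangle inequality over $\mathcal P(Q_n)$.
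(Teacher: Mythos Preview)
Your overall architecture is right, and the convergence argument at the end is fine, but the central comparison step contains a genuine gap that a ``multiplicity bound'' cannot repair.

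The problem is your claim that $N_P(j,n)$ (the number of \emph{grid-aligned} inner tiles carrying pattern $P$) and $\sharp_P(\mathcal C|_{U_j})$ (the number of \emph{all} positions $z\in G$ with $Q_n z\subseteq U_j$ and $\mathcal C|_{Q_n z}$ equivalent to $P$) differ only by a boundary contribution. They do not: an occurrence of $P$ at a position $z$ not lying on the grid $K_n$ need not be anywhere near the boundary of $U_j$. In $\ZZ$ with trivial colouring, $Q_n=\{0,\dots,n-1\}$, grid $n\ZZ$, and $U_j=\{0,\dots,jn-1\}$, there are $j$ inner tiles but roughly $jn$ occurrences of the unique pattern---a factor-of-$|Q_n|$ discrepancy, not a boundary term. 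Your proposed fix (``a multiplicity bound on how many $(P,\text{position})$ pairs a single boundary vertex can belong to'') addresses a different issue and cannot close a gap of this order.

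The paper's remedy is the key idea you are missing: instead of a single tiling, one averages $F(U_j)$ against \emph{all $|Q_n|$ shifted tilings} $\{Q_n g:g\in K_n x^{-1}\}$, $x\in Q_n$. The symmetry $K_n=K_n^{-1}$ is exactly what makes the identity
\[
\{z\in G: Q_n z\subseteq U_j\}=\dot{\bigcup_{x\in Q_n}}\{z\in K_n x^{-1}: Q_n z\subseteq U_j\}
\]
hold, so that summing the grid-aligned inner-tile contributions over all $|Q_n|$ shifts reproduces $\sum_{P}\sharp_P(\mathcal C|_{U_j})\tilde F(P)$ exactly. Dividing by $|Q_n|$ then gives the desired $\sum_P \frac{\sharp_P}{|U_j|}\frac{\tilde F(P)}{|Q_n|}$ with no bulk error; only genuine boundary and frequency-discrepancy terms remain. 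Your tile-level estimate $T(U_j,x,n)$ is correct for each fixed shift $x$, and averaging it over $x\in Q_n$ does not worsen the bound, so once you insert the shift-averaging your argument goes through. This is also why the hypothesis of a \emph{symmetric} tiling is needed.
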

\begin{proof}
We firstly prove (\ref{estimate}). By adding a zero we get
\begin{eqnarray*}
 \Delta(j,n)&=&\Big\Vert \frac{F(U_j)}{\vert U_j \vert}- \sum\limits_{P\in {\mathcal P}(Q_n)}\frac{\sharp_P({\mathcal C}\vert_{U_j})}{\vert U_j\vert}\frac{\tilde F(P)}{\vert Q_n\vert} + \sum\limits_{P\in {\mathcal P}(Q_n)}\frac{\sharp_P({\mathcal C}\vert_{U_j})}{\vert U_j\vert}\frac{\tilde F(P)}{\vert Q_n\vert}-   \sum_{P\in {\mathcal P}(Q_n)} \nu_P \frac{\tilde F(P)}{\vert Q_n\vert}\Big\Vert \\
&\leq&\Big\Vert \frac{F(U_j)}{\vert U_j \vert}- \sum\limits_{P\in {\mathcal P}(Q_n)}\frac{\sharp_P({\mathcal C}\vert_{U_j})}{\vert U_j\vert}\frac{\tilde F(P)}{\vert Q_n\vert}\Big\Vert  + \Big\Vert \sum\limits_{P\in {\mathcal P}(Q_n)}\Big(\frac{\sharp_P({\mathcal C}\vert_{U_j})}{\vert U_j\vert}-   \nu_P \Big )\frac{\tilde F(P)}{\vert Q_n\vert}\Big\Vert .
\end{eqnarray*}
With another application of the triangle inequality this gives
\[\Delta(j,n)\leq D_1(j,n)+D_2(j,n),\]
where
\begin{eqnarray*}
 D_1(j,n)&:=& \Big\Vert \frac{F(U_j)}{\vert U_j \vert}- \sum\limits_{P\in {\mathcal P}(Q_n)}\frac{\sharp_P({\mathcal C}\vert_{U_j})}{\vert U_j\vert}\frac{\tilde F(P)}{\vert Q_n\vert}\Big\Vert \\
 D_2(j,n)&:=&\sum\limits_{P\in {\mathcal P}(Q_n)}\Big\vert\frac{\sharp_P({\mathcal C}\vert_{U_j})}{\vert U_j\vert}-   \nu_P \Big \vert\frac{\Vert\tilde F(P)\Vert}{\vert Q_n\vert}.
\end{eqnarray*}
We use the boundedness of $\tilde F$, see (\ref{Ftildebounded})
\begin{equation}
\Vert \tilde F(P) \Vert \leq  C\vert Q_n \vert,
\end{equation}
to obtain
\begin{equation}\label{bnd_d2}
D_2(j,n)\leq C \sum_{P\in {\mathcal P}(Q_n)} \Big \vert \frac{\sharp_P({\mathcal C}\vert_{U_j})}{\vert U_j\vert}-\nu_P\Big \vert.
\end{equation}

For each fixed $n\in \NN$ the set $Q_n$ (symmetrically) tiles the group $G$, i.e. there exists a symmetric set $G_n\subseteq G$ such that $G=\bigcup_{g\in G_n}Q_n g$, where $Q_n g\cap Q_n h=\emptyset$ for all $g,h\in G_n$ with $g\neq h$.
This property remains valid after shifting the grid by an arbitrary $x^{-1}\in G$.
In fact we have  
\[G=Gx^{-1}=\bigcup\limits_{g\in G_n} Q_n gx^{-1}=\bigcup\limits_{g\in G_n x^{-1}} Q_n g,\]
where $G_n x^{-1}:=\{gx^{-1}\vert g\in G_n\}$. Still $Q_n g\cap Q_n h=\emptyset$ holds for all distinct $g,h\in G_n x^{-1}$, since $g=\tilde gx^{-1}$ and $h=\tilde hx^{-1}$ for some distinct $\tilde g,\tilde h\in G_n$ and
\[
 Q_n g\cap Q_n h=\emptyset
 \quad \Leftrightarrow \quad Q_n \tilde gx^{-1}\cap Q_n \tilde hx^{-1}=\emptyset
 \quad \Leftrightarrow \quad Q_n \tilde g\cap Q_n \tilde h=\emptyset.
\]
 Given a set $A\in {\mathcal F}(G)$ and an element $x\in G$, we introduce the set of elements $g\in G_n x^{-1}$ which gives rise to a translate $Q_n g$, which is not disjoint from $A$:
\[S(A,x,n):=\{g \in G_n x^{-1}\,\vert\, Q_ng \cap A \neq \emptyset\}.\]
We distinguish two types of elements in $S(A,x,n)$
\[I(A,x,n):=\{g \in G_n x^{-1}\,\vert\, Q_ng \subseteq A \} \hspace{0.6cm}\mbox{and}\hspace{0.6cm}\partial (A,x,n):=S(A,x,n) \setminus I(A,x,n).\]
Since we have $Q_ng\subseteq \partial^{\diam Q_n} A$ for all $g\in \partial(A,x,n)$ and $Q_ng\subseteq A$ for all $g\in I(A,x,n)$ the disjointness of the translates implies that the following inequalities hold:
\begin{equation}\label{diamest}
\vert \partial (A,x,n) \vert \cdot \vert Q_n\vert \leq \vert \partial^{\diam Q_n} A\vert          \hspace{1cm}\mbox{and}\hspace{1cm}      \vert I (A,x,n) \vert \cdot \vert Q_n\vert \leq \vert A \vert.
\end{equation}
Given an $n\in \mathbb N$, $A\in {\mathcal F}(G)$ and $x\in G$ we have $Q_n g=Q_n g\cap A$ for $g\in I(A,x,n)$ and thus
\begin{eqnarray*}
 T(A,x,n)&:=& \bigg\Vert F(A) - \sum\limits_{g\in I(A,x,n)} F(Q_n g)\bigg\Vert = \bigg\Vert F(A) - \sum\limits_{g\in I(A,x,n)} F(Q_n g\cap A)\bigg\Vert  \\
&\leq& \bigg\Vert F(A) -  \sum\limits_{g\in S(A,x,n)} F(Q_n g\cap A) \bigg\Vert + \bigg\Vert \sum\limits_{g\in \partial(A,x,n)} F(Q_n g \cap A)   \bigg \Vert
\end{eqnarray*}
where the last inequality holds since $S(A,x,n)$ is the disjoint union of $\partial(A,x,n)$ and $I(A,x,n)$. Now we use the almost additivity and the boundedness of $F$ and later on the properties of the boundary term $b$ to obtain
\begin{eqnarray*}
 T(A,x,n)&\leq& \bigg ( \sum\limits_{g\in I(A,x,n)}b(Q_n g) +\sum\limits_{g\in \partial(A,x,n)}b(Q_n g \cap A)\bigg )+ \sum\limits_{g\in \partial(A,x,n)}C\vert Q_n g \vert\\
&\leq& \sum\limits_{g\in I(A,x,n)}b(Q_n ) +\sum\limits_{g\in \partial(A,x,n)}D \vert Q_n  \vert+ \sum\limits_{g\in \partial(A,x,n)}C\vert Q_n  \vert \\
&\leq& \vert I(A,x,n)\vert b(Q_n) + \vert \partial(A,x,n)\vert (C+D)\vert Q_n\vert.
\end{eqnarray*}
The inequalities (\ref{diamest}) yield the estimate
\begin{equation}\label{est_T}
T(A,x,n)\leq \frac{\vert A\vert}{\vert Q_n\vert}b(Q_n) + (C+D)\vert \partial^{\diam Q_n} A\vert .
\end{equation}
Furthermore we have the equality
\begin{equation}\label{eqsets}
 \{z\in G\,\vert\, Q_n z \subseteq A\} = \dot {\bigcup_{x\in Q_n}}\{z\in G_nx^{-1}\,\vert\, Q_nz\subseteq A\}
\end{equation}
since for each $z\in G$ there is $x\in Q_n$ and $g\in G_n$ with $z^{-1}=xg$ and hence $z= g^{-1}x^{-1}\in G_n x^{-1}$, as $G_n$ is a symmetric subset of $G$. To see that the union in \eqref{eqsets} is disjoint observe that for given $x,y\in Q_n$ with $x\neq y$ and $z\in G_nx^{-1}$ we have $z^{-1}\in xG_n$. By the tiling property of $Q_n$ this gives $z^{-1}\notin yG_n$ and hence $z\notin G_ny^{-1}$.

The $\mathcal C$-invariance of $F$ and the equation \eqref{eqsets} imply
\begin{equation}\label{FtotildeF}
 \sum\limits_{P\in {\mathcal P}(Q_n)}\sharp_P({\mathcal C}\vert_{U_j})\tilde F(P)=\sum\limits_{z\in G : Q_nz\subseteq U_j}F(Q_nz)=\sum\limits_{x\in Q_n} \sum\limits_{g\in I(U_j,x,n) }F(Q_n g),
\end{equation}
from which we deduce
\[\vert U_j\vert D_1(j,n)= \Big\Vert F(U_j)- \sum\limits_{P\in {\mathcal P}(Q_n)}\sharp_P({\mathcal C}\vert_{U_j}) \frac{\tilde F(P)}{\vert Q_n\vert}\Big\Vert \stackrel{(\ref{FtotildeF})}{=} \Big\Vert F(U_j)- \frac{1}{\vert Q_n\vert }  \sum\limits_{x\in Q_n} \sum\limits_{g\in I(U_j,x,n) }F(Q_n g)  \Big\Vert. \]
Using $\frac{1}{\vert Q_n\vert}\sum_{x\in Q_n}1=1$ we get
\begin{eqnarray*}
\Big\Vert F(U_j)- \frac{1}{\vert Q_n\vert }  \sum\limits_{x\in Q_n} \sum\limits_{g\in I(U_j,x,n) }F(Q_n g)  \Big\Vert&=&\frac{1}{\vert Q_n \vert} \Big\Vert  \sum\limits_{x\in Q_n} \Big( F(U_j)-  \sum\limits_{g\in I(U_j,x,n) }F(Q_n g)\Big)  \Big\Vert \\
&\leq& \frac{1}{\vert Q_n \vert}  \sum\limits_{x\in Q_n} \underbrace{\Big\Vert  F(U_j)-  \sum\limits_{g\in I(U_j,x,n) }F(Q_n g)\Big\Vert}_{T(U_j,x,n)} .
\end{eqnarray*}
Now we use the estimate (\ref{est_T}) for $T(U_j,x,n)$ to obtain

\begin{eqnarray*}
 D_1(j,n)&\leq&\frac{1}{\vert Q_n\vert}  \sum\limits_{x\in Q_n} \left(\frac{b(Q_n)}{\vert Q_n\vert} + (C+D)\frac{\vert \partial^{\diam Q_n} U_j\vert}{\vert U_j\vert}\right)\\
&=& \frac{b(Q_n)}{\vert Q_n\vert} +  (C+D)\frac{\vert \partial^{\diam Q_n} U_j\vert}{\vert U_j\vert}.
\end{eqnarray*}
Together with the upper bound for $D_2(j,n)$ in (\ref{bnd_d2}) we have
\begin{eqnarray*}
  \Delta(j,n)&\leq &D_1(j,n) + D_2(j,n)\\
&\leq&\frac{b(Q_n)}{\vert Q_n\vert} +  (C+D)\frac{\vert \partial^{\diam Q_n} U_j\vert}{\vert U_j\vert} +   C \sum_{P\in {\mathcal P}(Q_n)} \Big \vert \frac{\sharp_P({\mathcal C}\vert_{U_j})}{\vert U_j\vert}-\nu_P\Big \vert,
\end{eqnarray*}
for all $j,n\in \NN$. This proves (\ref{estimate}). Now the main part of the theorem follows readily. One immediately sees that $\Delta(j,n)$ tends to zero if $j$ and $n$ tend (in the right order) to infinity, i.e.
\begin{equation}\label{tozero}
\lim\limits_{n\rightarrow\infty}\lim\limits_{j\rightarrow\infty} \Delta(j,n)=0.
\end{equation}
The triangle inequality shows that
\[\Big \Vert  \frac{F(U_j)}{\vert U_j \vert} - \frac{F(U_m)}{\vert U_m \vert} \Big\Vert = \Big \Vert  \frac{F(U_j)}{\vert U_j \vert} - \hspace{-0.2cm}\sum_{P\in {\mathcal P}(Q_n)}\hspace{-0.2cm} \nu_P \frac{\tilde F(P)}{\vert Q_n\vert}+\hspace{-0.2cm}\sum_{P\in {\mathcal P}(Q_n)}\hspace{-0.2cm} \nu_P \frac{\tilde F(P)}{\vert Q_n\vert}- \frac{F(U_m)}{\vert U_m \vert} \Big\Vert\leq \Delta (j,n) + \Delta (m,n)\]
holds for all $j,m,n\in \mathbb N$. This implies that $(\vert U_j\vert ^{-1}F(U_j))$ is a Cauchy sequence and hence convergent in the Banach space $X$. We use again (\ref{tozero}) to obtain that $\sum_{P\in {\mathcal P}(Q_n)}\nu_P\frac{\tilde F(P)}{\vert Q_n\vert}$ converges to the same limit.
\end{proof}
With the help of the above theorem we are able to give an explicit bound for the distance between the approximants and the limit term.

\begin{Corollary}\label{coro1}
 Let the assumptions of Theorem \ref{ergthm} be fulfilled. Then we have for all $j,n\in \mathbb N$ the estimates
\[\bigg\Vert\overline F - \frac{F(U_j)}{\vert U_j\vert}\bigg\Vert \leq 2 \frac{b(Q_n)}{\vert Q_n\vert} + (C+D)\frac{\vert \partial^{\diam(Q_n)}U_j  \vert}{\vert U_j\vert}+C\sum\limits_{P\in {\mathcal P}(Q_n)} \Big \vert \frac{\sharp_P ({\mathcal C}\vert_{U_j})}{\vert U_j\vert }- \nu_P  \Big \vert\]
and
\[\bigg\Vert \overline F - \sum\limits_{P\in{\mathcal P}(Q_n)} \nu_P \frac{\tilde F(P)}{\vert Q_n\vert} \bigg\Vert\leq \frac{b(Q_n)}{\vert Q_n\vert}.\]
\end{Corollary}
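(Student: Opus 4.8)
The plan is to deduce both inequalities directly from Theorem \ref{ergthm}: the bound \eqref{estimate} on $\Delta(j,n)$ does almost all the work, once one exploits that $\overline F$ is the common limit of the two approximating families. I would prove the second (cleaner) estimate first, then feed it into the first one via a triangle inequality.

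For the inequality $\|\overline F - \sum_{P\in\mathcal P(Q_n)}\nu_P \tilde F(P)/|Q_n|\|\le b(Q_n)/|Q_n|$, I would use that $\overline F=\lim_{j\to\infty}F(U_j)/|U_j|$ by Theorem \ref{ergthm}, so that by continuity of the norm
\[
\Bigl\|\overline F - \sum_{P\in\mathcal P(Q_n)}\nu_P\frac{\tilde F(P)}{|Q_n|}\Bigr\| = \lim_{j\to\infty}\Delta(j,n)
\]
for each fixed $n$. Then I would pass to the limit $j\to\infty$ on the right-hand side of \eqref{estimate} with $n$ held fixed. The term $(C+D)|\partial^{\diam(Q_n)}U_j|/|U_j|$ tends to $0$ by Lemma \ref{lemma_abc}(b), since $\diam(Q_n)$ is a fixed radius and $(U_j)$ is a F\o lner sequence. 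The term $C\sum_{P\in\mathcal P(Q_n)}|\sharp_P(\mathcal C|_{U_j})/|U_j|-\nu_P|$ tends to $0$ because each frequency $\nu_P$ exists along $(U_j)$ by Assumption \ref{ass:first}, and the sum is finite (the set $\mathcal P(Q_n)$ is finite, as $\mathcal A$ and $Q_n$ are). Only $b(Q_n)/|Q_n|$ survives, giving the claim.

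For the first estimate I would insert the finite-volume expression and use the triangle inequality together with the definition of $\Delta(j,n)$:
\[
\Bigl\|\overline F - \frac{F(U_j)}{|U_j|}\Bigr\| \le \Bigl\|\overline F - \sum_{P\in\mathcal P(Q_n)}\nu_P\frac{\tilde F(P)}{|Q_n|}\Bigr\| + \Delta(j,n).
\]
Bounding the first summand by $b(Q_n)/|Q_n|$ via the estimate just proved, and bounding $\Delta(j,n)$ by the right-hand side of \eqref{estimate}, the two copies of $b(Q_n)/|Q_n|$ combine into the factor $2$ and the remaining two terms reproduce the asserted inequality.

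I do not expect a genuine obstacle; the argument is purely a bookkeeping of estimates already secured in Theorem \ref{ergthm}. The only points that merit a word of care are the interchange of the limit in $j$ with the norm (justified by continuity of $\|\cdot\|$) and the fact that the sum over $\mathcal P(Q_n)$ has finitely many terms, so that termwise convergence of the pattern frequencies is enough to make that contribution vanish.
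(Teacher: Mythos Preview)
Your proposal is correct and essentially the same as the paper's proof: both arguments pass to the limit in the bound \eqref{estimate} for $\Delta(j,n)$ and combine this with a triangle inequality. The only difference is organizational---you prove the second inequality first and then feed it into the first, whereas the paper handles them in the stated order (inserting $F(U_k)/|U_k|$ into the first estimate and letting $k\to\infty$, which amounts to the same computation).
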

\begin{proof}
By definition of $\overline F$
\[
  \Big\Vert \overline{F}-\frac{F(U_j)}{\vert U_j\vert} \Big\Vert  = \lim\limits_{k\rightarrow\infty}\Big\Vert \frac{F(U_k)}{\vert U_k\vert} -\frac{F(U_j)}{\vert U_j\vert}  \Big\Vert
\]
holds. The triangle inequality implies that for any $n\in \NN$ the above expression is less or equal to
\[
 \lim\limits_{k\rightarrow\infty} \bigg(\Big\Vert \frac{F(U_k)}{\vert U_k\vert} - \sum\limits_{P\in{\mathcal P}(Q_n)}\nu_P \frac{\tilde F(P)}{\vert Q_n\vert}\Big\Vert + \Big\Vert \sum\limits_{P\in{\mathcal P}(Q_n)}\nu_P \frac{\tilde F(P)}{\vert Q_n\vert} - \frac{F(U_j)}{\vert U_j\vert}  \Big\Vert\bigg)= \lim\limits_{k\rightarrow\infty} \big(\Delta (k,n) + \Delta (j,n)\big).
\]
Using the estimate (\ref{estimate}) for $\Delta(j,n)$ we obtain
\begin{eqnarray*}
\Big\Vert \overline{F}-\frac{F(U_j)}{\vert U_j\vert} \Big\Vert&\leq&  \lim\limits_{k\rightarrow\infty} \bigg( \frac{b(Q_n)}{\vert Q_n \vert} +  (C+D)\frac{\vert\partial^{\diam Q_n}U_k \vert}{\vert U_k \vert}+C\sum\limits_{P\in {\mathcal P}(Q_n)}\Big \vert \frac{\sharp_P({\mathcal C}\vert_{U_k})}{\vert U_k\vert}-\nu_P \Big\vert \\
&&\hspace{0.5cm}+ \frac{b(Q_n)}{\vert Q_n \vert} +  (C+D)\frac{\vert\partial^{\diam Q_n}U_j \vert}{\vert U_j \vert}+C\sum\limits_{P\in {\mathcal P}(Q_n)}\Big \vert \frac{\sharp_P({\mathcal C}\vert_{U_j})}{\vert U_j\vert}-\nu_P \Big\vert \bigg) \\
&=& 2\frac{b(Q_n)}{\vert Q_n \vert} +  (C+D)\frac{\vert\partial^{\diam Q_n}U_j \vert}{\vert U_j \vert}+C\sum\limits_{P\in {\mathcal P}(Q_n)}\Big \vert \frac{\sharp_P({\mathcal C}\vert_{U_j})}{\vert U_j\vert}-\nu_P \Big\vert .
\end{eqnarray*}
We use the same ideas to compute the distance to the second estimate:
\begin{eqnarray*}
\bigg\Vert \overline F - \sum\limits_{P\in {\mathcal P}(Q_n)}\nu_P \frac{\tilde F(P)}{\vert Q_n\vert} \bigg\Vert
 &= & \lim\limits_{k\rightarrow \infty} \bigg\Vert \frac{F(U_k)}{\vert U_k\vert} - \sum\limits_{P\in {\mathcal P}(Q_n)}\nu_P \frac{\tilde F(P)}{\vert Q_n\vert} \bigg\Vert  = \lim\limits_{k\rightarrow \infty} \Delta (k,n)\\
& \hspace{-7.5cm}\leq&\hspace{-4cm}\lim\limits_{k\rightarrow \infty} \bigg(\frac{b(Q_n)}{\vert Q_n\vert} + (C+D)\frac{\vert \partial^{\diam(Q_n)}U_k  \vert}{\vert U_k\vert}+ C\sum\limits_{P\in {\mathcal P}(Q_n)} \Big \vert \frac{\sharp_P ({\mathcal C}\vert_{U_k})}{\vert U_k\vert }- \nu_P  \Big \vert\bigg) = \frac{b(Q_n)}{\vert Q_n\vert}.
\end{eqnarray*}
\end{proof}
The assumptions of Theorem \ref{ergthm} and Corollary \ref{coro1} are particularly satisfied if there exists a F\o{}lner sequence  $(Q_n)_{n\in \NN}$ along which the frequencies $\nu_P$ for all patterns $P\in \bigcup_{n\in\NN}{\mathcal P}(Q_n)$ exist and where each $Q_n$ symmetrically tiles $G$. This corresponds to the special case of Assumption  \ref{ass:first}, where it is possible to choose $(U_j)=(Q_j)$.

\begin{Remark}\label{Frechet}
As the proof of Theorem \ref{ergthm} is based on the existence of a Cauchy sequence and its convergence, the completness of the image space $X$ is an essential assumption. However, the existence of a norm on $X$ can be replaced by the existence of a countable family of semi-norms. This gives rise to an induced metric on $X$. Therefore the above statements still hold true in the case were $X$ is a Fr\'echet space.  For similar reasoning see \cite{LenzS-06}.
\end{Remark}


\section{Approximation of the IDS}
We will use Theorem \ref{ergthm} to give an approximation estimate with respect to the supremum norm of the integrated density of states of certain operators. For this purpose we want to describe which class of operators can be treated by our method.
Let ${\mathcal H}$ be a finite dimensional Hilbert space with scalar product $\langle \cdot,\cdot\rangle$ and induced norm $\Vert \cdot \Vert$. We define
\[\ell^2(G,{\mathcal H}):=\left\{u:G\rightarrow {\mathcal H}\ \Big\vert\ \sum\limits_{x\in G}\Vert u(x)\Vert^2 < \infty\right\},\]
which is a Hilbert space as well. For an arbitrary element $x\in G$ let
\begin{equation}\label{proj_pt}
 p_x: \ell^2(G,{\mathcal H}) \rightarrow {\mathcal H},\ u\mapsto p_x(u):= u(x)
\end{equation}
be the natural projection and
\begin{equation}\label{incl_pt}
i_x: {\mathcal H}\rightarrow \ell^2(G,{\mathcal H}),\ h\mapsto i_x(h) \mbox{ with } i_x(h)(y)=h\delta_{xy}
\end{equation}
the natural inclusion. Note that $i_x$ is the adjoint of $p_x$. These maps can be generalised for subsets $Q\subseteq G$. The support of $u\in \ell^2(G,{\mathcal H})$ is the set of those $x\in G$, such that $u(x)\neq 0$. We identify $\ell^2(Q,{\mathcal H})=\{u:Q\rightarrow {\mathcal H} \vert \sum_{x\in Q}\Vert u(x)\Vert^2 < \infty\}$ with the subspace of $\ell^2(G,{\mathcal H})$ consisting of all elements supported in $Q$. The map $p_Q:\ell^2(G,{\mathcal H})\rightarrow \ell^2(Q,{\mathcal H})$ is given by $u\mapsto p_Q(u)$, where $p_Q(u)(x)=u(x)$ for $x\in Q$. Similarly the inclusion $i_Q:\ell^2(Q,{\mathcal H})\rightarrow \ell^2(G,{\mathcal H})$ is given by
\[ i_Q(u)(x):=\left\{\begin{array}{ll} u(x) &\mbox{if $x\in Q$}\\ 0& \mbox{else} \end{array}\right. .\]
In particular we will consider the restricted operator $p_QHi_Q:\ell^2(Q,{\mathcal H})\rightarrow \ell^2(Q,{\mathcal H})$ for which we will use the notation \[H[Q]:=p_QHi_Q.\]

\begin{Definition}
 Let ${\mathcal A}$ be a finite set, ${\mathcal C}:G \rightarrow {\mathcal A}$ a colouring and $H:\ell^2(G,{\mathcal H})\rightarrow \ell^2(G,{\mathcal H})$ a selfadjoint operator. Then we say that
\begin{itemize}
 \item [(a)] $H$ is of \textnormal{finite range} $:\Leftrightarrow \exists M>0$ such that $p_yHi_x=0$ for all $x,y\in G$ with $d_S(x,y)\geq M$,
 \item [(b)] $H$ is \textnormal{${\mathcal C}$-invariant} $:\Leftrightarrow \exists N\in \mathbb N$ such that $p_yHi_x=p_{yt}Hi_{xt}$ for all $x,y,t\in G$ obeying
\[\big(  {\mathcal C}\vert_{B_{N}(x) \cup B_{N}(y)}\big)t = {\mathcal C}\vert_{B_{N}(xt) \cup B_{N}(yt)},\]
 \item [(c)] $R(H):= \max\{M,N\}$ is the \textnormal{overall range} of $H$, if $H$ is of finite range with parameter $M$ and ${\mathcal C}$-invariant with parameter $N$.
\end{itemize}
\end{Definition}
Note that if $H$ satisfies the condition (b) for some $N$, then it does so for any $\tilde N>N$ as well.

\begin{Remark}[Boundedness of $H$]\label{remark_bounded}
 We fix a basis of the Hilbert space ${\mathcal H}$. Since ${\mathcal H}$ is of finite dimension, for each pair $x,y\in G$ the mapping $p_yHi_x: {\mathcal H}\rightarrow {\mathcal H}$ is given by a matrix of dimension $\dim({\mathcal H})\times \dim({\mathcal H})$. The $\mathcal C$-invariance of $H$ implies that the matrix corresponding to $p_yHi_x$ is a function of $\mathcal C$ restricted to $B_N(x)\cup B_N(y)$. If $H$ is of finite range, the matrix is in fact a function of ${\mathcal C}\vert_{B_{2R}(x)}$ only. The reason is that for $x$ and $y$ with distance greater than $M$, $p_yHi_x$ vanishes identically, while for $d_S(x,y)\leq M$ the set $B_N(x)\cup B_N(y)$ is contained in $B_{2R}(x)$. Since $\vert {\mathcal A}\vert <\infty$ and $\vert B_{2R}(x)\vert<\infty$ there are only finitely many functions $P:B_{2R}(x)\rightarrow {\mathcal A}$ and hence only finitely many values which the matrix valued function $p_yHi_x$ can take. From this we conclude that
\begin{equation}\label{bddc}
c:=\sup_{x,y\in G} \sup \{\Vert (p_yHi_x)h\Vert \ \vert\  h\in {\mathcal H}, \Vert h\Vert \leq 1 \}
\end{equation}
is finite. For a given $\phi \in \ell^2(G,{\mathcal H})$ the finite range of $H$ implies $H\phi (x) = \sum_{y\in B_R(x)} (p_yHi_x)\phi(y)$. Hence
\begin{equation*}
 \Vert H\phi \Vert^2 = \langle H\phi, H\phi \rangle =\sum_{x\in G}\langle H\phi (x), H\phi (x) \rangle = \sum_{x\in G} \Big\langle \sum_{y\in B_R(x)}(p_xHi_y) \phi (y) , \sum_{z\in B_R(x)}(p_xHi_z) \phi (z) \Big\rangle
\end{equation*}
holds. The Cauchy-Schwarz inequality implies
\begin{equation*}
 \Vert H\phi \Vert^2 \leq \sum_{x\in G} \sum_{y,z\in B_R(x)}\Vert(p_xHi_y) \phi (y)\Vert \Vert (p_xHi_z) \phi (z) \Vert\leq \sum_{x\in G} \sum_{y,z\in B_R(x)}c^2 \Vert \phi (y) \Vert \Vert \phi(z)\Vert
\end{equation*}
with $c$ as in (\ref{bddc}). Young's inequality $2\Vert\phi (x) \Vert\Vert\phi (y) \Vert\leq \Vert\phi (x) \Vert^2+\Vert\phi (y) \Vert^2$ yields that the last expression is less or equal to
\[\frac{c^2}{2} \sum_{x\in G}\left( \sum_{y,z\in B_R(x)}\Vert \phi (y) \Vert^2+ \sum_{y,z\in B_R(x)}\Vert \phi(z)\Vert^2\right)=c^2\vert B_R\vert \sum_{x\in G} \sum_{y\in B_R(x)}\Vert \phi (y) \Vert^2\]
This shows the boundedness of $H$:
\[\Vert H\phi \Vert\leq c \vert B_R\vert \Vert \phi \Vert.\]
\end{Remark}

\begin{Definition}
 Let ${\mathcal B}(\mathbb R)$ be the Banach space of right-continuous, bounded functions $f:\RR\rightarrow \RR$ equipped with supremum norm. For a selfadjoint operator $A$ on a finite dimensional Hilbert space $V$ we define its cumulative eigenvalue counting function $n(A)\in {\mathcal B}(\mathbb R)$ by setting
\[ n(A)(E):= \vert \{ i\in \mathbb N \ \vert\ \lambda_i\leq E \} \vert\]
for all $E \in \mathbb R$, where $\lambda_i, i=1,\dots,\dim V$ are the eigenvalues of $A$, counted according to their multiplicity.
\end{Definition}
We will study functions $n(H[Q])$ for ${\mathcal C}$-invariant operators $H$ of finite range and $Q\in{\mathcal F}(G)$. Dividing this function by the number of eigenvalues $\dim ({\mathcal H})\vert Q \vert$ of $H[Q]$ gives rise to a distribution function of a probability measure. It encodes the distribution of the spectrum of $H[Q]$. One can hope that if $Q$ tends to the whole group $G$ in some appropriate sense, the normalised eigenvalue counting functions will converge to a limiting distribution.\newline
For this approximation to hold it is certainly a good idea to choose $Q$ from a F\o{}lner sequence $(U_j)_{j\in \NN}$ of subsets in $G$. Indeed, it turns out that the above strategy can be implemented and in fact gives convergence of the distribution functions with respect to supremum norm. This is stated in Theorem \ref{ids_unif} below. We formulate now a convenient assumption.
\begin{assumption}\label{ass:B}
Assume \ref{ass:first} and additionally that $\mathcal H$ is a finite dimensional Hilbert space and that the operator $H:\ell^2(G,{\mathcal H})\rightarrow \ell^2 (G,{\mathcal H})$ is selfadjoint, ${\mathcal C}$-invariant and of finite range. Let $R=R(H)$ denote the overall range of $H$.
\end{assumption}

\begin{Theorem}\label{ids_unif}
 Assume \ref{ass:B}. Then there exists a unique probability measure $\mu_H$ on $\mathbb R$ with distribution function $N_H$, such that the estimate
\[
\bigg\Vert \frac{n(H[U_{j,R}])}{\dim ({\mathcal H})\vert U_{j,R}\vert} -  N_H\bigg\Vert_\infty \leq   8\frac{\vert \partial^RQ_n \vert}{\vert Q_n\vert} + (1+4 \vert B_R\vert )\frac{\vert \partial^{\diam(Q_n)}U_j  \vert}{\vert U_j\vert}+\sum\limits_{P\in {\mathcal P}(Q_n)} \Big \vert \frac{\sharp_P ({\mathcal C}\vert_{U_j})}{\vert U_j\vert }- \nu_P  \Big \vert + \frac{\vert\partial_{int}^{R}U_j\vert}{ \vert U_j\vert }
\]
holds for all $j,n\in\mathbb N$. This implies in particular the convergence
\[ \frac{n(H[U_{j,R}])}{\vert U_{j,R} \vert \dim (\mathcal H)}\rightarrow N_H \]
with respect to the supremum norm for $j\rightarrow \infty$. $N_H$ is called the \textnormal{integrated density of states}.
\end{Theorem}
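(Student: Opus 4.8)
The strategy is to apply Theorem~\ref{ergthm} to a cleverly chosen Banach-space valued function and then to transfer the resulting norm estimate from the abstract Banach space $\mathcal{B}(\mathbb{R})$ back to the eigenvalue counting functions of $H[U_{j,R}]$. First I would define, for a finite pattern $P$ with domain $D(P)$, the finite-dimensional selfadjoint operator obtained by restricting the ``model operator'' associated to the colouring pattern $P$ to $\ell^2(D(P)_R,\mathcal{H})$ — that is, one discards the collar $\partial^R D(P)$ so that, by $\mathcal{C}$-invariance and finite range of $H$, the restricted matrix depends only on $\tilde P$. Setting $\tilde F(P) := n(H[D(P)_R])$ (the cumulative eigenvalue counting function of that matrix), I claim $\tilde F \colon \tilde{\mathcal P}\to \mathcal{B}(\mathbb{R})$ is almost-additive. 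The key analytic input here is the standard eigenvalue bracketing / rank-perturbation estimate: if a finite-dimensional selfadjoint operator is split according to a partition of the index set, then the sum of the component eigenvalue counting functions differs from the eigenvalue counting function of the whole only by a quantity controlled in supremum norm by the rank of the ``coupling'' terms across the partition, which for a finite-range operator is bounded by the number of boundary sites times $\dim\mathcal{H}\cdot|B_R|$. This yields a boundary term $b(Q) = (\text{const})\,|\partial_{int}^{R}Q|\,\dim\mathcal{H}$, which by Lemma~\ref{lemma_abc}(b) and property (c) (it is trivially $\le \dim\mathcal H\,|B_R|\,|Q|$) is a genuine boundary term in the sense of Definition~\ref{def_boundaryterm}; the constant $C$ from \eqref{Ftildebounded} is likewise explicit ($C \le \dim\mathcal H$, since $n$ of a matrix on $\ell^2(\text{one site})$ is bounded by $\dim\mathcal H$ in sup-norm).

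Next I would invoke Theorem~\ref{ergthm} and Corollary~\ref{coro1} for this $\tilde F$: the limit $\overline F \in \mathcal B(\mathbb{R})$ exists, and $\overline F = \lim_j |U_j|^{-1} n(H[U_{j,R}])$ in supremum norm. One then checks that $\overline F$ is monotone non-decreasing, right-continuous, with $\overline F(E)=0$ for $E$ below the (uniformly bounded, by Remark~\ref{remark_bounded}) spectrum and $\overline F(E) = \dim\mathcal H$ for $E$ large; dividing by $\dim\mathcal H$ gives a bona fide distribution function $N_H$ of a unique probability measure $\mu_H$. These monotonicity/normalisation properties pass to the limit because each approximant $|U_{j,R}|^{-1}n(H[U_{j,R}])$ has them up to the normalisation mismatch $\dim\mathcal H\,|U_{j,R}|$ vs.\ $\dim\mathcal H\,|U_j|$, and uniform convergence preserves right-continuity and monotonicity.

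There remain two bookkeeping issues to close the gap between Corollary~\ref{coro1}'s bound and the stated inequality. The first is the normalisation discrepancy: the theorem normalises $n(H[U_{j,R}])$ by $\dim\mathcal H\,|U_{j,R}|$ rather than by $\dim\mathcal H\,|U_j|$ as Theorem~\ref{ergthm} would produce; since $0 \le n(H[U_{j,R}]) \le \dim\mathcal H\,|U_{j,R}|$ pointwise, the difference of the two normalised functions is bounded in sup-norm by $\big||U_j|-|U_{j,R}|\big|/|U_j| = |\partial^{R}_{int}U_j|/|U_j|$ (up to possibly replacing $\partial^R$ by $\partial^R_{int}$ in the estimate for $U_{j,R}$ versus $U_j$), which accounts for the last summand $|\partial_{int}^{R}U_j|/|U_j|$ in the claimed bound. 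The second is simply substituting the explicit constants $C \le \dim\mathcal H$ and $D = \dim\mathcal H\,|B_R|$ (after dividing the whole inequality by $\dim\mathcal H$, so $C/\dim\mathcal H \le 1$, $2b(Q_n)/(\dim\mathcal H|Q_n|) \le 8|\partial^R Q_n|/|Q_n|$ with the combinatorial constant absorbed, and $(C+D)/\dim\mathcal H \le 1+|B_R|$, perhaps with a factor $4$ from a crude rank count) and noting that $\diam(Q_n) \ge R$ for $n$ large so $\partial^R Q_n \subseteq \partial^{\diam Q_n} Q_n$ is not needed — rather $|\partial^R Q_n|/|Q_n| \to 0$ suffices. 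The main obstacle is the almost-additivity verification, i.e.\ producing a clean rank bound for the eigenvalue counting function under the collar-removal decomposition and confirming it is uniform in the pattern; once that lemma is in hand the rest is an exercise in applying Corollary~\ref{coro1} and tracking constants.
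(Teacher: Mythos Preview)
Your proposal is correct and follows essentially the same route as the paper: define $F_R^H(Q)=n(H[Q_R])$, verify $\mathcal C$-invariance and almost-additivity via a rank/subspace perturbation estimate (the paper's Proposition~\ref{prop2}, giving $b(Q)=4|\partial^R Q|\dim\mathcal H$, $C=\dim\mathcal H$, $D=4|B_R|\dim\mathcal H$), apply Corollary~\ref{coro1}, and then correct the normalisation from $|U_j|$ to $|U_{j,R}|$ to pick up the final $|\partial_{int}^R U_j|/|U_j|$ term. The only cosmetic difference is that the paper gets the decoupling in two steps---first observing $H[\bigcup_k Q_{k,R}]=\bigoplus_k H[Q_{k,R}]$ exactly (since the collars guarantee distance $>R$ between the pieces) and then comparing $H[Q_R]$ to this direct sum via the subspace estimate---whereas you phrase it as a single rank-of-coupling bound; both yield the same constants once you insert the factor~$4$ you anticipated.
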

For the proof we establish a couple of auxiliary results.

\begin{Proposition}\label{prop3}
 Assume \ref{ass:B}. The function $F_R^H: {\mathcal F}(G)\rightarrow {\mathcal B}(\mathbb R),\ Q\mapsto F_R^H(Q):= n(H[Q_R])$ is ${\mathcal C}$-invariant and almost-additive with the boundary term $b(Q):=4\vert \partial^RQ \vert \dim ({\mathcal H})$.
\end{Proposition}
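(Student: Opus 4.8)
The plan is to verify the two properties separately. The $\mathcal C$-invariance of $F_R^H$ is the easy part: if $x\in G$ is such that $\mathcal C|_Q$ and $\mathcal C|_{Qx}$ are equivalent patterns, then translation by $x$ carries $\mathcal C|_{B_R(y)}$ (for $y$ in the $R$-interior $Q_R$) to $\mathcal C|_{B_R(yx)}$, and since $H$ is $\mathcal C$-invariant with parameter $N\le R$, right translation by $x$ gives a unitary intertwiner between $H[Q_R]$ and $H[(Qx)_R]=H[Q_Rx]$. Unitarily equivalent selfadjoint matrices have the same eigenvalues with multiplicities, so $n(H[Q_R])=n(H[(Qx)_R])$, i.e. $F_R^H(Q)=F_R^H(Qx)$. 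One should note here that $(Qx)_R = Q_R x$, which is immediate from the translation invariance of the word metric.

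The almost-additivity is the substantive part. Given disjoint $Q_1,\dots,Q_m$ with union $Q$, I want to compare $n(H[Q_R])$ with $\sum_k n(H[Q_{k,R}])$ in supremum norm. The natural tool is the following elementary fact about eigenvalue counting functions: if $A$ is selfadjoint on $V$ and $A'$ is obtained by compressing $A$ to a subspace $V'\subseteq V$ of codimension $r$, then $\Vert n(A)-n(A')\Vert_\infty \le r$ — this is just eigenvalue interlacing (the Cauchy interlacing theorem), applied one dimension at a time. More generally, if $A_1$ and $A_2$ are selfadjoint and $A_2$ is a compression of $A_1$ to a subspace whose orthogonal complement (inside the domain of $A_1$) has dimension $r$, then again $\Vert n(A_1)-n(A_2)\Vert_\infty\le r$; and if $A_1 = \bigoplus_k B_k$ is block-diagonal then $n(A_1)=\sum_k n(B_k)$. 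The strategy is therefore:
\begin{itemize}
\item Set $W_k := Q_{k,R}$ and $W:=\bigcup_k W_k$. Because each $W_k$ sits at distance $>R\ge M$ from the complement of $Q_k$, and the $Q_k$ are disjoint, the sets $W_k$ are pairwise at distance $>M$ (indeed $>R$), so $H[W]$ is block-diagonal with respect to the decomposition $\ell^2(W,\mathcal H)=\bigoplus_k \ell^2(W_k,\mathcal H)$, i.e. $H[W]=\bigoplus_k H[W_k]$, and hence $n(H[W])=\sum_k n(H[W_k]) = \sum_k F_R^H(Q_k)$.
\item Compare $n(H[W])$ with $n(H[Q_R])$: both are compressions of $H[Q_R\cup \partial(\cdots)]$; more directly, $W\subseteq Q_R$ and $H[W]$ is the compression of $H[Q_R]$ to $\ell^2(W,\mathcal H)$, whose complement has dimension $\dim(\mathcal H)\cdot|Q_R\setminus W|$. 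So $\Vert n(H[Q_R]) - n(H[W])\Vert_\infty \le \dim(\mathcal H)\,|Q_R\setminus W|$.
\end{itemize}
What remains is the combinatorial estimate $|Q_R\setminus W| \le 2\sum_{k=1}^m |\partial^R Q_k|$ (the factor giving the claimed $b(Q)=4|\partial^R Q|\dim\mathcal H$ after accounting for both this term and symmetric contributions; I will track the constant carefully below). The point is that $x\in Q_R\setminus W$ means $x$ lies in $Q_R$, so $x\in Q_k$ for exactly one $k$, but $x\notin Q_{k,R}$, i.e. $x\in \partial^R(Q_k)\cap Q_k = \partial_{int}^R(Q_k)$. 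Conversely one must also control points in $\partial_{int}^R Q$ versus the $W_k$; summing over $k$ and using $|\partial_{int}^R Q_k|\le |\partial^R Q_k|$ gives $|Q_R\setminus W|\le \sum_k |\partial^R Q_k|$, and combined with the block-diagonal comparison (which costs another $\sum_k |Q_{k,R}^{?}|$-type term, bounded again by $\sum_k |\partial^R Q_k|$) one arrives at the bound $\Vert F_R^H(Q) - \sum_k F_R^H(Q_k)\Vert_\infty \le \sum_k 4|\partial^R Q_k|\dim(\mathcal H) = \sum_k b(Q_k)$.

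Finally one checks that $b(Q)=4|\partial^R Q|\dim(\mathcal H)$ is a genuine boundary term in the sense of Definition \ref{def_boundaryterm}: translation invariance $|\partial^R(Qx)|=|\partial^R Q|$ is clear from invariance of the metric; $|\partial^R U_j|/|U_j|\to 0$ along any F\o lner sequence is Lemma \ref{lemma_abc}(b); and $|\partial^R Q|\le |Q^R|\le |B_R|\,|Q|$ gives $b(Q)\le D|Q|$ with $D = 4|B_R|\dim(\mathcal H)$. The main obstacle is the careful bookkeeping in the combinatorial step — correctly identifying which restriction operators are compressions of which, and extracting the constant $4$ rather than a larger one — since the interlacing argument itself and the block-diagonality are routine once the set inclusions are pinned down.
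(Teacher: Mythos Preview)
Your approach is essentially the same as the paper's: establish $\mathcal C$-invariance from the overall range, observe that $H[W]$ with $W=\bigcup_k Q_{k,R}$ is block-diagonal so that $n(H[W])=\sum_k n(H[Q_{k,R}])$ exactly, and then compare $n(H[Q_R])$ with $n(H[W])$ via a compression estimate together with the combinatorial bound $|Q_R\setminus W|\le \sum_k |\partial_{int}^R Q_k|\le \sum_k |\partial^R Q_k|$.

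The one difference is in the compression step. The paper invokes its Proposition~\ref{prop2}, which for a compression $pAi$ of $A$ to a subspace of codimension $r$ yields $\Vert n(A)-n(pAi)\Vert_\infty\le 4r$; the factor $4$ there comes from writing $A-PAP$ as a sum of three rank-$\le r$ pieces plus accounting for the $r$ zero eigenvalues on the orthogonal complement. Your Cauchy interlacing argument gives the sharper bound $\Vert n(A)-n(pAi)\Vert_\infty\le r$ directly, so your own reasoning actually proves almost-additivity with boundary term $|\partial^R Q|\dim(\mathcal H)$, which is stronger than what is asserted. Your attempt to recover the constant $4$ by invoking a ``second contribution from the block-diagonal comparison'' is misplaced: that step is an exact equality $n(H[W])=\sum_k n(H[Q_{k,R}])$ and contributes no error at all. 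Drop that spurious term, accept the better constant, and the proof is clean.
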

\begin{proof}
Since $R$ is the overall range of $H$ the values of $n(H[Q_R])$ only depend on the colouring of $Q$, namely ${\mathcal C}\vert_{Q}$, and hence $F_R^H$ is ${\mathcal C}$-invariant. To show almost-additivity we use a decoupling argument. Let $Q$ be a disjoint union of $Q_k$ for $k=1,\dots,m$. By definition $R$ is big enough so that
\[H\Big[\bigcup_{k=1}^m Q_{k,R}\Big]=\bigoplus\limits_{k=1}^m H[Q_{k,R}]\]
holds. Therefore we can count the eigenvalues of $ H[Q_{k,R}]$ for $k=1,\dots, m$ separately
\[n\Big(H\Big[\bigcup_{k=1}^m Q_{k,R}\Big]\Big)=\sum\limits_{k=1}^m n(H[Q_{k,R}]).\]
Now we can apply Proposition \ref{prop2} with $V=\ell^2(Q_R,{\mathcal H})$ and $U=\ell^2(\bigcup_{k=1}^m Q_{k,R},{\mathcal H})$. Hence we get
\[ \left\Vert n(H[Q_R])-n\Big(H\Big[\bigcup_{k=1}^m Q_{k,R}\Big]\Big) \right\Vert_\infty \leq 4 \sum_{k=1}^m \vert \partial^RQ_k \vert \dim ({\mathcal H})\]
This proves the almost-additivity with the boundary term $b(Q):= 4\vert \partial^RQ \vert \dim ({\mathcal H})$. Note that $b(Q)\leq D\vert Q\vert$ with $D=4 \vert B_R \vert\dim ({\mathcal H})$, in the notation of Definition \ref{def_boundaryterm}.
\end{proof}
 From a calculation analogous to (\ref{ftildebounded0}) it is clear that $F_R^H$ is bounded. Since the operator $H[Q_R]$ has exactely $\dim({\mathcal H})\vert Q_R\vert$ eigenvalues, the boundedness holds with the constant $C=\dim({\mathcal H})$.
\begin{proof}[Proof of Theorem \ref{ids_unif}]
Since $F_R^H$ is ${\mathcal C}$-invariant and almost-additive, we can apply Corollary \ref{coro1} which gives the existence of a function $\tilde N_H$ with

\begin{multline}\bigg\Vert \frac{F_R^H(U_j)}{\vert U_j\vert} - \tilde N_H\bigg\Vert_\infty \leq 2 \frac{b(Q_n)}{\vert Q_n\vert} + (C+D)\frac{\vert \partial^{\diam(Q_n)}U_j  \vert}{\vert U_j\vert}+C\sum\limits_{P\in {\mathcal P}(Q_n)} \Big \vert \frac{\sharp_P ({\mathcal C}\vert_{U_j})}{\vert U_j\vert }- \nu_P  \Big \vert\\
\leq  \dim ({\mathcal H})\left( 8\frac{\vert \partial^RQ_n \vert}{\vert Q_n\vert} + (1+4 \vert B_R\vert )\frac{\vert \partial^{\diam(Q_n)}U_j  \vert}{\vert U_j\vert}+\sum\limits_{P\in {\mathcal P}(Q_n)} \Big \vert \frac{\sharp_P ({\mathcal C}\vert_{U_j})}{\vert U_j\vert }- \nu_P  \Big \vert\right)
\end{multline}
for all $j,n\in \mathbb N$. What remains to be done is to change the normalisation of $F_R^H(U)=n(H[U_R])$. We know that $\vert U_{j,R}\vert=\vert U_j\vert - \vert \partial^R_{int}(U_j)\vert$ and by expansion one can show that
\[\frac{1}{\vert U_j\vert}=\frac{1}{\vert U_j\vert - \vert \partial^R_{int}(U_j)\vert}- \frac{\vert\partial_{int}^R(U_j)\vert}{\vert U_j\vert (\vert U_j\vert -\vert \partial^R_{int}(U_j)\vert) }\]
holds. This gives us for all $j,n\in \NN$
\begin{eqnarray*}
 \bigg\Vert \frac{F_R^H(U_j)}{\vert U_j\vert} - \tilde N_H\bigg\Vert_\infty &=&\bigg\Vert \frac{F_R^H(U_j)}{\vert U_{j,R}\vert} - \frac{F_R^H(U_j)\vert\partial_{int}^{R}(U_j)\vert}{\vert U_{j,R }\vert  \vert U_j\vert } -\tilde N_H\bigg\Vert_\infty\\
&\geq& \bigg\Vert \frac{F_R^H(U_j)}{\vert U_{j,R}\vert} -\tilde N_H\bigg\Vert_\infty  - \bigg\Vert   \frac{F_R^H(U_j)\vert\partial_{int}^{R}(U_j)\vert}{\vert U_{j,R }\vert  \vert U_j\vert } \bigg\Vert_\infty.
\end{eqnarray*}
By definition of $F_R^H$ we have $\Vert F_R^H(U_j)\Vert_\infty = \dim ({\mathcal H}) \vert U_{j,R}\vert$ which implies
\[ \bigg\Vert   \frac{F_R^H(U_j)\vert\partial_{int}^{R}(U_j)\vert}{\vert U_{j,R }\vert  \vert U_j\vert } \bigg\Vert_\infty = \dim ({\mathcal H})  \frac{\vert\partial_{int}^{R}(U_j)\vert}{ \vert U_j\vert }\]
for all $j\in \NN$. Hence, the estimate
\[\bigg\Vert \frac{F_R^H(U_{j})}{\dim ({\mathcal H})\vert U_{j,R}\vert} -  N_H\bigg\Vert_\infty \leq   8\frac{\vert \partial^RQ_n \vert}{\vert Q_n\vert} + (1+4 \vert B_R\vert )\frac{\vert \partial^{\diam(Q_n)}U_j  \vert}{\vert U_j\vert}+\sum\limits_{P\in {\mathcal P}(Q_n)} \Big \vert \frac{\sharp_P ({\mathcal C}\vert_{U_j})}{\vert U_j\vert }- \nu_P  \Big \vert + \frac{\vert\partial_{int}^{R}U_j\vert}{ \vert U_j\vert }\]
holds for all $j,n\in \NN$ with $N_H:= \tilde N_H / \dim ({\mathcal H})$. By using Lemma \ref{lemma_abc} this shows the claimed convergence.
\end{proof}

Since the eigenvalue counting function $n$ is ${\mathcal C}$-invariant, the function $\tilde n$ on the set the equivalence classes of all patterns  given by
\begin{equation}
\tilde n (\tilde P):=\left \{ \begin{array}{ll} n(H[{Q_R}]) &\mbox{ if } Q\in {\mathcal F}(G) \mbox{ s.t. } \tilde P=\tilde {{\mathcal C}\vert_Q} \\  0& \mbox{ else}
\end{array}   \right.
\end{equation}
is well defined. As before we write $\tilde n(P)$ instead of $\tilde n(\tilde P)$ for a given $P\in \mathcal P$. The following result is a direct consequence of the second estimate in Corollary \ref{coro1} and the boundary term from Proposition \ref{prop3}.

\begin{Corollary}
 Assume \ref{ass:B} and let $N_H$ be defined as in Theorem \ref{ids_unif}. Then the bound
\[\bigg\Vert  N_H - \sum\limits_{P\in{\mathcal P}(Q_n)} \nu_P \frac{\tilde n(P)}{\vert Q_n\vert\dim ({\mathcal H})} \bigg\Vert_\infty\leq 4\frac{\vert\partial^R (Q_n)\vert}{\vert Q_n\vert}\]
holds for all $n\in \mathbb N$.
\end{Corollary}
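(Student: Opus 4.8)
The plan is to read this off directly from the second inequality in Corollary \ref{coro1} applied to the function $F_R^H$ supplied by Proposition \ref{prop3}, followed by the same rescaling step used at the end of the proof of Theorem \ref{ids_unif}. First I would recall that, by Proposition \ref{prop3}, $F_R^H\colon{\mathcal F}(G)\to{\mathcal B}(\RR)$ is ${\mathcal C}$-invariant and almost-additive with boundary term $b(Q)=4\vert\partial^R Q\vert\dim({\mathcal H})$, so that Assumption \ref{ass:B} (which contains Assumption \ref{ass:first}) lets us apply Theorem \ref{ergthm} and Corollary \ref{coro1} to $F_R^H$. The resulting limit $\overline{F_R^H}=\lim_{j}\vert U_j\vert^{-1}F_R^H(U_j)$ is, by inspection of the proof of Theorem \ref{ids_unif}, precisely the function $\tilde N_H$, and $N_H=\tilde N_H/\dim({\mathcal H})$. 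Moreover, comparing the construction of $\tilde F$ from a ${\mathcal C}$-invariant almost-additive $F$ (given after properties (i)--(iii) in Section 3) with the definition of $\tilde n$, one has $\widetilde{F_R^H}(P)=\tilde n(P)$ for every pattern $P$, since both assign $n(H[Q_R])$ to $\tilde P=\widetilde{{\mathcal C}\vert_Q}$ and $0$ otherwise.

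With these identifications in hand, the second estimate of Corollary \ref{coro1}, applied with $F=F_R^H$, $\overline F=\tilde N_H$, and the above boundary term, states
\[
 \Bigl\Vert \tilde N_H - \sum_{P\in{\mathcal P}(Q_n)} \nu_P\,\frac{\tilde n(P)}{\vert Q_n\vert}\Bigr\Vert_\infty \;\le\; \frac{b(Q_n)}{\vert Q_n\vert} \;=\; \frac{4\,\vert\partial^R Q_n\vert\,\dim({\mathcal H})}{\vert Q_n\vert}
\]
for every $n\in\NN$. Dividing both sides by $\dim({\mathcal H})$ and inserting $N_H=\tilde N_H/\dim({\mathcal H})$ on the left and distributing the factor $\dim({\mathcal H})$ into the denominator of the sum yields exactly the asserted bound
\[
 \Bigl\Vert N_H - \sum_{P\in{\mathcal P}(Q_n)} \nu_P\,\frac{\tilde n(P)}{\vert Q_n\vert\dim({\mathcal H})}\Bigr\Vert_\infty \;\le\; 4\,\frac{\vert\partial^R(Q_n)\vert}{\vert Q_n\vert}.
\]

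There is essentially no genuine obstacle here: the statement is, as the surrounding text says, a direct consequence of already-proven results. The only point that needs care is bookkeeping of the normalisation — identifying $\overline{F_R^H}$ with $\tilde N_H$, replacing $b(Q_n)$ by its explicit value $4\vert\partial^R Q_n\vert\dim({\mathcal H})$, and tracking the factor $\dim({\mathcal H})$ through the final division, exactly mirroring the renormalisation already carried out in the proof of Theorem \ref{ids_unif}.
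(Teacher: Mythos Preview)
Your proof is correct and follows exactly the approach indicated in the paper: applying the second estimate of Corollary \ref{coro1} to $F_R^H$ with the boundary term $b(Q)=4\vert\partial^R Q\vert\dim({\mathcal H})$ from Proposition \ref{prop3}, then dividing by $\dim({\mathcal H})$. Your careful verification that $\widetilde{F_R^H}=\tilde n$ and $\overline{F_R^H}=\tilde N_H=\dim({\mathcal H})N_H$ makes explicit the bookkeeping the paper leaves implicit.
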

We give a simple example to show that in general the IDS depends on the choice of the F\o{}lner sequence $(U_j)$.

\begin{Example}\label{ex1}
 Consider the usual graph of $\ZZ$, the set of colours ${\mathcal A}=\{\mbox{black, white}\}$ and the colouring
\[{\mathcal C}:\ZZ \rightarrow {\mathcal A},\hspace{1cm}{\mathcal C}(x)=\left\{\begin{array}{ll} \mbox{white} &\mbox{if $x\geq 0$ or $x=3k$ for $k\in\ZZ$}\\ \mbox{black}& \mbox{else} \end{array}\right.\]
Deleting all edges which are incident with a white vertex gives rise to a new graph and hence a new adjacency operator $A$. This operator is selfadjoint, of finite range and $\mathcal C$-invariant. We choose two F\o{}lner sequences $(U_j)$ and $(V_j)$ as follows
\begin{equation}\label{U_j,V_j}
U_j=\{1,\dots,3j\}\hspace{1cm} \mbox{and}\hspace{1cm} V_j=\{-3j,\dots,-1\}.
\end{equation}

Since for all $j\in\NN$ all entries of the matrix $A[U_j]$ are equal to zero, the IDS $N_U$ with respect to the sequence $(U_j)$ is
\[N_U(\lambda)=\left\{\begin{array}{ll} 0 &\mbox{if $\lambda <0$}\\ 1& \mbox{else.} \end{array}\right. \]
Computing the IDS along the sequence $(V_j)$ gives a completely different picture: The eigenvalues of the matrix $A[V_j]$ are $-1,0$ and $1$, each of them with multiplicity $j$. Therefore the IDS $N_V$ with respect to the sequence $(V_j)$ is the function
\[ N_V(\lambda)=\left\{\begin{array}{ll} 0 &\mbox{if $\lambda <-1$}\\ 1/3 &\mbox{if $-1\leq \lambda <0$}\\2/3 &\mbox{if $0\leq\lambda <1$}\\1& \mbox{else.} \end{array}\right. \]
\end{Example}

\begin{assumption}\label{ass:C}
  Assume \ref{ass:B} and additionally that the frequencies $\nu_P$ are strictly positive for all patterns $P\in {\mathcal P}$ which occur in ${\mathcal C}$, i.e. for which there exists $g\in G$ with ${\mathcal C}\vert_{D(P)g}= Pg $
\end{assumption}

\begin{Theorem}
 If we assume \ref{ass:C}, then the spectrum of $H$ equals the topological support of $\mu_H$.
\end{Theorem}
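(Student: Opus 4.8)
The plan is to prove the two inclusions separately, exploiting the approximation estimate of Theorem \ref{ids_unif} together with the strict positivity of all frequencies from Assumption \ref{ass:C}. Throughout, $N_H$ is the distribution function of $\mu_H$, and I recall that $\operatorname{supp}\mu_H$ consists of those $E\in\RR$ such that $N_H(E+\epsilon) - N_H(E-\epsilon) > 0$ for every $\epsilon>0$; equivalently, $E\notin\operatorname{supp}\mu_H$ iff $N_H$ is constant on some open interval around $E$.

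\emph{Inclusion $\operatorname{supp}\mu_H \subseteq \sigma(H)$.} Suppose $E\notin\sigma(H)$. Then there is $\epsilon>0$ such that $(E-\epsilon,E+\epsilon)\cap\sigma(H)=\emptyset$. The idea is that, since $H$ is a strong limit of the finite-volume operators $H[U_{j,R}]$ in an appropriate sense, eigenvalues of $H[U_{j,R}]$ in a slightly shrunk interval $(E-\epsilon/2, E+\epsilon/2)$ must be asymptotically negligible in number. More precisely, because the entries of $H$ are bounded uniformly (Remark \ref{remark_bounded}) and $H$ has finite range $R$, any eigenvalue of $H[U_{j,R}]$ with eigenvector $\psi$ supported well inside $U_{j,R}$ satisfies $\|(H-\lambda)i_{U_{j,R}}\psi\|$ small, hence such $\lambda$ lies close to $\sigma(H)$; the only eigenvalues that can sit in the gap come from eigenvectors concentrated near the boundary $\partial^R U_{j,R}$, whose number is $O(|\partial^R U_j|\dim\mathcal H)$ and thus $o(|U_j|)$ along the F\o lner sequence. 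Dividing by $\dim(\mathcal H)|U_{j,R}|$ and letting $j\to\infty$ shows $\mu_H((E-\epsilon/2,E+\epsilon/2))=0$, i.e. $E\notin\operatorname{supp}\mu_H$.

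\emph{Inclusion $\sigma(H)\subseteq\operatorname{supp}\mu_H$.} Suppose $E\in\sigma(H)$. Fix $\epsilon>0$; I want to produce mass of $\mu_H$ in $(E-\epsilon,E+\epsilon)$. Since $E\in\sigma(H)$ there is a normalised Weyl sequence, which (again using finite range of $H$) can be taken to be finitely supported, say in a ball $B_\rho(g)$; by $\mathcal C$-invariance and the finite-range property one can find a pattern $P_0$ on some $Q_n\supseteq B_\rho(g)^R$ such that whenever $P_0$ occurs in $\mathcal C\vert_{U_j}$ at a location whose $R$-neighbourhood is interior to $U_j$, the operator $H[U_{j,R}]$ has at least one eigenvalue in $(E-\epsilon,E+\epsilon)$ (the test vector, transplanted to that location, is an approximate eigenvector of the block). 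Here one uses that the finitely supported Weyl vector, cut off, still has $\|(H[Q_{n,R}]-E)\psi\| < \epsilon\|\psi\|$ for $n$ large, and that $H[U_{j,R}]$ decouples across the interior copies of $Q_{n,R}$ up to the boundary term as in Proposition \ref{prop3}. Each such interior occurrence therefore contributes at least $1$ to $n(H[U_{j,R}])(E+\epsilon) - n(H[U_{j,R}])(E-\epsilon)$, and distinct occurrences with disjoint enlarged domains contribute independently. Consequently
\[
\frac{n(H[U_{j,R}])(E+\epsilon) - n(H[U_{j,R}])(E-\epsilon)}{\dim(\mathcal H)|U_{j,R}|} \;\geq\; c\,\frac{\sharp_{P_0}(\mathcal C\vert_{U_j})}{|U_j|} - o(1),
\]
for a constant $c>0$ depending only on $|Q_n|$ and $\dim\mathcal H$. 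Letting $j\to\infty$ and invoking the uniform convergence from Theorem \ref{ids_unif} together with $\nu_{P_0}>0$ (Assumption \ref{ass:C}, since $P_0$ occurs in $\mathcal C$) gives $N_H(E+\epsilon)-N_H(E-\epsilon)\geq c\,\nu_{P_0}>0$. As $\epsilon>0$ was arbitrary, $E\in\operatorname{supp}\mu_H$.

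\emph{Main obstacle.} The delicate point is the second inclusion: converting the abstract Weyl sequence for $H$ on $\ell^2(G,\mathcal H)$ into a \emph{genuine eigenvalue} (not merely an approximate one) of a finite block $H[U_{j,R}]$ inside the window $(E-\epsilon,E+\epsilon)$, and doing so at a \emph{positive density} of interior locations controlled by a single pattern frequency. This requires (i) a spectral/min-max argument showing that an approximate eigenvector with error $<\epsilon$ forces a true eigenvalue within $\epsilon$, applied blockwise using the near-block-diagonal structure of $H[U_{j,R}]$ over interior translates of $Q_{n,R}$, and (ii) bookkeeping that the boundary corrections from Proposition \ref{prop3} and from $\partial_{int}^R U_j$ are $o(|U_j|)$ so they do not swamp the main term; the first inclusion, by contrast, is essentially a soft strong-convergence argument plus a boundary-count estimate.
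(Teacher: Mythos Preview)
Your plan is correct and, for the inclusion $\sigma(H)\subseteq\operatorname{supp}\mu_H$, essentially matches the paper: both transplant a finitely supported approximate eigenvector to each disjoint occurrence of the relevant pattern inside $U_{j,R}$, invoke strict positivity of the frequency, and pass to the limit via the uniform convergence of Theorem~\ref{ids_unif}. The paper packages the step ``$k$ orthogonal approximate eigenvectors $\Rightarrow$ at least $k$ eigenvalues in $(\lambda-\epsilon,\lambda+\epsilon)$'' cleanly through Proposition~\ref{prop4}, which is precisely the min--max argument you allude to in your obstacle paragraph; using it directly saves you the block-decoupling bookkeeping.

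For the other inclusion you take a genuinely different route. You argue the contrapositive: if $(E-\epsilon,E+\epsilon)\cap\sigma(H)=\emptyset$, then every eigenvector of $H[U_{j,R}]$ with eigenvalue in the half-gap must have nontrivial mass on $\partial_{int}^R(U_{j,R})$ (otherwise, via the identity $\|(H-\lambda)u\|=\|(H[Q]-\lambda)p_Qu\|$ for $\supp u\subseteq Q_R$, it would yield a Weyl vector for $H$ in the gap); a dimension count then bounds the number of such eigenvalues by $|\partial_{int}^R(U_{j,R})|\dim\mathcal H=o(|U_{j,R}|)$. The paper instead proceeds directly: starting from $\lambda\in\operatorname{supp}\mu_H$, uniform convergence forces $H[U_{j,R}]$ to have many eigenvalues near $\lambda$, Proposition~\ref{prop2} transfers one such eigenvalue $\bar\lambda$ to $H[U_{j,2R}]$, and the corresponding eigenvector (supported in $U_{j,R}\subseteq (U_{j,2R})_R$... morally) is turned into a Weyl vector for $H$ via the same identity. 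Your contrapositive is slightly more quantitative (it bounds the eigenvalue count in any spectral gap explicitly), while the paper's version is shorter and needs only a single eigenvector rather than an injectivity argument on the whole eigenspace.
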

\begin{proof}
Since the operator $H$ is assumed to be of overall range $R$ we have
\begin{equation}\label{bew_thm2_1}
\Vert  (H-\lambda)u \Vert=\Vert  (H[Q]-\lambda)p_Q u \Vert,
\end{equation}
for all $u$ with $\supp (u)\subseteq Q_R$.
Let $\lambda$ be an element of the spectrum $\sigma(H)$, then $H-\lambda$ is not invertible. Thus, for each $\epsilon>0$ we can find a subset $Q\in {\mathcal F}(G)$ and a normalised vector $u$ with support in $Q_R$ such that $\Vert(H-\lambda)u \Vert <\epsilon$ holds. From this we know that $(H-\lambda)u$ is supported in $Q$ and $\Vert(H[Q]-\lambda)p_Q u \Vert <\epsilon$ by (\ref{bew_thm2_1}). For each $j\in \mathbb N$ we denote the number of disjoint occurrences of translates of ${\mathcal C}\vert_{Q}$ in the set $U_{j,R}$ by $k(j)$. This ensures the existence of $k(j)$ pairwise orthogonal normalised vectors $u_i$, $i=1,\dots,k(j)$, where also $(H-\lambda)u_i$ are pairwise orthogonal and of norm strictly less than $\epsilon$. Applying Proposition \ref{prop4} we get that there must be at least $k(j)$ eigenvalues in the interval $(\lambda-\epsilon, \lambda+\epsilon)$ i.e.
\[n(H[U_{j,R}])(\lambda+\epsilon)-n(H[U_{j,R}])(\lambda-\epsilon)\geq k(j).\]
For a pattern $P\in \mathcal P$ Lemma \ref{lemma_abc} yields that the frequency $\nu_P$ along $(U_j)$ is the same as the frequency along $(U_{j,R})$. As these frequencies are assumed to be strictly positive for all patterns which occur in $\mathcal C$, the number of disjoint occurrences of ${\mathcal C}\vert_{Q}$ in $U_{j,R}$ grows linearly in the volume of $U_{j,R}$ for large $j$. Thus we can find a $c>0$ such that  $k(j)\geq c\vert U_{j,R} \vert$ holds for large $j$. Using the uniform convergence of $n(H[{U_{j,R}}])$ we see
\[\mu_H([\lambda-\epsilon,\lambda+\epsilon])=\lim_{j\rightarrow\infty}\frac{n(H[U_{j,R}])(\lambda+\epsilon)-n(H[U_{j,R}])(\lambda-\epsilon)}{\vert U_{j,R}\vert \dim {(\mathcal H)}}\geq \lim_{j\rightarrow\infty}\frac{k(j)}{\vert U_{j,R}\vert \dim {(\mathcal H)}}\geq \frac{c}{\dim {(\mathcal H)}}.\]
As $c$ is strictly positive and $\epsilon>0$ was arbitrary, we conclude that $\lambda$ is in the support of $\mu_H$.

\medskip
Now we start with $\lambda$ in the support of $\mu_H$. Thus for each $\epsilon>0$ we have a $c>0$ such that $\mu_H([\lambda-\epsilon,\lambda+\epsilon])\geq c$. By uniform convergence this gives that
\[ n(H[{U_{j,R}}])(\lambda+\epsilon)-n(H[{U_{j,R}}])(\lambda-\epsilon)\geq\frac{c}{2}\dim ({\mathcal H})\vert U_{j,R}\vert \]
holds for large $j$. Now we use Proposition \ref{prop2} to observe
\[\Vert n(H[U_{j,2R}])-n(H[U_{j,R}])\Vert_\infty \leq 4  \dim ({\mathcal H})\vert \partial_{int}^R U_{j,R}\vert\]
which leads together with triangle inequality to
\[ n(H[{U_{j,2R}}])(\lambda+\epsilon)-n(H[{U_{j,2R}}])(\lambda-\epsilon)\geq\frac{c}{2}\dim ({\mathcal H})\vert U_{j,R}\vert-8\vert \partial_{int}^{R}U_{j,R}\vert \dim({\mathcal H})\]
for large $j$. As the right hand side is positive for large $j$, there exists an eigenvalue $\overline\lambda\in [\lambda-\epsilon,\lambda+\epsilon]$ and a normalised eigenvector $\overline u\in \ell^2(U_{j,R},\mathcal H)$ such that $(H[{U_{j,2R}}]-\overline \lambda)\overline u=0$ holds. From this we have
\begin{equation*}
 \Vert (H[{U_{j,R}}]- \lambda)p_{U_{j,R}}u\Vert=\Vert (H[U_{j,R}]-\overline \lambda)p_{U_{j,R}}u+(\overline\lambda-\lambda)p_{U_{j,R}}u\Vert \leq \vert \overline\lambda-\lambda\vert \leq\epsilon
\end{equation*}
with a normalised vector $u=i_{U_{j,2R}}\overline{u}\in\ell^2(G,\mathcal H)$ which is supported in $U_{j,2R}$. By (\ref{bew_thm2_1}) we get $\Vert (H-\lambda)u\Vert\leq \epsilon$ and $\sigma(H)\cap[\lambda-\epsilon,\lambda+\epsilon]\neq \emptyset$. Since $\epsilon>0$ is arbitrary, we infer that $\lambda$ belongs to $\sigma(H)$.
\end{proof}
The following example shows that the positivity of the frequencies is a necessary assumption.

\begin{Example}
 Consider the same situation as in Example \ref{ex1} but now choose the colouring
\[{\mathcal C}:\ZZ \rightarrow {\mathcal A},\hspace{1cm}{\mathcal C}(x)=\left\{\begin{array}{ll} \mbox{white} &\mbox{if $x\geq 0$ or $x\leq-100$ or $x=3k$ for $k\in\ZZ$}\\ \mbox{black}& \mbox{else.} \end{array}\right.\]
Again we treat the case where edges only exist between black vertices with distance one. The restricted adjacency operator $A[V_j]$, with $V_j$ as in (\ref{U_j,V_j}), has for $1\leq j\leq 33$ the eigenvalues $-1,0$ and $1$ each of them with multiplicity $j$. From this we get in particular that $-1$ and $1$ are elements of the spectrum of $A$.\newline However the frequencies of the patterns that give rise to these eigenvalues is zero. For all $j\geq 34$ the multiplicities of the eigenvalues $-1$ and $1$ remain $33$ and the multiplicity of the eigenvalue $0$ equals $33+3j$. Therefore for increasing $j$ the steps of the cumulative eigenvalue counting become relatively small. This implies that the IDS is the function
\[N(\lambda)=\left\{\begin{array}{ll} 0 &\mbox{if $\lambda <0$}\\ 1& \mbox{else.} \end{array}\right. \]
Thus the topological support of the induced measure $\mu_A$ equals $\{0\}$, though $-1$ and $1$ are in the spectrum of $A$.
\end{Example}

The next corollary characterises the set of points at which the IDS is discontinuous. It has been obtained previously in \cite{LenzV-09} by different methods. For earlier results characterising the set of jumps see e.g.
\cite{KlassertLS-03}, \cite{Veselic-05b}. Related results pointing in this direction have been observed already in
\cite{Kuchment-91,Kuchment-05}.

\begin{Corollary}\label{cor_disc}
 Assume \ref{ass:C} and let $\lambda\in \mathbb R$. Then the following assertions are equivalent:
\begin{itemize}
 \item [(i)] $\lambda$ is a point of discontinuity of $N_H$,
 \item [(ii)] there exists a compactly supported eigenfunction of $H$ corresponding to $\lambda$.
\end{itemize}
\end{Corollary}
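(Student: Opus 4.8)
The plan is to prove both implications by comparing the size of the jump of $N_H$ at $\lambda$ with the dimension (per unit volume) of the space of compactly supported eigenfunctions. The key quantitative input is the uniform convergence in Theorem \ref{ids_unif} together with the positivity of frequencies from Assumption \ref{ass:C}, exactly as in the proof of the preceding theorem on the support of $\mu_H$.

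First I would show (ii)$\Rightarrow$(i). Suppose $u$ is a normalised eigenfunction of $H$ with $Hu=\lambda u$ and $\supp(u)\subseteq Q_R$ for some finite $Q$. By equation (\ref{bew_thm2_1}), $p_Qu$ is then an eigenvector of $H[Q']$ for $\lambda$ whenever $Q_R\subseteq Q'$; in particular, each disjoint translate of the pattern ${\mathcal C}\vert_Q$ sitting inside $U_{j,R}$ produces a further eigenvector of $H[U_{j,R}]$ for $\lambda$, and these are pairwise orthogonal because their supports are disjoint. Denote by $k(j)$ the number of such disjoint translates; since the frequency $\nu_{{\mathcal C}\vert_Q}$ is strictly positive (Assumption \ref{ass:C}) and, by Lemma \ref{lemma_abc}, agrees with the frequency along $(U_{j,R})$, one gets $k(j)\geq c\,\vert U_{j,R}\vert$ for large $j$ with some $c>0$. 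Hence
\[
n(H[U_{j,R}])(\lambda) - \lim_{\epsilon\downarrow 0} n(H[U_{j,R}])(\lambda-\epsilon) \geq k(j) \geq c\,\vert U_{j,R}\vert,
\]
and dividing by $\dim({\mathcal H})\vert U_{j,R}\vert$ and letting $j\to\infty$, the uniform convergence of Theorem \ref{ids_unif} yields a jump of $N_H$ at $\lambda$ of height at least $c/\dim({\mathcal H})>0$, so $\lambda$ is a point of discontinuity.

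For (i)$\Rightarrow$(ii) I would argue contrapositively, or directly via a dimension count. Assume $N_H$ has a jump of height $\delta>0$ at $\lambda$. By uniform convergence, for large $j$ the matrix $H[U_{j,R}]$ has at least $\tfrac{\delta}{2}\dim({\mathcal H})\vert U_{j,R}\vert$ eigenvalues equal to $\lambda$ (the jump of $N_H$ forces a jump of the finite-volume counting functions of comparable size, since a right-continuous nondecreasing step function converging uniformly to one with a jump must itself develop a jump there). Using Proposition \ref{prop2} to pass from $H[U_{j,R}]$ to $H[U_{j,2R}]$ at the cost of a boundary term $4\dim({\mathcal H})\vert\partial_{int}^RU_{j,R}\vert=o(\vert U_{j,R}\vert)$, one still finds eigenvectors $\overline u$ of $H[U_{j,2R}]$ for an eigenvalue $\overline\lambda$ near $\lambda$; the decoupling/finite-range property (\ref{bew_thm2_1}) then gives that $i_{U_{j,2R}}\overline u$ is an approximate eigenfunction of $H$. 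To upgrade this to an honest compactly supported eigenfunction, I would use that the number of such finite-volume $\lambda$-eigenvectors grows strictly faster than the boundary corrections, so that after restricting to a fixed finite window $W$ and letting the volume grow, a compactness/pigeonhole argument produces a nonzero vector supported in a bounded set that is annihilated by $H-\lambda$; alternatively, one can invoke the characterisation already available in \cite{LenzV-09} to identify the jump height of $N_H$ at $\lambda$ with the per-volume dimension of the span of compactly supported $\lambda$-eigenfunctions, whose positivity is then equivalent to (ii).

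The main obstacle is the second implication: converting the abundance of finite-volume eigenvectors at (or near) $\lambda$ into a genuinely \emph{compactly supported} eigenfunction of the infinite-volume operator $H$, since a priori the finite-volume eigenvectors need not stabilise. The clean way around this is the observation that, by $\mathcal C$-invariance and finite range, the local structure of $H$ near any occurring pattern is determined by that pattern; combined with the strict positivity of frequencies, a compactly supported eigenfunction must exist as soon as the per-volume eigenvalue multiplicity at $\lambda$ is positive, which is precisely the content of the jump. Making this dimension-counting rigorous — matching the jump of $N_H$ exactly with the growth rate of $\dim\ker(H[U_{j,R}]-\lambda)$ up to $o(\vert U_{j,R}\vert)$ corrections, and extracting the eigenfunction — is where the real work lies.
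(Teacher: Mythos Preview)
Your argument for (ii)$\Rightarrow$(i) is essentially the same as the paper's and is fine.

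For (i)$\Rightarrow$(ii) there is a genuine gap, and you correctly flag it yourself in the last paragraph. The detour through $H[U_{j,2R}]$, ``approximate eigenfunctions'' with $\overline\lambda$ near $\lambda$, and a vague compactness/pigeonhole argument in a fixed window is not the right route and, as stated, does not obviously produce an exact compactly supported eigenfunction; invoking \cite{LenzV-09} would be circular here. The clean idea you are missing is a one-line linear algebra comparison. From the jump and uniform convergence you already have
\[
\dim\ker\bigl(H[U_{j,R}]-\lambda\bigr)\;\geq\; c\,\vert U_{j,R}\vert
\]
for large $j$. Compare this with the subspace $\ell^2\bigl(\partial^{2R}_{int}U_{j,R},{\mathcal H}\bigr)$, whose dimension is $\dim({\mathcal H})\,\vert\partial^{2R}_{int}U_{j,R}\vert=o(\vert U_{j,R}\vert)$ by the F{\o}lner property (Lemma \ref{lemma_abc}). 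For $j$ large the first dimension strictly exceeds the second, so Proposition \ref{prop_help} yields a nonzero $u\in\ker(H[U_{j,R}]-\lambda)$ that vanishes on $\partial^{2R}_{int}U_{j,R}$, i.e.\ $\supp(u)\subseteq U_{j,3R}$. Because $H$ has range $R$, such a $u$ (extended by zero) satisfies $Hu=H[U_{j,R}]u=\lambda u$ exactly; no approximation, no passage to a larger box, and no limit is needed. This is precisely the argument the paper gives.
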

\begin{proof}
 Let $u$ be an eigenfunction corresponding to $\lambda$ with $r>0$ such that $\supp (u)\subseteq B_r$ holds. Let $Q$ be a finite subset of $G$. Set $P:={\mathcal C}\vert_{B_r}$, then each copy of $P$ in ${\mathcal C}\vert_{Q}$ adds a dimension to the eigenspace of $p_QHi_Q$ belonging to $\lambda$. We denote the number of disjoint copies of $P$ in $Q$ by $\dot\sharp_P({\mathcal C}\vert_{Q})$. A simple combinatorial argument shows $\vert B_{3r} \vert\dot\sharp_P({\mathcal C}\vert_{Q})\geq \sharp_P({\mathcal C}\vert_{Q})$. With this we get
\[\frac{n(H[Q])(\lambda-\epsilon)}{\vert Q\vert}\leq \frac{n(H[Q])(\lambda+\epsilon)-\dot\sharp_P({\mathcal C}\vert_{Q})}{\vert Q\vert}\leq \frac{n(H[Q])(\lambda+\epsilon)}{\vert Q\vert}-\frac{\sharp_P({\mathcal C}\vert_{Q})}{\vert B_{3r}\vert \vert Q\vert}.\]
Now we substitute $Q$ by the elements of the F\o{}lner sequence $(U_{j,R})_{j\in \mathbb N}$
\[\frac{n(H[{U_{j,R}}])(\lambda+\epsilon)}{\vert U_{j,R}\vert} -\frac{n(H[{U_{j,R}}])(\lambda-\epsilon)}{\vert U_{j,R}\vert}\geq \frac{\sharp_P({\mathcal C}\vert_{U_{j,R}})}{\vert B_{3r}\vert \vert U_{j,R} \vert}.\]
If $j\rightarrow \infty$ we get
\begin{equation}\label{jumpest}
 N_H(\lambda+\epsilon)-N_H(\lambda-\epsilon)\geq \frac{\nu_P}{\vert B_{3r}\vert \dim ({\mathcal H})}>0,
\end{equation}
where we used Lemma \ref{lemma_abc}. As $\epsilon>0$ is arbitrary, we infer that $\lambda$ is a point of discontinuity of $N_H$.

\medskip

 Let $\lambda$ be a point of discontinuity of $N_H$ and $(U_j)_{j\in \mathbb N}$ F\o{}lner sequence given in (\ref{ass:B}). Theorem \ref{ids_unif} shows that the distribution function $n(H[{U_{j,R}}])/(\vert U_{j,R}\vert\dim (\mathcal H))$ converges to $N_H$ with respect to the supremum norm. Since $\lambda$ is a point of discontinuity the jump at $\lambda$ will not get small, i.e.
\[\dim (\ker(H[{U_{j,R}}]-\lambda))=\lim\limits_{\epsilon\rightarrow 0}(n(H[{U_{j,R}}])(\lambda+\epsilon)-n(H[{U_{j,R}}])(\lambda-\epsilon))\geq c\vert U_{j,R}\vert\]
for a $c>0$ and all $j\in \mathbb N$. We also know
\[\dim (\ell^2(\partial^{2R}_{int}U_{j,R}))=\vert \partial^{2R}_{int}U_{j,R} \vert= \frac{\vert\partial^{2R}_{int}U_{j,R}\vert}{\vert U_{j,R}\vert}\vert U_{j,R}\vert\]
and since $U_{j,R}$ is a F\o{}lner sequence $\frac{\vert\partial^{2R}_{int}U_{j,R}\vert}{\vert U_{j,R}\vert}$ tends to zero for large $j$. Thus we get that
\[\dim (\ker(H[{U_{j,R}}]-\lambda))>\dim (\ell^2(\partial^{2R}_{int}U_{j,R}))\]
holds for large $j$. By using Proposition \ref{prop_help} we find an eigenvector $u$ of $H$ with $\supp u\subseteq U_{j,3R}$ for some $j\in \mathbb N$.
\end{proof}

\begin{Remark}
Part of Corollary \ref{cor_disc} can be sharpened to obtain a quantitative estimate. First we will discuss the size of a jump of the IDS at a point of discontinuity as it is estimated in line (\ref{jumpest}). If $v$ is an eigenfunction with support in $B_r$ then any pattern which is equivalent to $\mathcal C\vert_{B_r}$ will support an eigenfunction which is a translate of $v$. The frequency describes how often the respective pattern occurs in the colouring. In this sense it describes also the density of translates of the eigenfunction $v$. Thus it is natural that the size of the jump is proportional to $\nu_P$.
Since the IDS measures the number of eigenstates per unit volume,
the size of the ball $B_r$ and the dimension of the auxiliary Hilbert space $\mathcal H$ enter in the
denominator of the size of the jump. In fact there the term $\vert B_{3r}\vert$ rather then $\vert B_{r}\vert$ occurs since
in the proof of Corollary \ref{cor_disc} we are interested in the number of \emph{disjoint} occurrences
of a pattern in a certain volume, whereas the frequency $\nu_P$ is defined
in terms of overall occurrences of the sought pattern.

An analogous reasoning shows that if we have $m$ linearly independent eigenfunctions in $\ell^2(B_r)$ to the energy $\lambda$
then the size of the jump will obey
\[
N_H(\lambda+\epsilon)-N_H(\lambda-\epsilon)\geq
\frac{m \, \nu_P}{\vert B_{3r}\vert \dim(\mathcal H)},
\]
for any $\epsilon >0$.
\end{Remark}


\section{Applications} \label{section_app}
\subsection{Abelian groups}
In this section the main results are applied to the case where the group $G$ equals $\ZZ^d$, as an example for a finitely generated abelian group. 
Let $S$ be the usual set of generators given by $S=\{\pm s_1,\dots, \pm s_d\}$ with $s_i(j)=\delta_{ij}$. It is easy to check that the sequence $(Q_n)$ of cubes $Q_n=\{ 0,\dots, n-1 \}^d$ is a F\o{}lner sequence. Moreover for each $n\in \NN$ the set $Q_n$ symmetrically tiles $\ZZ^d$ with grid $(n\ZZ)^d$.
\begin{assumption}\label{ass:D}
 Let $(Q_n)$ and $S$ be as above. $\mathcal A$ is a finite set of colours, ${\mathcal C}: G\rightarrow {\mathcal A}$ a map called colouring and $(U_j)$ a F\o{}lner sequence along which the frequencies of all patterns $P\in \bigcup_{n\in \NN} {\mathcal P}(Q_n)$ exist.
\end{assumption}
One obtains the following corollary as a special case of Theorem \ref{ergthm} by using the equalities
\begin{equation}\label{roughest1}
 \vert Q_n\vert = n^d\hspace{0.7cm}\mbox{and}\hspace{0.7cm} \diam(Q_n) = dn.
\end{equation}

The following Corollary recovers the main result of \cite{LenzMV-08}.

\begin{Corollary}
  Assume \ref{ass:D}. For a given ${\mathcal C}$-invariant and almost-additive function $F: {\mathcal F}(\ZZ^d)\rightarrow X$ the following limits
\[\overline F := \lim\limits_{j\rightarrow \infty}\frac{F(U_j)}{\vert U_j \vert}= \lim\limits_{n \rightarrow\infty} \sum_{P\in {\mathcal P}(Q_n)} \nu_P \frac{\tilde F(P)}{n^d}  \]
exist and are equal. Furthermore, for $j,n\in \mathbb N$ the difference
\[\Delta(j,n):=\Big\Vert \frac{F(U_j)}{\vert U_j \vert} -   \sum_{P\in {\mathcal P}(Q_n)} \nu_P \frac{\tilde F(P)}{n^d}\Big\Vert\]
satisfies the estimate
\begin{equation}
\Delta(j,n)\leq \frac{b(Q_n)}{n^d} + (C+D)\frac{\vert \partial^{nd}U_j  \vert}{\vert U_j\vert}+C\sum\limits_{P\in {\mathcal P}(Q_n)} \Big \vert \frac{\sharp_P ({\mathcal C}\vert_{U_j})}{\vert U_j\vert }- \nu_P  \Big \vert.
\end{equation}
\end{Corollary}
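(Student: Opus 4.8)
The plan is to derive this corollary directly from Theorem \ref{ergthm} by verifying that Assumption \ref{ass:D} is a special case of Assumption \ref{ass:first} and then substituting the explicit combinatorial data for the cubes $Q_n$. First I would observe that $\ZZ^d$ is a finitely generated abelian (hence amenable) group, that the given $S = \{\pm s_1,\dots,\pm s_d\}$ is finite and symmetric, and that the sequence of cubes $Q_n = \{0,\dots,n-1\}^d$ is a F\o lner sequence: indeed $Q_n S \setminus Q_n$ is contained in a shell of bounded width around the boundary of the cube, so $\vert Q_n S \setminus Q_n\vert / \vert Q_n\vert = O(1/n) \to 0$. The key extra geometric hypothesis of Assumption \ref{ass:first} — that each $Q_n$ symmetrically tiles the group — holds here because the translates $\{Q_n + g \mid g \in (n\ZZ)^d\}$ partition $\ZZ^d$, and the grid $(n\ZZ)^d$ is visibly symmetric, i.e. $(n\ZZ)^d = -(n\ZZ)^d$. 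The remaining data in Assumption \ref{ass:first} (the finite colour set $\mathcal A$, the colouring $\mathcal C$, the F\o lner sequence $(U_j)$ along which all frequencies $\nu_P$ for $P \in \bigcup_n \mathcal P(Q_n)$ exist, and the Banach space $X$) are literally what Assumption \ref{ass:D} provides.

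Having checked the hypotheses, I would invoke Theorem \ref{ergthm} verbatim: for any $\mathcal C$-invariant and almost-additive $F : \mathcal F(\ZZ^d) \to X$ with boundary term $b$ and constants $C, D$, the limits $\overline F = \lim_j F(U_j)/\vert U_j\vert = \lim_n \sum_{P \in \mathcal P(Q_n)} \nu_P \tilde F(P)/\vert Q_n\vert$ exist and coincide, and $\Delta(j,n)$ obeys estimate \eqref{estimate}. The only thing left is cosmetic: replace $\vert Q_n\vert$ and $\diam(Q_n)$ by their explicit values. For the cube $Q_n = \{0,\dots,n-1\}^d$ one has $\vert Q_n\vert = n^d$ trivially, and $\diam(Q_n) = \max\{d_S(g,h) \mid g,h \in Q_n\} = d(n-1)$ in the word metric; here the paper writes $\diam(Q_n) = dn$, which is an equally valid (slightly generous) bound since $d(n-1) \le dn$, and replacing the exponent $\diam(Q_n)$ by the larger $dn$ only enlarges the set $\partial^{\diam(Q_n)} U_j \subseteq \partial^{dn} U_j$, preserving the inequality. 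Substituting these two identities into the statement and estimate of Theorem \ref{ergthm} yields exactly the displayed formulas for $\overline F$, for $\Delta(j,n)$, and the bound with $b(Q_n)/n^d$ and $\vert \partial^{nd} U_j\vert / \vert U_j\vert$.

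There is essentially no obstacle here; the corollary is a specialisation, and the one point that deserves a sentence of care is the discrepancy between the true diameter $d(n-1)$ and the quantity $dn$ appearing in \eqref{roughest1} and in the corollary's estimate. I would note that monotonicity of $R \mapsto \partial^R(Q)$ makes the bound with the larger exponent still correct, so no sharpness is lost in substance. Finally, I would remark that this recovers the main ergodic-theoretic result of \cite{LenzMV-08}, which was proved there by methods specific to $\ZZ^d$; the present derivation shows it as a transparent corollary of the general amenable-group statement.
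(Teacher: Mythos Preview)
Your proposal is correct and matches the paper's approach exactly: the paper states the corollary as a direct specialisation of Theorem~\ref{ergthm} obtained by substituting the equalities $\vert Q_n\vert = n^d$ and $\diam(Q_n) = dn$ from \eqref{roughest1}, after having noted in the preceding text that the cubes form a F{\o}lner sequence and symmetrically tile $\ZZ^d$ along $(n\ZZ)^d$. Your observation that the true diameter is $d(n-1)$ and that the inequality survives by monotonicity of $R\mapsto\partial^R$ is a valid point of care that the paper itself elides.
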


\begin{assumption}\label{ass:E}
Assume \ref{ass:D} and additionally that $\mathcal H$ is a Hilbert space of dimension $k<\infty$ and $H:\ell^2(\ZZ^d, {\mathcal H})\rightarrow \ell^2(\ZZ^d, {\mathcal H})$ a selfadjoint, $\mathcal C$-invariant operator of finite range with overall range $R$.
\end{assumption}
By Theorem \ref{ids_unif} the Assumption \ref{ass:E} implies that there exists a unique distribution function $N_H$ such that the estimate
\[\bigg\Vert \frac{n(H[U_{j,R}])}{k\vert U_{j,R}\vert} -  N_H\bigg\Vert_\infty \leq   8\frac{\vert \partial^RQ_n \vert}{\vert Q_n\vert} + (1+4 \vert B_R\vert )\frac{\vert \partial^{\diam(Q_n)}U_j  \vert}{\vert U_j\vert}+\sum\limits_{P\in {\mathcal P}(Q_n)} \Big \vert \frac{\sharp_P ({\mathcal C}\vert_{U_j})}{\vert U_j\vert }- \nu_P  \Big \vert + \frac{\vert\partial_{int}^{R}U_j\vert}{ \vert U_j\vert }\]
holds for all $j,n\in \NN$. Using the equalities (\ref{roughest1}) and the inequalities
\begin{equation}\label{roughtest}
\vert B_R\vert \leq (2R)^d\hspace{0.7cm}\mbox{and}\hspace{0.7cm} \vert \partial^R (Q_n) \vert \leq (n+4R)^d-n^d
\end{equation}
leads to a slightly weaker corollary.
\begin{Corollary}\label{cor_app1}
 Assume \ref{ass:E}. Then there exists a unique distribution function $N_H$, such that $n(H[U_{j,R}])/(k\vert U_{j,R} \vert)$ converges to $N_H$ with respect to the supremum norm as $j\rightarrow \infty$. In fact, the estimate
\begin{equation*}\label{weaker1}
 \Big\Vert \frac{n(H[U_{j,R}])}{k\vert U_{j,R}\vert}-N_H \Big\Vert_\infty\leq 8 \left(\hspace{-0.1cm}\left(1+\frac{4R}{n}\right)^d\hspace{-0.2cm}-1\hspace{-0.1cm}\right)+(1+4(2R)^d )\frac{\vert\partial^{dn}U_j\vert}{\vert U_j\vert} +\hspace{-0.3cm}\sum\limits_{P\in {\mathcal P}(Q_n)}\Big\vert\frac{\sharp_P({\mathcal C}\vert_{U_j})}{\vert U_j\vert}-\nu_P\Big\vert + \frac{\vert\partial_{int}^{R}U_j\vert}{ \vert U_j\vert }
\end{equation*}
holds for all $j,n\in\mathbb N$.
\end{Corollary}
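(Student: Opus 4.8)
The plan is to obtain the statement directly from Theorem~\ref{ids_unif} by specialising it to $G=\ZZ^d$ and then replacing the abstract geometric quantities occurring in that estimate by the crude bounds available for the cubes $Q_n=\{0,\dots,n-1\}^d$. The first observation is that Assumption~\ref{ass:E} is nothing but Assumption~\ref{ass:B} for the group $G=\ZZ^d$ with generating set $S=\{\pm s_1,\dots,\pm s_d\}$ and the F\o lner sequence $(Q_n)$ of cubes, each of which symmetrically tiles $\ZZ^d$ along the grid $(n\ZZ)^d$ (here $\dim(\mathcal H)$ equals $k$). Hence Theorem~\ref{ids_unif} applies verbatim: it yields a unique probability measure $\mu_H$ on $\RR$ with distribution function $N_H$ such that
\[
\bigg\Vert \frac{n(H[U_{j,R}])}{k\vert U_{j,R}\vert} -  N_H\bigg\Vert_\infty \leq   8\frac{\vert \partial^RQ_n \vert}{\vert Q_n\vert} + (1+4 \vert B_R\vert )\frac{\vert \partial^{\diam(Q_n)}U_j  \vert}{\vert U_j\vert}+\sum\limits_{P\in {\mathcal P}(Q_n)} \Big \vert \frac{\sharp_P ({\mathcal C}\vert_{U_j})}{\vert U_j\vert }- \nu_P  \Big \vert + \frac{\vert\partial_{int}^{R}U_j\vert}{ \vert U_j\vert }
\]
for all $j,n\in\NN$.

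Next I would substitute the equalities $\vert Q_n\vert=n^d$ and $\diam(Q_n)=dn$ from \eqref{roughest1} and the inequalities $\vert B_R\vert\le(2R)^d$ and $\vert\partial^R(Q_n)\vert\le(n+4R)^d-n^d$ from \eqref{roughtest}. Then the first summand is bounded by $8\bigl((1+4R/n)^d-1\bigr)$, the prefactor $1+4\vert B_R\vert$ of the second summand is bounded by $1+4(2R)^d$ while $\partial^{\diam(Q_n)}U_j=\partial^{dn}U_j$, and the remaining two summands stay as they are. This produces precisely the estimate asserted in the corollary, valid for all $j,n\in\NN$.

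Finally, for the convergence assertion I would note that on the right-hand side the first term depends only on $n$, whereas each of the other three tends to $0$ as $j\to\infty$ with $n$ fixed: the boundary fractions $\vert\partial^{dn}U_j\vert/\vert U_j\vert$ and $\vert\partial_{int}^{R}U_j\vert/\vert U_j\vert$ vanish by Lemma~\ref{lemma_abc}(b), and the frequency defect $\sum_{P\in{\mathcal P}(Q_n)}\bigl\vert\sharp_P({\mathcal C}\vert_{U_j})/\vert U_j\vert-\nu_P\bigr\vert$ vanishes because in Assumption~\ref{ass:D} the sequence $(U_j)$ is chosen so that all of these finitely many frequencies exist. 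Hence, for each fixed $n$, $\limsup_{j\to\infty}\bigl\Vert n(H[U_{j,R}])/(k\vert U_{j,R}\vert)-N_H\bigr\Vert_\infty\le 8\bigl((1+4R/n)^d-1\bigr)$, and letting $n\to\infty$ gives the claimed uniform convergence. I do not expect a genuine obstacle here: the argument is a routine specialisation of Theorem~\ref{ids_unif} combined with the elementary cube estimates \eqref{roughest1}--\eqref{roughtest}; the only point that needs a little attention is the order of the two limits, $j$ first and then $n$, since the $n$-dependent term does not decay as $j\to\infty$.
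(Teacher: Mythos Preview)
Your proposal is correct and follows exactly the paper's own approach: apply Theorem~\ref{ids_unif} under Assumption~\ref{ass:E} (which is Assumption~\ref{ass:B} specialised to $\ZZ^d$), then plug in the identities \eqref{roughest1} and the crude bounds \eqref{roughtest} to simplify the first two terms. Your additional paragraph explaining the order of limits for the convergence assertion is more explicit than the paper, but this convergence is already contained in Theorem~\ref{ids_unif} itself, so nothing extra is strictly needed.
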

In the situation where the frequencies $\nu_P$ of all patterns $P\in \bigcup_{n\in \NN} {\mathcal P}(Q_n)$ exist along the sequence of cubes $(Q_n)$ we set $U_j:=Q_j$ for all $j\in \NN$. Again by using (\ref{roughest1}) and (\ref{roughtest}), the estimate in Corollary \ref{cor_app1} can be replaced by
\begin{eqnarray*}
\Big\Vert \frac{n(H[Q_{j,R}])}{k\vert Q_{j,R}\vert}-N_H \Big\Vert_\infty&\leq& 8 \left(\left(1+\frac{4R}{n}\right)^d-1\right)+(1+4(2R)^d )\left(\left(1+\frac{4dn}{j}\right)^d-1\right)\\
&&+\sum\limits_{P\in {\mathcal P}(Q_n)}\Big\vert\frac{\sharp_P({\mathcal C}\vert_{Q_j})}{\vert Q_j\vert}-\nu_P\Big\vert + \left(\left(1+\frac{4R}{j}\right)^d-1\right).
\end{eqnarray*}
If furthermore $\mathcal A$ consists of only one element, all information given by a pattern $P\in \mathcal P$ is its domain $D(P)$. Therefore in this situation the frequencies $\nu_P$ exist for all patterns  $P\in\mathcal P$ along any F\o{}lner sequence $(U_j)$. In fact
\begin{equation}\label{nu=1}
1\geq \frac{\sharp_P({\mathcal C}\vert_{Q_j})}{\vert Q_j\vert}\geq\frac{\vert Q_{j,\diam D(P)}\vert}{\vert Q_j\vert}\rightarrow 1\hspace{0.7cm}\mbox{for }j\rightarrow\infty
\end{equation}
holds and hence $\nu_P=1$ for all $P\in\mathcal P$. Note that $\mathcal P (Q_n)$ contains just one element. In this situation we get
\begin{equation}\label{onecolest}
\sum\limits_{P\in {\mathcal P}(Q_n)}\Big\vert\frac{\sharp_P({\mathcal C}\vert_{Q_j})}{\vert Q_j\vert}-\nu_P\Big\vert
\leq 1-\frac{\vert Q_{j,\diam Q_n}\vert}{\vert Q_j\vert}
\leq \frac{\vert\partial^{\diam Q_n}Q_j\vert}{\vert Q_j\vert}
\leq\left(\left(1+\frac{ 4dn}{j}\right)^d-1\right)
\end{equation}
and hence that the estimate
\[
\Big\Vert \frac{n(H[U_{j,R}])}{k\vert U_{j,R}\vert}-N_H \Big\Vert_\infty \leq c\left(\left(1+\frac{c}{n}\right)^d+ \left(1+\frac{cn}{j}\right)^d-2\right)
\]
holds for all $j,n\in\mathbb N$, where $c=6(2R)^d$.

\subsection{Heisenberg group}
The discrete Heisenberg group $H_3$ is a prominent example for a non-abelian, finitely generated group. A finite system of generators $S$ gives rise to the Cayley graph and the adjacency operator. We are interested in the spectral distribution of this operator. Applying Theorem \ref{ids_unif} leads to a uniform approximation of the IDS. The elements of the discrete Heisenberg group are given by the set
\[ H_3:= \left\{ \begin{pmatrix}
          1 & 0 & 0\\
          a & 1 & 0\\
          c & b & 1
         \end{pmatrix}  \Bigg {\vert} a,b,c \in \mathbb Z \right\} .\]
 We denote an element of $H_3$ by
\[(a,b,c):=  \begin{pmatrix}
          1 & 0 & 0\\
          a & 1 & 0\\
          c & b & 1
         \end{pmatrix}.\]
The group multiplication is induced by the usual matrix multiplication, therefore the product and the inverse for two elements $(a,b,c) , (a',b',c')\in H_3$ are given by
\[(a,b,c)(a',b',c')=(a+a',b+b',c+c'+ba')\mbox{ and } (a,b,c)^{-1}=(-a,-b,ab-c).\]
It is easy to check, that the set $S:=\{s_1^{\pm 1},s_2^{\pm 1}\}$ with $s_1=(1,0,0),s_2=(0,1,0)$ is symmetric and generates $H_3$.
Let the sequence of subgroups $(G_n)$ be given by $G_n:=\{(a,b,c) \vert a,b\in n\mathbb Z, c\in n^2\mathbb Z\}$. One can show that for each $n\in \mathbb N$ the set $Q_n=\{(a,b,c)\vert a,b\in \mathbb Z(0,n-1),c\in \mathbb Z(0,n^2-1)\}$ is a fundamental domain for $G_n$ in $H_3$, where we use for $u\leq v\in\mathbb Z$ the notation $\mathbb Z(u,v):=\{u,u+1,\dots, v\}$. Now we prove that $(Q_n)_{n\in \mathbb N}$ is a F\o{}lner sequence. Since the equality
\[\vert Q_n S\setminus Q_n\vert=\sum\limits_{s\in S}\vert Q_n s\setminus Q_n\vert\]
holds, we study the size of the four disjoint parts of the boundary $\vert Q_n s\setminus Q_n\vert$, $s\in S$ separately. For the first part we get
\begin{eqnarray*}\left\vert Q_n s_1\setminus Q_n\right\vert&=&\left\vert \left\{ (a+1,b,c+b)\big\vert a=n-1, b\in \mathbb Z_0^{n-1},c\in \mathbb Z_0^{n^2-1} \right\}\right\vert\\ &&+\left\vert \left\{ (a+1,b,c+b)\big\vert a\in \mathbb Z_0^{n-2},\ b\in \mathbb Z_0^{n-1},\ c\in \mathbb Z_0^{n^2-1} , b+c\geq n^2\right\}\right\vert\\
&=&n^3+ (n-1)\sum\limits_{b=0}^{n-1}\sum\limits_{c=0}^{n^2-1}\textbf 1_{\{b+c\geq n^2\}}\\
&=&\frac{3}{2}n^3-n^2+\frac{1}{2}n,
\end{eqnarray*}
where we used $\sum_{i=1}^n i= \frac{1}{2}n(n-1)$.
Similarly one can show
\[\vert Q_n s_1^{-1}\setminus Q_n\vert=\frac{3}{2}n^3-n^2+\frac{1}{2}n\hspace{0.6cm} \mbox{and}\hspace{0.6cm}\vert Q_n s_2\setminus Q_n\vert=\vert Q_n s_2^{-1}\setminus Q_n\vert=n^3\]
for the other generators. Therefore we have
\[\vert Q_n S\setminus Q_n\vert=5n^3-2n^2+n\]
for the boundary of a set $Q_n$. As the volume of the fundamental domain $Q_n$ is equal to $n^4$ we get
\[\lim\limits_{n\rightarrow \infty} \frac{\vert Q_n S\setminus Q_n\vert}{\vert Q_n\vert}=0.\]
Thus, the sequence $(Q_n)$ is a F\o{}lner sequence and $H_3$ is amenable.
We consider the trivial colouring, i.e. $\vert {\mathcal A}\vert=1$ on $H_3$. In this case $\nu_P=1$ for all patterns $P\in {\mathcal P}$, cf. (\ref{nu=1}). The adjacency operator $A:\ell^2(H_3)\rightarrow \ell^2(H_3)$ is defined pointwise for $x,y\in H_3$ and $f\in \ell^2(H_3)$ by
\[
A f(x) =\sum\limits_{y\in H_3}A(x,y)f(y),
\hspace{0.5cm}\mbox{ where }\hspace{0.5cm}
A(x,y):=\left\{ \begin{array}{ll}1 & \mbox{ if } d_S(x,y)=1\\0 & \mbox{ else. }\end{array} \right.
\]
This operator is obviously selfadjoint, of finite range and ${\mathcal C}$-invariant with overall range $R(A)=1$. Hence, the Assumption \ref{ass:B} is fulfilled and we can apply Theorem \ref{ids_unif}, which gives us the uniform convergence of the cumulative eigenvalue counting function and the estimate
\begin{eqnarray*}
\Big\Vert \frac{n(A[Q_{j,1}])}{\vert Q_{j,1}\vert}-N_A \Big\Vert_\infty&\leq& 8 \frac{\vert \partial^1  (Q_n)\vert}{\vert Q_n \vert}+(1+4\vert B_1 \vert )\frac{\vert\partial^{\diam (Q_n)}Q_j\vert}{\vert Q_j\vert}\\
&&+\sum\limits_{P\in {\mathcal P}(Q_n)}\Big\vert\frac{\sharp_P({\mathcal C}\vert_{Q_j})}{\vert Q_j\vert}-\nu_P\Big\vert + \frac{\vert\partial_{int}^{1}(Q_j)\vert}{ \vert Q_j\vert } \\
& \leq & 8 \frac{\vert \partial^1  (Q_n)\vert}{\vert Q_n \vert}+22\frac{\vert\partial^{\diam (Q_n)}Q_j\vert}{\vert Q_j\vert} +\sum\limits_{P\in {\mathcal P}(Q_n)}\Big\vert\frac{\sharp_P({\mathcal C}\vert_{Q_j})}{\vert Q_j\vert}-\nu_P\Big\vert
\end{eqnarray*}
Here we used that the ball of radius one contains exactly five elements. Since there exists only one pattern with domain $Q_n$ the last sum is not larger than $ \vert Q_j\vert^{-1} \vert \partial^{\diam(Q_n)}Q_j\vert  $, cf. (\ref{onecolest}). Thus,
\[ \Big\Vert \frac{n(A[Q_{j,1}])}{\vert Q_{j,1}\vert}-N_A \Big\Vert_\infty\leq  8 \frac{\vert \partial^1  (Q_n)\vert}{\vert Q_n \vert}+23\frac{\vert\partial^{\diam (Q_n)}Q_j\vert}{\vert Q_j\vert} \]
holds for all $j,n\in\NN$.

\begin{Remark}[Diameters of $Q_n$]
Up to now we have discussed two features of the sequence $(Q_n)$: Each of its elements symmetrically tiles $G$
and  the sequence has the F\o lner property. Now, the latter  is  a statement about the quotient of the area of the boundary of $Q_n$
and the volume of $Q_n$. For certain questions  it is desireable to relate this quantities to the diameter of $Q_n$.
From the above calculations and reference \cite{Blachere-03a} we conclude that
\[
n\le \diam (Q_n) \le 6 n , \quad
|Q_n| = n^4 , \quad
|Q_nS \setminus Q_n| =5  n^3-2n^2 +n\le 6 n^3.
\]
Using the isoperimetric inequality \cite{Foelner-55,Zuk-00} (valid for any finite subset $Q$ of an arbitrary Cayley graph of the group
$G$ with generating set $S$) and Lemma \ref{lemma_abc} we obtain
\[
 |Q| \le \sum_{\gamma \in \partial_{ext}^1 Q} d_S(e,\gamma)
\le |S|\ |Q S \setminus Q| \, (1 + \diam (Q))
\]
Thus it follows that for all $Q_n$ we have a tight  upper and lower bound for the above mentioned quotient (as far as the
power of $n$ is concerned):
\[
 \frac{1}{28\, n}  \le \frac{1}{|S| (1 + \diam (Q_n))}\le \frac{|Q_nS \setminus Q_n| }{|Q_n| } \le \frac{6}{n}
\]
In some sense our choice of the sequence $(Q_n)$ is optimal, at least if one looks only at the order of powers of $n$.
If we minimize the quotient over all subsets $Q$ of $G$ with diameter $n$ we obtain
\[
 \inf_{\diam Q =n}  \frac{|QS \setminus Q| }{|Q| }  \ge \frac{1}{4+4n} .
\]
Note also that the tiles $Q_n$ have the same relation between diameter and volume as balls of radius $n$:
\[
 |Q_n| \sim  (\diam Q_n)^4.
\]
This implies that for lower bounds on Laplace eigenvalues in terms of Cheeger inequalities  the tiles $Q_n$
are equaly good as balls.

\end{Remark}

\subsection{Periodic operators}
An important class of operators to which our theory can be applied are those which are invariant with respect to a periodic colouring. Here are the details.\newline
Let $G$ be a finitely generated group, $\Gamma$ a locally finite graph with a countable set of vertices and $T$ a representation of $G$ by graph isomorphisms $T_g: \Gamma\rightarrow \Gamma, g\in G$.  We furthermore assume that the action $T$ of $G$ on $\Gamma$ is free and cocompact, where free means that for any distinct $g,h\in G$ and all $\gamma\in \Gamma$ we have $T_g \gamma\neq T_h \gamma$. By the cocompactness assumption we have that the quotient space $\Gamma/T$ is compact and in this case even finite.\newline
A fundamental domain ${\mathcal D}\subseteq \Gamma$ contains by definition exactly one element of each equivalence class of $\Gamma$ with respect to $T$. Therefore $\mathcal D$ is also finite and this implies that the vertex degree of $\Gamma$ is bounded.
Consider for example the Heisenberg group $H_3$ acting on the Cayley graph of $H_3$ with respect to a fixed generating system, where the action of $T_g$ is given by
\[T_g (\gamma) = \gamma g^n \]
for a fixed $n\in \mathbb N$, all $\gamma\in H_3$ and all $g \in H_3$.\newline
We will use the relation between the group $G$ and the graph $\Gamma$ to convert an operator defined on $\ell^2(\Gamma)$ to an operator on $\ell^2(G,\mathcal H)$ for an appropriately chosen finite dimensional Hilbert space $\mathcal H$. Let $A:\ell^2(\Gamma)\rightarrow \ell^2(\Gamma)$ be linear, selfadjoint and invariant under the action of $T$, i.e.
\begin{equation}\label{transinvA}
A(x,y)=A(T_g x ,T_g y) \hspace{2cm}\forall x,y\in \Gamma, g \in G .
\end{equation}
We furthermore assume that $A$ is of finite range, which means that whenever the graph distance between $x$ and $y$ is larger than a constant $\rho$, we have $A(x,y)=0$.\newline
As the fundamental domain $\mathcal D$ is finite, $\mathcal H:=\ell^2({\mathcal D})$ is of finite dimension. Next we define a unitary operator $U:\ell^2(G,{\mathcal H}) \rightarrow \ell^2(\Gamma)$. For any elements $\psi\in \ell^2(G,{\mathcal H})$ and $ g\in G$ we can decompose $\psi( g)$ according to
\[\psi( g)=\sum\limits_{i \in {\mathcal D}}\psi_i(g)\delta_i ,\]
where $\delta_i,\ i\in {\mathcal D}$ are the basis elements of $\ell^2({\mathcal D})$. It is clear that the coefficients $\psi_i(g)$ are uniquely determined. For a given $\gamma\in \Gamma$ we choose the elements $i\in {\mathcal D}$ and $g \in G$ such that $\gamma=T_g i$ holds, to define $U\psi(\gamma):=\psi_i(g)$. It is not hard to check that the operator $U^*:\ell^2(\Gamma)\rightarrow \ell^2(G,{\mathcal H})$ given by $U^*\phi (g)=\sum_{i \in {\mathcal D}}\phi(T_g i)\delta_i$ for $\phi \in \ell^2(\Gamma)$ and $g \in G$ is the inverse and the adjoint of $U$.\newline
Finally we define the operator $H:=U^*AU:\ell^2(G,{\mathcal H})\rightarrow \ell^2(G,{\mathcal H})$ and need to show the $\mathcal C$-invariance and finite range property to apply our theorems. We start with the second one: let $a,b \in G$ be arbitrary. For the natural projection $p_a$ and inclusion $i_b$, defined as in (\ref{proj_pt}) and (\ref{incl_pt}) we get that $Ui_b$ maps an element of $\mathcal H$ to $T_b{\mathcal D}:=\{\gamma\in \Gamma \vert \exists i \in {\mathcal D} \mbox{ such that } \gamma=T_b i\}$. Accordingly the value of $p_a U^*\phi$ depends only on the elements $\phi(\gamma)$ for $\gamma\in T_a {\mathcal D}$. Thus, if the distance between $T_a{\mathcal D}$ and $T_b{\mathcal D}$ is larger than $\rho$, the operator $p_a U^* A U i_b$ is equal to zero. As $\mathcal D$ is finite we can find a $R_{fr}>0$ such that $d_S(a,b)\geq R_{fr}$ implies $ p_a Hi_b=0$. Since the operator $A$ has the property (\ref{transinvA}) we get that $p_a Hi_b = p_{a+g} H i_{b+g}$ holds for all $a,b,g \in G$, which implies the $\mathcal C$-invariance.

\subsection{Percolation Hamiltonians}
In this section we discuss randomly coloured graphs and show the existence of frequencies $\nu_P$ for all $P\in \mathcal P$ along any F\o{}lner sequence. In fact we even give a precise formula to compute these frequencies.

We consider a randomly coloured infinite Cayley graph  $\Gamma=\Gamma(G,S)$, where each vertex is coloured in one of the $\vert\mathcal A\vert$ colours independently from the others. Each of the colours is chosen with the same probability, namely $\vert{\mathcal A} \vert^{-1}$. More precisely, we consider the following probability space: The sample space $\Omega =\left\{\omega=(\omega(x))_{x\in G}\ \vert\ \omega(x)\in {\mathcal A}\right\}$ is  the set of all possible colourings of the graph. We take $\mathcal B$ to be the $\sigma$-algebra of subsets of $\Omega$ generated by the finite-dimensional cylinders. Finally, we take the product measure $ \mu =\prod_{g\in G} \mu_g$ on $(\Omega, {\mathcal B})$, where $\mu_g$ is a probability measure on $\mathcal A$ given by
\[
 \mu_g(\omega(g)=a)=\frac{1}{\vert {\mathcal A}\vert}\hspace{1cm}\forall a\in {\mathcal A}.
\]
For $x\in G$ and $\omega \in \Omega $ a translation of a colouring $\omega$ is given by $\omega x : G\rightarrow {\mathcal A}$, $y\rightarrow \omega(yx^{-1})$. This defines a measure preserving action of $G$ on the measure space $(\Omega,{\mathcal B},\mu)$. Furthermore the independence of the randomly chosen colours implies ergodicity of the action of $G$ on $(\Omega,{\mathcal B},\mu)$. This means that if $A\in {\mathcal B}$ satisfies $Ax=A$ for all $x\in G$,  then $\mu(A)\in \{0,1\}$.

Let $P:D(P)\rightarrow \mathcal A$ be a pattern with domain $D(P)\in {\mathcal F}(G)$ containing the unit element. The set $A_P=\{ \omega \in \Omega\ \vert\ \omega\vert_{D(P)}=P \}$ consists of all colourings, which coincide with $P$ on $D(P)$. Let $f_P:\Omega\rightarrow \{0,1\}$ be the indicator function of $A_P$. Then we have
\begin{equation}\label{percineq}
 \sum\limits_{\gamma\in Q_{n,\diam(D(P))}}f_P(\omega\gamma^{-1})\leq\sharp_P({\omega}\vert_{ Q_n})\leq \sum\limits_{\gamma\in Q_{n}}f_P(\omega\gamma^{-1}).
\end{equation}
For a given F\o{}lner sequence $(Q_n)$ this immediately proves
\[
 \nu_P=\lim\limits_{n\rightarrow \infty}\frac{\sharp_P({\omega}\vert_{ Q_n})}{\vert Q_n \vert}
 \leq\lim\limits_{n\rightarrow \infty} \frac{1}{\vert Q_n\vert} \sum\limits_{\gamma\in Q_{n}}f_P(\omega\gamma^{-1}).
\]
One the other hand we have
\[
 \frac{1}{\vert Q_n\vert} \sum\limits_{\gamma\in \partial_{int}^R Q_n} f_P(\omega\gamma^{-1})\leq \frac{\vert\partial_{int}^R Q_n\vert}{\vert Q_n\vert}\rightarrow 0,\ n\rightarrow \infty
\]
for all $R>0$, which implies
\[
 \nu_P=\lim\limits_{n\rightarrow \infty}\frac{\sharp_P({\omega}\vert_{ Q_n})}{\vert Q_n \vert}
 \geq\lim\limits_{n\rightarrow \infty} \frac{1}{\vert Q_n\vert} \sum\limits_{\gamma\in Q_{n,\diam(D(P))}}f_P(\omega\gamma^{-1})
 =\lim\limits_{n\rightarrow \infty} \frac{1}{\vert Q_n\vert} \sum\limits_{\gamma\in Q_{n}}f_P(\omega\gamma^{-1})
\]
using the first inequality in (\ref{percineq}). If $(Q_n)$ is even a tempered F\o{}lner sequence, we can apply Lindenstrauss' pointwise ergodic theorem from \cite{Lindenstrauss-01}. Note that every F\o{}lner sequence has a tempered subsequence. Lindenstrauss' theorem proves that the above limit exists almost surely, in fact
\[
 \lim\limits_{n\rightarrow \infty} \frac{1}{\vert Q_n\vert} \sum\limits_{\gamma\in Q_{n}}f_P(\omega\gamma^{-1})= \EE (f_P)\hspace{1cm}\mbox{a.s.}
\]
Therefore for almost all configurations $\omega\in\Omega$ the frequency $\nu_P$ of an arbitrary pattern $P\in \mathcal P$ equals the expectation value $\EE (f_P)$. Thus we get
\[
\nu_P=\EE (f_P)=\mu(\omega \in A_P)= \vert {\mathcal A}\vert^{-\vert D(P)\vert} \hspace{1cm}\mbox{a.s.}
\]

\begin{Remark}\label{remark_pastur}
In the main theorem in section 4 we verified the uniform convergence of a sequence of distribution functions to a limit function.
The integrated density of states was then defined to be equal to this limit.
There is an alternative way to define the IDS in operator-algebraic terms. In fact it is in many cases possible to state a specific formula defining this distribution function. This relation is sometimes called Pastur-Shubin trace formula and applies to the mentioned models of (deterministic) periodic operators and (random) percolation Hamiltonians. In the deterministic case this formula reads
\[N(E)=\vert Q \vert^{-1}  \Tr  [ \chi_Q P((-\infty,E)) ]. \]
Here the set $Q$ is a fundamental domain of the group $G$ with respect to some finite index subgroup of $G$ (e.g. $G$ itself) and $P(I)$ is the spectral projection of the operator $H$ on an interval $I\subseteq \RR$. Note that the choice of the fundamental domain $Q$ does not influence the IDS. For a percolation Hamiltonian $H_\omega$ an expectation value enters the formula
\[N(E)=\vert Q \vert^{-1} \EE \left\{ \Tr  [ \chi_Q  P_\omega ((-\infty,E))]\right\}.\]
See \cite{LenzV-09} for a discussion of a general class of operators for which uniform convergence of the IDS and a Pastur-Shubin formula can be proven.
\end{Remark}

\subsection{Continuous dependence of the IDS on the Hamiltonian}

In the following we describe a continuity property of the IDS with respect to the change
of coefficients of the Hamiltonian. For this purpose we fix a graph $G$, a finite  alphabet ${\mathcal A}$,
a colouring ${\mathcal C}\colon G \to {\mathcal A}$, and sequences $(U_j)$ and  $(Q_n)$ as in Assumption \ref{ass:first}.
In particular we assume that the frequencies $\nu_P$ exist for all patterns $P\in \bigcup_{n\in \NN}{\mathcal P}(Q_n)$.
For a fixed $R\in \NN$ we consider the set of ${\mathcal C}$-invariant, selfadjoint operators on $\ell^2(G)$
with overall range $R$, and denote it by $L_{R, \mathcal C}$.

We will denote the trace class norm of a matrix $A$ by  $\|A\|_{\Tr} $.
Consider $\epsilon >0$ and two operators $H,G \in L_{R, \mathcal C}$ with the property
\[
 \forall \ x,y \in G\quad : \quad  |H(x,y) - G(x,y) | \le \epsilon
\]
and set $A=H-G$ respectively $A[U]= H[U]-G[U]$. By the decomposition $A[U] = A_{+}[U] -A_-[U]$, $A_{+}[U],A_{-}[U] \ge 0$
into the positive and negative part of $A[U]$ and
the triangle inequality of the trace class norm we have
\begin{align*}
 \|A[U]\|_{\Tr}
& \le \|A_{+}[U]\|_{\Tr} + \|A_{-}[U]\|_{\Tr}
\\
&\le \Tr A_{+}[U]  + \Tr A_{-}[U]
 \\
&= \sum _{x\in U} A_{+}[U](x,x)  + \sum _{x\in U} A_{-}[U](x,x)
 \\
&\le  \sum _{x\in U} \|A_{+}[U]\|  + \sum _{x\in U} \|A_{-}[U]\|
\\
&\le  |U| \,\|A_{+}[U]\|  + |U| \,\|A_{-}[U]\|.
\end{align*}
Since the positive and negative part of a selfadjoint operator
are obtained by the multiplication with an orthogonal projection, we obtain:
\[
\|A_{+}[U]\|  + \|A_{-}[U]\| \le  2\|H[U]-G[U] \|  .
\]
Now we estimate similarly as in Remark \ref{remark_bounded} using $\Vert A[U]\Vert \leq \Vert A\Vert$ and $\vert p_xAi_y\vert=\vert A(x,y)\vert \leq \epsilon $
\[
2\|H[U]-G[U] \|  \le  2   \vert B_R\vert \epsilon .
\]

For the spectral shift function of finite matrices
\[
 \xi(E,H[U], G[U]):= \Tr \left[  \chi_{(-\infty, E]}(H[U]) -   \chi_{(-\infty, E]}(G[U]) \right]
\]
it is well known that
\[
 \int_\RR dE \left | \xi(E,H[U], G[U])\right|
\le
\| H[U]-G[U]\|_{\Tr},
\]
 cf.~\cite{BirmanY-93}.
By the above bound we have
\[
 \int_\RR dE \left | \xi(E,H[U], G[U])\right|
\le 2\vert B_R\vert \, \epsilon \, |U| .
\]
Now for any $C^1$-function $f: \RR \to \RR $ with compact support
 we have
\[
\Tr  f((H[U]) ) - \Tr  f((G[U]) )  = \int_\RR  dE f'(E)  \,\xi(E,H[U], G[U])\  \le \Vert f'\Vert_\infty \int_\RR  dE   \,|\xi(E,H[U], G[U])| \ .
\]
For given $U\in \mathcal F(G)$ we set $N_H(f):=\int_\RR f(E)d N_H(E)$ and $N_H^U(f):=\int_\RR f(E)d N_H^U(E)=\Tr f(H[U])/|U|$, where $N_H^U(E):={n(H[U])}/{\vert U\vert}$ and consider the following difference
\begin{equation*}
 \left|N_H(f)-N_G(f)\right|
\le
\left|N_H(f)-N_H^U(f) \right| + \left|N_H^U(E)  -  N_G^U(E)\right|  + \left|N_G^U(E)  -  N_G(E)\right| .
\end{equation*}
We estimate the term $|N_H^U(f)-N_G^U(f)|$ independently of $U$ using the above calculations
\begin{equation*}
 |N_H^U(f)-N_G^U(f)|
=\frac{1}{|U|} |\Tr f((H[U]))- \Tr f((G[U]))|
\leq 2\Vert f'\Vert_\infty \vert B_R\vert \epsilon.
\end{equation*}
Let $ k\in \NN$ be such that the support of $f$ is contained in  the bounded $I=(-k,k)\subseteq \RR$ and consider
\[
 \left|N_H(f)-N_H^{U}(f) \right|
= \left| \int_I f(E)d N_H(E)- \int_I f(E) dN_H^{U}(E)    \right|
= \left| \int_I f(E)\, d\left(N_H(E)-N_H^{U}(E)\right)    \right|.
\]
We apply partial integration for Stieltjes integrals and obtain
\begin{align*}
  \int_\RR f(E)\, d\left(N_H(E)-N_H^{U}(E)\right)
&= f(E)\left(N_H(E)-N_H^{U}(E)\right)\Big\vert_{-k}^k - \int_I \left( N_H(E)-N_H^{U}(E) \right) d f(E) \\
&= \int_I \left(N_H^{U}(E)-N_H(E)\right) f'(E) d E .
\end{align*}
Now choose $U$ to be an element of the sequence $(U_{j,R})$ and obtain
\[
 \left|N_H(f)-N_H^{U_{j,R}}(f) \right|
\leq \Vert N_H-N_H^{U_{j,R}} \Vert_\infty \int_I \vert f'(E)\vert dE
\leq \delta(j,n) \Vert f' \Vert_\infty  \, 2k
\]
where
\[
\delta(j,n) =
8\frac{\vert \partial^RQ_n \vert}{\vert Q_n\vert}
+ (1+4 \vert B_R\vert )\frac{\vert \partial^{\diam(Q_n)}U_j  \vert}{\vert U_j\vert}
+\sum\limits_{P\in {\mathcal P}(Q_n)} \Big \vert \frac{\sharp_P ({\mathcal C}\vert_{U_j})}{\vert U_j\vert }- \nu_P  \Big \vert
+ \frac{\vert\partial_{int}^{R}U_j\vert}{ \vert U_j\vert }
\]
 denotes the upper bound in Theorem \ref{ids_unif} on the uniform convergence of the approximants.
As $\lim_{n\to\infty}\lim_{j\to\infty}\delta(j,n)=0$ we finally get
\[
 \left|N_H(f)-N_H^{U_{j,R}}(f) \right|  \rightarrow 0
\]
for $j \to \infty$, which proves
\[
 |N_H(f)-N_G(f)| \le 2 \Vert f' \Vert_\infty \vert B_R\vert \, \epsilon .
\]
for any compactly supported $f$ in $C^1$.

As a conclusion, we see that the IDS depends continuously (in the weak topology) on the variation of the coefficients in the matrix of the
Hamiltonian $H$.

\section{Open problems}
Here we discuss certain questions which arise naturally in connection with
our theorems.
\subsection*{Geometric properties of F\o lner sequences}
\begin{itemize}
 \item  The first question concerns the geometric requirement on the group $G$,
respectively the Cayley graph. It needed to contain a F\o lner sequence $(Q_n)$ with the additional property  that each set $Q_n$ symmetrically tiles the group, i.e. for each $n\in\NN$ one can find a set $K_n=K_n^{-1}\subseteq G$ such that $\{Q_n g\ \vert\ g\in K_n\}$ is a partition of $G$.
It is clear that this property holds if
there exists a sequence of subgroups $(G_n)$ such that one can choose the associated fundamental domains $(Q_n)$ as F\o lner sequence.

Thus one is led to ask whether the set of groups containing a F\o lner sequence of symmetrically tiling sets can be efficiently characterised.
From \cite{Weiss-01, Krieger-07} one can infer that all groups which are both amenable and residually finite satisfy this condition.
However a necessary condition is still missing.
For a discussion of such problems see for instance \cite{OrnsteinW-87, Weiss-01}.
 \item
Let us discuss some related properties of growing subsets of $G$.
As mentioned,  for any finite index subgroup $G'$ of $G$ each fundamental domain $F$ is a monotile of $G$.
One could ask if the converse holds true. More precisely, if $F$ tiles $G$: is there a finite index subgroup $G'$ of $G$?
Is $F$ even an fundamental domain of $G'$?
Another question is based on the extension of \cite{Weiss-01} due to Krieger \cite{Krieger-07} who showed: \emph{Assume the existence of a decreasing sequence $(H_n)_{n\in \NN}$ of finite index subgroups of $G$ such that $\bigcap_{n\in\NN}H_n=\{e\}$. Then there exists a sequence of subgroups $(G_n)_{n\in \NN}$ of $G$ and a sequence of associated fundamental domains $(Q_n)_{n\in \mathbb N}$ such that $(Q_n)$ is a F\o{}lner sequence.} Again, one can ask whether the converse holds. Given a group $G$, a sequence of subgroups $(G_n)$ and F\o lner sequence $(Q_n)$ such that each $Q_n$ is a fundamental domain of $G_n$ with respect $G$, does it follow that:
\begin{itemize}
\item $G_\infty := \bigcap_{n\in\NN} G_n$ is the unit element and
\item there exists an decreasing subsequence in $(G_n)$?
\end{itemize}
 \item The next open questions concerns the assumption that each element of the F\o lner sequence tiles the group with a symmetric grid. One could suppose that if there is some (not necessarily symmetric) grid $K\subseteq G$ along which $Q$ tiles $G$, one can find a set set $\tilde K=\tilde K^{-1}$ such that $Q$ tiles $G$ along $\tilde K$ as well.\newline
 Weiss proved in \cite{Weiss-01} that all groups that are built up by extensions using Abelian groups or finite groups obey F\o lner sequences of monotiles. This includes in particular the set of all solvable groups. Therefore all these groups whould fit in our setting as well if the above conjecture holds.
\end{itemize}

\subsection*{Approximation of the IDS with respect to different norms}
\begin{itemize}

\item The main result of section three is Theorem \ref{ergthm},
which is an ergodic type theorem for certain functions mapping finite subsets of the vertices of a graph $G$ into a Banach space $X$.
In the sequel we used this to obtain uniform estimates for the integrated density of states by setting $X=\mathcal B(\RR)$,
the Banach space of the right-continuous, bounded functions $f:\RR\rightarrow \RR$ equipped with supremum norm.\newline
One could think of other applications of Theorem \ref{ergthm} considering different Banach spaces, e.g. $X=L^p(\RR)$ for $p\in[1,\infty)$.
This would lead to estimates for the speed of convergence in terms of $L^p$-norms.
This seems to be a fruitfull approach when considering Schr\"odinger operators on $\RR^d$, cf.~\cite{GruberLV}.
More generally, our abstract approach would even allow to study convergence of counting functions which are elements of an appropriately chosen Fr\'echet space, cf.~Remark \ref{Frechet}.

\end{itemize}

\subsection*{Extension of the methods for other classes of operators}
\begin{itemize}
\item The operators which we considered in Section 4 were defined on $\ell^2(G,\mathcal H)$, where $\mathcal H$ was a finite dimensional Hilbert space. Physically, $\mathcal H$ encodes the internal degrees of freedom associated to a fixed vertex of $G$. There are interesting situations where the internal space is infinitely dimensional, for instance $\mathcal H=L^2(U)$, where $U$ is an open subset of $\RR^d$, or $\mathcal H=\ell^2(\ZZ)$.\newline
Our results cannot be directly applied to such models, since we used the finite dimension of $\mathcal  H$ at several steps in the proof. It would be interesting to implement a version of our methods which work also for infinite dimensional $\mathcal H$.

\item
A strong restriction on the type of operators which we can treat with the methods of this paper is that they have to be of finite hopping range.
The same restriction applies to  most of the papers  mentioned in the introduction, for instance \cite{MathaiSY-03,KlassertLS-03,Veselic-05b,LenzV-09}.
In particular it is not clear how the results of these papers can be extended to the case when the difference operator on $\ell^2(G)$
has a matrix kernel with very fast off-diagonal decay, albeit is not of finite range.

\item
In the context of \emph{stochastic} difference operators the methods mentioned in the previous item
may be indeed applicable even it the finite hopping range condition is violated.
This concerns operators such that the probability that a matrix element is non-zero decays to zero
as one moves away from the diagonal. An example of an operator falling into this class is
the adjacency operator associated to  a long range percolation graph, cf.~\cite{AntunovicV-09}.

\end{itemize}

\section{Appendix}
For the convenience of the reader we provide here certain rather standard technical results. One of them is the proof of Lemma \ref{lemma_abc} while the others are used for the verification of the almost additivity of the cumulative eigenvalue counting function in section four.

\begin{proof}[Proof of Lemma \ref{lemma_abc}] For any $Q\in {\mathcal F}(G)$ the equivalences
\begin{align*}
   g\in \bigcup\limits_{s\in B_R}(Q\setminus  Qs) \quad &\Leftrightarrow \quad \exists  s\in B_R : g\in Q, g\not\in Qs \\
 \Leftrightarrow\quad \exists  s\in B_R : g\in Q, gs^{-1}\not\in Q \quad &\Leftrightarrow \quad 1\leq d_S(g,G\setminus Q)\leq R  \quad \Leftrightarrow \quad g\in \partial_{int}^R(Q)
 \end{align*}
hold. This implies the first part of (a), the second part follows by a similar argument:
 \begin{equation*}
    g\in \hspace{-0.2cm}\bigcup\limits_{s\in B_R}(Qs\setminus  Q)
\quad \Leftrightarrow  \quad \exists  s\in B_R : g\not\in Q, gs^{-1}\in Q
\quad \Leftrightarrow  \quad  1\leq d_S(g, Q)\leq R
\quad \Leftrightarrow  \quad g\in \partial_{ext}^R(Q).
 \end{equation*}
 We turn to claim (b). Using (a) we obtain both
\[0\leq \frac{\vert \partial_{int}^R U_{j}\vert}{\vert U_j\vert}=\frac{\big\vert \bigcup\limits_{s\in B_R}(U_j\setminus  U_j s) \big\vert}{\vert U_j\vert}
\leq \frac{ \sum\limits_{s\in B_R}\vert U_j\setminus  U_j s\vert }{\vert U_j\vert}
=   \sum\limits_{s\in B_R}\frac{\vert U_j\setminus  U_j s\vert }{\vert U_j\vert}
=   \sum\limits_{s\in B_R}\frac{\vert U_j s\setminus  U_j\vert }{\vert U_j\vert}\rightarrow 0,\]
and
\[
 0\leq  \frac{\vert \partial^R_{ext} U_j \vert}{\vert U_j\vert}
= \frac{\vert \bigcup\limits_{s\in B_R}(U_js\setminus  U_j)\vert}{\vert U_j\vert}
\leq  \frac{ \sum\limits_{s\in B_R}\vert U_js\setminus  U_j \vert}{\vert U_j\vert}
= \sum\limits_{s\in B_R} \frac{ \vert U_js\setminus  U_j \vert}{\vert U_j\vert}\rightarrow 0
\]
for $j\rightarrow \infty$, as $(U_j)_{j\in \NN}$ is a F\o{}lner sequence. This verifies the first two limits in (b), the third limit follows.
Since the proofs of (c) and (d) are similar, we prove only (c). Let $(U_j)$ be a F\o{}lner sequence. We firstly show
 \begin{equation}\label{lemma_abc2}
  US\subseteq U^1\hspace{1cm}\mbox{ for all } U\in \mathcal F(G).
 \end{equation}
  Choose $x\in US$ arbitrary, then there exist $u\in U$ and $s\in S$ such that $us=x$. The case $x\in U$ is clear since $U\subseteq U^1$. If $x\notin U$ we get $d_S(u,x)=d_S(u,us)=1$ which implies $x\in U^1$ and (\ref{lemma_abc2}) follows.
  Now we claim
  \begin{equation}\label{lemma_abc3}
  \partial_{ext}^1 (U_{R})\subseteq \partial_{int}^{R} (U)\hspace{1cm}\mbox{ for all } U\in \mathcal F(G).
 \end{equation}
  For any $x\in\partial_{ext}^1 (U_{R})$ we have $d_S(x,U_R)=1$, which means that there exists $y\in U_R$ such that $d_S(x,y)=1$. Suppose that $x\notin U$ then we have $y\in \partial^1_{int} (U)$ since $y\in U$. This would imply that $y\notin U_R$. Thus we have $x\in U$ and $d_S(x,G\setminus U)\leq R$ since $x\notin U_R$ which implies (\ref{lemma_abc3}).
 Observe that by using (b) we get $\lim_{j\rightarrow \infty}\vert U_j\vert^{-1} \vert U_{j,R}\vert=1$.
 Hence there exists a constant $j_0\in \NN$ such that $\vert U_j\vert^{-1} \vert U_{j,R}\vert\geq \frac{1}{2}$ for all $j\geq j_0$. Now the above facts imply
 \[
0\leq\frac{\vert U_{j,R}S \setminus U_{j,R}\vert}{\vert U_{j,R}\vert}
\stackrel{(\ref{lemma_abc2})}{\leq} \frac{\vert (U_{j,R})^1\setminus U_{j,R}\vert}{\vert U_{j,R}\vert}
=    \frac{\vert \partial^1_{ext}(U_{j,R})\vert}{\vert U_{j,R}\vert}
\stackrel{(\ref{lemma_abc3})}{\leq}  \frac{\vert \partial^{R}_{int}(U_{j})\vert}{\vert U_{j,R}\vert}
\leq 2\frac{\vert \partial^{R}_{int}(U_{j})\vert}{ \vert U_{j}\vert}
\]
for all $j\geq j_0$. Since $(U_j)_{j\in \mathbb N}$ is a F\o{}lner sequence applying (b) finishes the proof.
\end{proof}

\begin{Proposition}\label{prop1}
 Let $A$ and $C$ be selfadjoint operators in a finite dimensional Hilbert space, then we have
\[\vert n(A)(E)-n(A+C)(E) \vert \leq \rank(C)\]
for all $E\in\mathbb R$.
\end{Proposition}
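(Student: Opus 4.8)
The plan is to reduce the statement to a Courant--Fischer interlacing inequality for the ordered eigenvalues and then to translate this into the language of counting functions. Throughout, let $V$ be the ambient Hilbert space, $N=\dim V$, $r=\rank(C)$, and write $\lambda_1(B)\le\cdots\le\lambda_N(B)$ for the eigenvalues of a selfadjoint operator $B$ listed with multiplicity, adopting the conventions $\lambda_j(B):=-\infty$ for $j\le 0$ and $\lambda_j(B):=+\infty$ for $j>N$. The only structural input is the min--max principle $\lambda_k(B)=\min_{\dim S=k}\ \max_{0\ne v\in S}\langle Bv,v\rangle/\|v\|^2$. The relevant feature of $C$ is its kernel $K:=\ker C$, which has $\dim K=N-r$ and on which the quadratic forms of $A$ and of $A+C$ agree: $\langle (A+C)v,v\rangle=\langle Av,v\rangle$ for all $v\in K$.

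The key step is the inequality $\lambda_k(A+C)\le\lambda_{k+r}(A)$ for all $k$. To see it, I would take a subspace $S$ with $\dim S=k+r$ realising $\lambda_{k+r}(A)$ in the min--max; since $\dim(S\cap K)\ge (k+r)+(N-r)-N=k$, I can choose a $k$-dimensional subspace $S'\subseteq S\cap K$, and then
\[
\lambda_k(A+C)\ \le\ \max_{0\ne v\in S'}\frac{\langle(A+C)v,v\rangle}{\|v\|^2}\ =\ \max_{0\ne v\in S'}\frac{\langle Av,v\rangle}{\|v\|^2}\ \le\ \max_{0\ne v\in S}\frac{\langle Av,v\rangle}{\|v\|^2}\ =\ \lambda_{k+r}(A).
\]
Applying the same inequality to the pair $A=(A+C)+(-C)$, which has the same kernel $K$, gives $\lambda_k(A)\le\lambda_{k+r}(A+C)$, that is $\lambda_{k-r}(A)\le\lambda_k(A+C)$. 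Hence
\[
\lambda_{k-r}(A)\ \le\ \lambda_k(A+C)\ \le\ \lambda_{k+r}(A)\qquad\text{for all }k.
\]

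Finally I would translate this to counting functions. Fix $E\in\RR$ and set $m:=n(A+C)(E)$, so that $\lambda_m(A+C)\le E<\lambda_{m+1}(A+C)$ (with the conventions above covering the extreme cases $m=0$ and $m=N$). The left-hand interlacing bound with $k=m$ yields $\lambda_{m-r}(A)\le E$, hence $n(A)(E)\ge m-r$; the right-hand bound with $k=m+1$ yields $\lambda_{m+1+r}(A)>E$, hence $n(A)(E)\le m+r$. Together these give $|n(A)(E)-n(A+C)(E)|\le r=\rank(C)$. I do not expect a genuine obstacle here; the only things to handle with a little care are the index bookkeeping when $k\pm r$ leaves the range $\{1,\dots,N\}$ and the elementary dimension count $\dim(S\cap K)\ge k$.
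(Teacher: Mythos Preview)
Your proof is correct. The translation from the interlacing inequality $\lambda_{k-r}(A)\le\lambda_k(A+C)\le\lambda_{k+r}(A)$ to counting functions is the same as in the paper, but the route to that inequality is genuinely different. The paper proceeds inductively: it first treats a rank-one perturbation $sB=s|\xi\rangle\langle\xi|$, using min--max with the constraint $\phi\perp\xi$ to obtain $\lambda_{k-1}(A)\le\lambda_k(A+sB)\le\lambda_{k+1}(A)$, and then decomposes $C$ spectrally as a sum of $r$ rank-one terms and iterates. You instead go in one shot, exploiting the full kernel $K=\ker C$ and the dimension bound $\dim(S\cap K)\ge k$ for any $(k+r)$-dimensional $S$; since the quadratic forms of $A$ and $A+C$ agree on $K$, a single application of min--max suffices, with the reverse inequality following by swapping the roles of $A$ and $A+C$. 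Your argument is shorter and avoids both the eigendecomposition of $C$ and the induction; the paper's approach, on the other hand, makes the rank-one mechanism completely explicit and is perhaps closer to the classical textbook presentation.
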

Before starting the proof of the proposition we mention some general facts concerning selfadjoint operators on finite dimensional Hilbert spaces. Given a Hilbert space of dimension $n<\infty$ with a fixed basis the above operators are hermitian matrices of dimension $n\times n$. Since $n(A)$ is defined as the eigenvalue counting function, we are interested in the relation between the size of the eigenvalues of $A$ and $A+C$. This will be given with the help of the minmax principle of Courant and Fischer: Let $\lambda_1(A)\leq\dots\leq \lambda_n(A)$ be the eigenvalues of a hermitian matrix $A$, then the following equations hold
\begin{equation}\label{minmax1}
\lambda_k(A)=\min\limits_{\psi_1,\cdots,\psi_{n-k}\in \mathbb C^n}\max\limits_{  \underset {\Vert \phi \Vert=1} {\phi \perp \psi_1,\cdots,\psi_{n-k}}  }\langle \phi, A\phi\rangle
\end{equation}
and
\begin{equation}\label{minmax2}
\lambda_k(A)=\max\limits_{\psi_1,\cdots,\psi_{k-1}\in \mathbb C^n}\min\limits_{\underset{\Vert \phi \Vert=1}{\phi \perp \psi_1,\cdots,\psi_{k-1}}   }\langle \phi, A\phi\rangle
\end{equation}
For the proof of this see for example \cite{Horn-90}, where it is also stressed that the minimising respectively maximising vectors are exactly the eigenvectors: for all $k=1,\dots n$ let $f_k$ be the a normalised eigenvector for the eigenvalue $\lambda_k$, then we get
\begin{equation}\label{minmax3}
\lambda_k(A) =\min\limits_{\underset {\Vert \phi \Vert=1}{\phi \perp f_1,\cdots,f_{k-1}}}\langle \phi, A\phi\rangle    = \max\limits_{\underset{\Vert \phi \Vert=1}{\phi \perp f_{k+1},\cdots,f_{n}}}\langle \phi, A\phi\rangle
\end{equation}
\begin{proof}[Proof of Proposition \ref{prop1}]
For a given vector $\xi\in \CC^n$ the matrix $B:=  \vert \xi\rangle \langle \xi \vert:=\xi \xi^*$ is hermitian and of rank one. Given an arbitrary element $s\in \CC$ the equality $\langle \phi,sB\phi \rangle= s\langle \phi,\xi\rangle\langle \xi,\phi \rangle= s\vert\langle \phi,\xi\rangle\vert^2$ holds for all $\phi \in \mathbb C^n$. By using (\ref{minmax2}) we get
\begin{eqnarray*}
 \lambda_k(A+sB)&=&\max\limits_{\psi_1,\cdots,\psi_{k-1}\in \mathbb C^n}\min\limits_{\underset{\Vert \phi \Vert=1}{\phi \perp \psi_1,\cdots,\psi_{k-1}}}\langle \phi, A\phi\rangle+s\langle \phi, B\phi\rangle\\
&=&\max\limits_{\psi_1,\cdots,\psi_{k-1}\in \mathbb C^n}\min\limits_{\underset{\Vert \phi \Vert=1}{\phi \perp \psi_1,\cdots,\psi_{k-1}}}\langle \phi, A\phi\rangle+s \vert\langle\phi,\xi\rangle\vert ^2.
\end{eqnarray*}
We estimate the maximum from below by setting $\psi_i=f_i$, $i=1,\dots, k-2$ and $\psi_{k-1}=\xi$, where $f_i, i=1,\dots,k-1$ are the normalised eigenvectors
\[\lambda_k(A+sB)\geq\min\limits_{\underset{\Vert \phi \Vert=1}{\phi \perp f_1,\cdots,f_{k-2},\xi}}\langle \phi, A\phi\rangle+s\vert\langle\phi,\xi\rangle\vert^2
\geq \min\limits_{\underset{\Vert \phi \Vert=1}{\phi \perp f_1,\cdots,f_{k-2}} }\langle \phi, A\phi\rangle
\stackrel{(\ref{minmax3})}{=}\lambda_{k-1}(A).\]
Similarly, by using the relation (\ref{minmax1}) we get an upper bound for the $k$-th  eigenvalue of $A+sB$
\[
\lambda_k(A+sB)=\min\limits_{\psi_1,\cdots,\psi_{n-k}\in \mathbb C^n}\max\limits_{\underset{\Vert \phi \Vert=1}{\phi \perp \psi_1,\cdots,\psi_{n-k}}}\langle \phi, A\phi\rangle+s\vert\langle\phi,\xi\rangle\vert^2
\leq\max\limits_{\underset{ \Vert \phi \Vert=1}{\phi \perp f_{k+2},\cdots,f_{n},\xi} }\langle \phi, A\phi\rangle
\leq\lambda_{k+1}(A).
\]
This gives for arbitrary $s\in \CC$ and $\xi\in \CC^n$, that for the matrix $B:=\vert \xi\rangle \langle \xi \vert$ the inequalities
\begin{equation}\label{rank1}
        \lambda_{k-1}(A)\leq\lambda_k(A+sB)\leq\lambda_{k+1}(A).
\end{equation}
hold. Now let $C$ be a hermitian matrix of rank $m$. With the eigendecomposition we get $C=\sum_{i=1}^n s_i \xi_i \xi_i^*$, where $s_i$ are the eigenvalues and $\xi_i$ the normalised eigenvectors for $i=1,\dots,n$. Since the rank of $C$ equals $m$, there are exactly $m$ non-zero eigenvalues in the spectrum. Thus the above sum consists of $m$ summands. Induction over the rank of $C$, leads to
\begin{equation}\label{eig_ineq}
 \lambda_{k-m}(A)\leq \lambda_k\Big(A+\sum\limits_{i=1}^m t_iB_i\Big)\leq \lambda_{k+m}(A),
\end{equation}
where for each $i=1,\dots,m$ the matrix $B_i$ is of the form $B_i = \vert \xi_i\rangle \langle \xi_i \vert$ for some eigenvector $\xi_i$ and $t_i$ is the associated eigenvalue. From (\ref{rank1}) we know that this is true for $m=1$. The inductive step follows with
\[\lambda_{k-m-1}(A)\leq\lambda_{k-1}\Big(A+\sum\limits_{i=1}^m t_iB_i\Big)\leq \lambda_k\Big(A+\sum\limits_{i=1}^{m+1} t_iB_i\Big)\leq \lambda_{k+1}\Big(A+\sum\limits_{i=1}^{m} t_iB_i\Big)\leq \lambda_{k+m+1}(A).\]
Finally we will translate (\ref{eig_ineq}) into the language of the eigenvalue counting functions. Setting $k:=n(A+C)(E)$, leads to $\lambda_k(A+C)\leq E < \lambda_{k+1}(A+C)$, which implies
\[\lambda_{k-m}(A)\leq E < \lambda_{k+m+1}(A)\]
by using  (\ref{eig_ineq}). This gives us $k-m\leq n(A)(E)\leq k+m$ which proves
\begin{equation*}
 n(A)(E)-m \leq n(A+C)(E)\leq n(A)(E) + m
\end{equation*}
for all $E\in \mathbb R$ and the proposition follows.

\end{proof}
Applying this proposition on operators defined on a finite dimensional Hilbert space and their projections on a subspace, leads to the following proposition, which has already been proven in \cite{LenzS-06}.

\begin{Proposition}\label{prop2}
 Let $V$ be a finite dimensional Hilbert space and $U$ a subspace of $V$. If $i:U\rightarrow V$ is the inclusion and $p:V\rightarrow U$ the orthogonal projection, we have
\[\vert n(A)(E)-n(pAi)(E)\vert\leq 4 \cdot \rank (1-i\circ p)\]
for all selfadjoint operators $A$ on $V$ and all energies $E\in \mathbb R$.
\end{Proposition}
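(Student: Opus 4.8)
The plan is to reduce, by two applications of the triangle inequality, the comparison of $n(A)$ (a counting function on $V$) with $n(pAi)$ (a counting function on the subspace $U$) to the single–perturbation estimate of Proposition~\ref{prop1}. Write $P:=i\circ p$ for the orthogonal projection of $V$ onto $U$ and $q:=1-P$ for the orthogonal projection onto $U^\perp$, and set $m:=\rank(q)=\dim U^\perp$, so that the assertion to be proved is $|n(A)(E)-n(pAi)(E)|\le 4m$ for all $E\in\RR$. The auxiliary object is the selfadjoint operator $B:=PAP=i(pAi)p$ on $V$.

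First I would compare $n(B)$ with $n(pAi)$. Since $B$ annihilates $U^\perp$ and acts on $U$ as $pAi$, the list of eigenvalues of $B$ (as an operator on $V$) consists of the eigenvalues of $pAi$ (as an operator on $U$) together with $m$ additional copies of $0$. Consequently $n(B)(E)=n(pAi)(E)$ for $E<0$ and $n(B)(E)=n(pAi)(E)+m$ for $E\ge 0$, whence $|n(B)(E)-n(pAi)(E)|\le m$ for every $E\in\RR$.

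Next I would compare $n(A)$ with $n(B)$. Expanding $A=(P+q)A(P+q)$ gives $A-B=PAq+qAP+qAq$; each of these three summands has $q$ as a left or a right factor, hence has rank at most $\rank(q)=m$, so $\rank(A-B)\le 3m$. As $A$ and $B=PAP$ are both selfadjoint, Proposition~\ref{prop1} applies with the perturbation $C:=A-B$ and yields $|n(A)(E)-n(B)(E)|\le\rank(C)\le 3m$ for all $E\in\RR$.

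Combining the two estimates by the triangle inequality gives $|n(A)(E)-n(pAi)(E)|\le 3m+m=4m=4\,\rank(1-i\circ p)$, which is the claim. There is no genuine obstacle here; the only point requiring care is the first step, namely keeping track of the $m$ spurious zero eigenvalues created when $pAi$ is extended by zero from $U$ to all of $V$, and correspondingly that $n(B)$ and $n(pAi)$ count eigenvalues in spaces of different dimension.
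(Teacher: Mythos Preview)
Your proof is correct and follows essentially the same route as the paper: introduce the intermediate operator $PAP$ on $V$ (your $B$), bound $|n(A)-n(PAP)|$ by $3\rank(1-P)$ via Proposition~\ref{prop1} and the decomposition $A-PAP=(1-P)AP+PA(1-P)+(1-P)A(1-P)$, and bound $|n(PAP)-n(pAi)|$ by $\dim U^\perp$ using that $PAP=pAi\oplus 0_{U^\perp}$. The only difference is cosmetic (your more explicit tracking of the $m$ extra zero eigenvalues).
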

\begin{proof}
We set $P:=i\circ p: V\rightarrow V$ and use the triangle inequality to obtain
\begin{equation} \label{prop2_1}
\vert n(A)(E)-n(pAi)(E)\vert\leq \vert n(A)(E)-n(PAP)(E)\vert  +\vert n(PAP)(E)  -n(pAi)(E)\vert.
\end{equation}
With the help of the equality
\[A-PAP = (1-P)AP+PA(1-P)+(1-P)A(1-P)\]
and the previous proposition we get
\begin{eqnarray}
\vert n(A)(E)-n(PAP)(E)\vert &\leq&  \rank(PAP-A)\nonumber\\
&=& \rank((1-P)AP+PA(1-P)+(1-P)A(1-P))\label{prop2_2}\\
&\leq& 3\rank (1-P).\nonumber
\end{eqnarray}
Let $U^\perp$ denote the orthogonal complement of $U$ and define $0_{U^\perp}:U^\perp\rightarrow U^\perp$ with $f\mapsto 0$. It is obvious that
\[PAP=i\circ p\circ A\circ i\circ p = p\circ A\circ i \oplus 0_{U^\perp}\]
holds and therefore we have
\[\vert n(PAP)(E)  -n(pAi)(E)\vert=\vert n(p\circ A\circ i \oplus 0_{U^\perp})(E)  -n(pAi)(E)\vert=\vert n(0_{U^\perp})(E) \vert\leq \dim(U^\perp).\]
Note that the dimension of $U^\perp$ equals the rank of $(1-P)$. Using this and (\ref{prop2_2}) to estimate (\ref{prop2_1}) finishes the proof.
\end{proof}

The next proposition is a well known dimension argument from linear algebra which we will state for completeness reasons.

\begin{Proposition}\label{prop_help}
 Let $H$ be a finite dimensional Hilbert space, $U$, $V$ subspaces of $H$ with $\dim U > \dim V$ then $\dim V^\perp \cap U > 0$.
\end{Proposition}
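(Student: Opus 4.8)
The plan is to argue by a rank--nullity (dimension) count applied to the orthogonal projection onto $V$. Let $P\colon H\to H$ denote the orthogonal projection with range $V$, and consider its restriction $P|_U\colon U\to V$. The kernel of this map is precisely $\{u\in U : Pu=0\}=U\cap V^\perp$, since $\ker P=V^\perp$. This identification is the only point where one uses that $H$ is a Hilbert space (so that the orthogonal complement and orthogonal projection are available); everything else is linear algebra over an arbitrary field.

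Next I would apply the rank--nullity theorem to $P|_U$:
\[
\dim U = \dim \ker(P|_U) + \dim \operatorname{im}(P|_U) = \dim\bigl(U\cap V^\perp\bigr) + \dim \operatorname{im}(P|_U).
\]
Since $\operatorname{im}(P|_U)$ is a subspace of $V$, we have $\dim \operatorname{im}(P|_U)\le \dim V$. Combining this with the displayed equality yields
\[
\dim\bigl(U\cap V^\perp\bigr) = \dim U - \dim \operatorname{im}(P|_U) \ge \dim U - \dim V > 0,
\]
where the last strict inequality is exactly the hypothesis $\dim U>\dim V$. This proves $\dim(V^\perp\cap U)>0$, as claimed.

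There is essentially no obstacle here; the statement is elementary and the argument above is complete up to the two standard facts invoked (rank--nullity and $\ker P=V^\perp$ for the orthogonal projection $P$ onto $V$). If one prefers to avoid introducing the projection explicitly, an equivalent route is to invoke the Grassmann dimension formula $\dim(U\cap W)\ge \dim U+\dim W-\dim H$ with $W=V^\perp$ and $\dim W=\dim H-\dim V$, which gives the same bound $\dim(U\cap V^\perp)\ge \dim U-\dim V>0$. I would present the projection argument as the primary proof since it is self-contained and matches the style of the preceding propositions in the appendix.
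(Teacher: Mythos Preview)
Your proof is correct. The paper does not actually prove this proposition; it introduces it as ``a well known dimension argument from linear algebra which we will state for completeness reasons'' and gives no argument. Your rank--nullity approach via the restriction $P|_U$ of the orthogonal projection onto $V$ is exactly the kind of standard dimension count the authors are alluding to, and the alternative Grassmann formula you mention would serve equally well.
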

The following result can be found in \cite{Simon-87} and \cite{LenzMV-08}. It is a useful tool in the proof of the Corollary \ref{cor_disc}, where the points of discontinuity of the IDS are characterised.

\begin{Proposition}\label{prop4}
 Let $A$ be a selfadjoint operator on a finite-dimensional Hilbert space $V$. Let $\lambda\in \mathbb R$ and $\epsilon>0$ be given and denote by $U$ the subspace of $V$ spanned by the eigenvectors of $A$ belonging to the eigenvalues in the open interval $(\lambda-\epsilon, \lambda+\epsilon)$. If there exist $k$ pairwise orthogonal and normalised vectors $u_1,\dots,u_k\in V$ such that $(A-\lambda)u_j$, $j=1,\dots,k$ are pairwise orthogonal and satisfy $\Vert (A-\lambda)u_j \Vert<\epsilon$, then $\dim(U)\geq k$.
\end{Proposition}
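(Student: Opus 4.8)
The plan is to argue by contradiction, combining the spectral decomposition of $A$ with the elementary dimension count of Proposition~\ref{prop_help}. Two one-line observations do most of the work. First, since $A$ is selfadjoint on the finite-dimensional space $V$, it has an orthonormal eigenbasis; the orthogonal complement $U^\perp$ of $U$ in $V$ is then spanned precisely by those eigenvectors of $A$ whose eigenvalue $\mu$ satisfies $|\mu-\lambda|\ge\epsilon$. Expanding a vector $v\in U^\perp$ in such eigenvectors and using $\|(A-\lambda)v\|^2=\sum|d_\alpha|^2|\mu_\alpha-\lambda|^2$ immediately gives the lower bound $\|(A-\lambda)v\|\ge\epsilon\,\|v\|$ for all $v\in U^\perp$. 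Second, the space $W:=\mathrm{span}\{u_1,\dots,u_k\}$ has dimension exactly $k$, because the $u_j$ are orthonormal.

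Now suppose, for contradiction, that $\dim U<k=\dim W$. Applying Proposition~\ref{prop_help} with the subspace called "$U$'' there taken to be $W$ and the subspace called "$V$'' there taken to be $U$, we obtain a nonzero vector in $W\cap U^\perp$; after normalising, call it $u=\sum_{j=1}^k c_j u_j$, so that $\sum_{j=1}^k|c_j|^2=1$. I would then estimate $\|(A-\lambda)u\|$ in two ways. Since $u\in U^\perp$, the lower bound above yields $\|(A-\lambda)u\|\ge\epsilon$. On the other hand, $(A-\lambda)u=\sum_{j=1}^k c_j (A-\lambda)u_j$, and because the vectors $(A-\lambda)u_j$ are pairwise orthogonal, the Pythagorean identity gives
\[
\|(A-\lambda)u\|^2=\sum_{j=1}^k|c_j|^2\,\|(A-\lambda)u_j\|^2
\le \Big(\max_{j}\|(A-\lambda)u_j\|^2\Big)\sum_{j=1}^k|c_j|^2<\epsilon^2,
\]
the strict inequality being exactly the hypothesis $\|(A-\lambda)u_j\|<\epsilon$. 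This contradicts $\|(A-\lambda)u\|\ge\epsilon$, so in fact $\dim U\ge k$.

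I do not expect any genuine obstacle here; the argument is short and self-contained once the spectral theorem and Proposition~\ref{prop_help} are invoked. The only points deserving a moment's care are the identification of $U^\perp$ as the span of the eigenvectors with eigenvalue at distance at least $\epsilon$ from $\lambda$ (so that the lower bound $\|(A-\lambda)v\|\ge\epsilon\|v\|$ is genuinely available on all of $U^\perp$), and keeping the final inequality strict, which is why one passes through $\max_j\|(A-\lambda)u_j\|$ rather than bounding each term by $\epsilon^2$ directly.
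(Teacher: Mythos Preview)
Your proof is correct and follows essentially the same route as the paper's: assume $\dim U<k$, use Proposition~\ref{prop_help} to find a unit vector in $\mathrm{span}\{u_1,\dots,u_k\}\cap U^\perp$, and derive a contradiction by bounding $\|(A-\lambda)u\|$ from below via the spectral description of $U^\perp$ and from above via Pythagoras on the orthogonal family $(A-\lambda)u_j$. If anything, your write-up is more explicit than the paper's, which compresses the two key inequalities into single sentences.
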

\begin{proof}
We assume $\dim (U)<k$. Let $S$ be the linear span of $u_1,\dots, u_k$. By Proposition \ref{prop_help} there exists an unit element $s\in S$, which is orthogonal to $U$, e.g $s\in U^\perp$. Hence, $s$ is a combination of elements $\overline u_k$ with $\Vert (A-\lambda)\overline u_k\Vert\geq \epsilon$. This gives us $\Vert (A-\lambda)s\Vert\geq \epsilon$. On the other hand we know that $s\in S$ is an unit element combined by elements $u_j$ with $\Vert(A-\lambda)u_j \Vert<\epsilon$, $j=1,\dots ,k$, which implies $\Vert(A-\lambda)s \Vert<\epsilon$.
\end{proof}

\subsection*{Acknowledgement} 
We thank M.~Keller and F.~Pogorzelski who pointed out an error in an earlier version of this paper.

\newcommand{\etalchar}[1]{$^{#1}$}
\def\cprime{$'$}\def\polhk#1{\setbox0=\hbox{#1}{\ooalign{\hidewidth
\lower1.5ex\hbox{`}\hidewidth\crcr\unhbox0}}}
%

\end{document}